\documentclass{article}
\usepackage[a4paper, total={6.5in, 10in}]{geometry}
\usepackage{tikz}
\usepackage{xypic}
\usetikzlibrary{shapes}
\usetikzlibrary{arrows.meta}
    \usepackage{graphicx}
     \usepackage{wrapfig}
\usepackage{amsmath,amssymb,amsthm}

\usepackage{mathrsfs}
\usepackage{bbm}
\usepackage{comment}
\usepackage{url}

\newcommand{\csp}{\textrm{CSP}}
\newcommand{\Pol}{\textrm{Pol}}
\newcommand{\Aut}{\textrm{Aut}}
\newcommand{\Age}{\textrm{Age}}

\newcommand{\structA}{\mathbb{A}}
\newcommand{\structB}{\mathbb{B}}
\newcommand{\structC}{\mathbb{C}}
\newcommand{\structD}{\mathbb{D}}

\newcommand{\subsetA}{\mathcal{A}}
\newcommand{\subsetB}{\mathcal{B}}
\newcommand{\subsetC}{\mathcal{C}}
\newcommand{\subsetD}{\mathcal{D}}

\newcommand{\orbitA}{\mathbf{A}}
\newcommand{\orbitB}{\mathbf{B}}
\newcommand{\orbitC}{\mathbf{C}}
\newcommand{\orbitD}{\mathbf{D}}
\newcommand{\orbitE}{\mathbf{E}}

\newcommand{\orbitM}{\mathbf{M}}
\newcommand{\orbitN}{\mathbf{N}}
\newcommand{\orbitO}{\mathbf{O}}
\newcommand{\orbitP}{\mathbf{P}}

\newcommand{\constraint}{\mathbbm{C}}

\newcommand{\forbA}{\mathcal{F}_{\structA}}
\newcommand{\maxboundA}{\mathbb{L}_{\structA}}

\newcommand{\instance}{\mathcal{I}}
\newcommand{\instanceJ}{\mathcal{J}}
\newcommand{\Var}{\mathcal{V}}

\newcommand{\bipartite}{\mathfrak{B}}
\newcommand{\vertices}{\mathbf{V}}

\newcommand{\graphinstance}{\mathcal{G}_{\mathcal{I}}}
\newcommand{\maximal}{\mathcal{M}}

\newcommand{\graphinstancemaximal}{\mathcal{G}_{\instance}^{\maximal}}
\newcommand{\componentmin}{\mathcal{N}_{\textrm{min}}}

\newcommand\restr[2]{{
  \left.\kern-\nulldelimiterspace 
  #1 
  \vphantom{\big|} 
  \right|_{#2} 
  }}

\newtheorem{theorem}{Theorem}
\newtheorem{conjecture}{Conjecture}
\newtheorem{definition}{Definition}
\newtheorem{lemma}{Lemma}
\newtheorem{proposition}{Proposition}
\newtheorem{observation}{Observation}
\newtheorem{claim}{Claim}
\newtheorem{corollary}{Corollary}
\newtheorem{example}{Example}

\begin{document}

\title{ Quasi Directed   J\'{o}nsson Operations Imply Bounded Width 
\\ 
(For fo-expansions of symmetric  binary cores with free amalgamation)}


\author{Micha\l\ Wrona\footnote{This author is partially supported by National Science Centre, Poland grant number 2020/37/B/ST6/01179.}\\
Jagiellonian University\\ 
Krak\'{o}w\\
Poland\\
\url{michal.wrona@uj.edu.pl} 
}







\maketitle
\begin{abstract}
	Every $\csp(\structB)$ for a finite structure $\structB$ is either in P or it is NP-complete but the proofs of the finite-domain CSP dichotomy by Andrei Bulatov and Dimitryi Zhuk not only show the computational complexity separation but also confirm the algebraic tractability conjecture stating that tractability origins from a certain system of operations preserving B. The establishment of the dichotomy was in fact preceded by a number of similar results for stronger conditions of this type, i.e. for system of operations covering not necessarily all tractable finite-domain CSPs.

A similar, infinite-domain algebraic tractability conjecture is known for first-order reducts of countably infinite finitely bounded homogeneous structures and is currently wide open. In particular, with an exception of a quasi near-unanimity operation there are no known systems of operations implying tractability in this regime. This paper changes the state-of-the-art and provides a proof that a chain of quasi directed J\'{o}nsson operations imply tractability and bounded width for a large and natural class of infinite structures.  
\end{abstract}

\section{Introduction}

\label{sect:intro}
Constraint Satisfaction Problems form a large class of computational problems
whose complexity has been studied separately as well as within the formalism $\csp(\structB)$ where $\structB$ is a relational structure over domain $A$.
An instance $\instance$ of $\csp(\structB)$  consists of a set of constraints $\constraint := ((x_1, \ldots, x_k),R)$ 
formed out of a tuple of variables and a $k$-ary relation $R$ in $\structB$. The question is whether there is a solution to $\instance$, i.e., an assignment of elements from $A$ to the variables such that for all
$\constraint$ we have $f((x_1, \ldots, x_k)) \in R$. 

The formalism $\csp(\structB)$ for a finite $\structB$ captures a number of natural problems. Indeed, if $\structB$ is a graph $\mathcal{H}$, then we cope with an $\mathcal{H}$-coloring problem~\cite{HellNesetril}, in  particular with a $k$-colouring problem. If the domain of $\structB$ 
has only two elements, then $\csp(\structB)$ is a Boolean satisfiability problem such as $k$-SAT or NAE-SAT~\cite{Schaefer78}. There are also structures $\structB$ whose $\csp(\structB)$ comes down to solving a system of equations over a finite field~\cite{IdziakMMVW10}.
The variety of natural NP-complete and polynomially solvable problems 
within this formalism raises a number of natural questions including:
\begin{enumerate}
\item Is there a single simple technique of showing NP-completness for $\csp(\structB)$?
\item Is there a single algorithm that solves all polynomially tractable $\csp(\structB)$?
\item Are all $\csp(\structB)$ either NP-complete or polynomially tractable?  
\end{enumerate}

The conjecture that the answer to the last question is affirmative was known as the Feder-Vardi Conjecture~\cite{FederV98}. However, there was no reasonable progress until the questions were reformulated in the language of universal algebra~\cite{BulatovJK05}. The immediate consequence of that step was the conditional answer to the first question and a clear algebraic conjecture concerning the second question known as the \emph{algebraic tractability conjecture} saying that a $\csp(\structB)$ is in PTIME if and only if the algebra corresponding to $\structB$ lies in a Taylor variety. Actually, this conjecture has a number of equivalent formulations and the one of which we care the most in this paper says that $\csp(\structB)$ is in PTIME iff $\structB$ is preserved by a Siggers operation~\cite{Siggers}. 
The algebraic tractability conjecture has been confirmed independently by
Bulatov~\cite{Bulatov17} and Zhuk~\cite{Zhuk20}, and hence all the three questions for finite structures $\structB$
have been answered.

Although the Bulatov-Zhuk theorem is a great achievement,
there are still some natural computational problems that can be expressed as $\csp(\structB)$ but only when $\structB$ is infinite. A perfect example is DIGRAPH-ACYCLICITY equivalent to $\csp(\mathbb{Q}; <)$ where $(\mathbb{Q}; <)$ 
is the natural linear order over the rational numbers. That structure as well as  all 
structures with a first-order definition in (first-order reducts of) $(\mathbb{Q}; <)$ a.k.a. \emph{temporal languages} give rise to constraint satisfaction problems of interest in a field of artificial intelligence known as spatial and temporal reasoning, see~\cite{bodirsky_2021} for more examples of natural computational problems expressible as infinite-domain CSPs.  Furthermore, the structure  $(\mathbb{Q}; <)$ is a natural representative of a class of (countably infinite) finitely bounded homogeneous structures.

A $\tau$-structure $\structA$ is \emph{homogeneous} if every local isomorphism between finite substructures of $\structA$ may be extended to an automorphism of $\structA$. It is \emph{finitely bounded} if there exists a finite set of finite $\tau$-structures
$\forbA$ such that a finite structure $\structD$ embeds into $\structA$ if and only if there is no $\structC \in \forbA$
that embeds into $\structD$.
The following conjecture resembling the finite algebraic tractability conjecture has been formulated for cores of finite-signature first-order reducts of finitely-bounded homogeneous structures~\cite{BartoP16}. 
(All definitions that are omitted in the introductionn are provided carefully in preliminaries.)

\begin{conjecture}(\textbf{Infinite-domain Algebraic Tractability Conjecture}) 
\label{conj:inftract}
Let   $\structB$ be the core of a  finite-signature first-order reduct of a finitely bounded homogeneous structure.
If $\structB$ is preserved by a pseudo-Siggers operation, i.e., a 6-ary operation $s$ and some unary operations $\alpha, \beta$
such that
$$\alpha s(x, y, x, z, y, z) \approx \beta s(y, x, z, x, z, y)$$
for all $x,y,z$ in domain,  then $\csp(\structA)$ is solvable in polynomial time.
\end{conjecture}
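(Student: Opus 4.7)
The plan is to attack the conjecture by a three-stage strategy that combines structural Ramsey theory, a transfer of the Bulatov-Zhuk finite-domain dichotomy to a level-by-level orbit analysis, and a new infinite-domain absorption theory robust enough to handle the pseudo-Siggers hypothesis alone.

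Stage one is \emph{canonization}. Since $\structB$ is a first-order reduct of a finitely bounded homogeneous $\tau$-structure $\structA$, results of Hubi\v{c}ka-Ne\v{s}et\v{r}il ensure that $\Age(\structA)$ has a finitely bounded, linearly ordered Ramsey expansion $\structA^*$. Applying the Bodirsky-Pinsker canonization lemma to the pseudo-Siggers polymorphism $s$, I would extract an operation $s^*$ that preserves $\structB$ and is canonical with respect to $\structA^*$. This $s^*$ induces, on the finite set of $n$-orbits of $\structA^*$ for each $n$, a finite-domain operation that still satisfies the pseudo-Siggers identity modulo the orbit equivalence. By Bulatov-Zhuk applied to each such orbit algebra, $\Pol(\structB)$ then contains a compatible hierarchy of canonical Taylor terms indexed by $n$.

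Stage two is an \emph{orbit decomposition}. Using the canonical Taylor behaviour produced in stage one, I would carry over Zhuk's four-case subuniverse analysis — absorbing, central, linear, and PC — to the orbit algebras at each level. Each case admits an infinite-domain algorithmic primitive: absorbing and central subuniverses propagate via standard local consistency on orbits; linear subuniverses reduce to a finite system of equations over a quotient group read off from the orbit structure; PC subuniverses recurse into strictly smaller orbit algebras. Correctness of each reduction is verified by showing that the canonical Taylor polymorphism lifts orbit-level witnesses back to genuine solutions of the original $\csp(\structB)$ instance $\instance$.

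Stage three, and the main obstacle, is \emph{uniform termination with pseudo-Siggers as the sole hypothesis}. In the finite case, Zhuk's recursion terminates in depth bounded by the domain size; over infinite $\structA$, the orbit algebras grow with the arity of relations involved, and there is no a priori bound on recursion depth. The heart of the proof would be a new theorem asserting that, for fixed $\structB$, the depth of the orbit recursion needed to solve any finite instance $\instance$ is polynomially bounded in $|\instance|$, and that the Ramsey-theoretic canonization respects this bound. The hard part will be that the pseudo-Siggers identity, unlike richer Maltsev conditions such as quasi near-unanimity or a chain of quasi directed J\'onsson operations, does not obviously transfer through the canonization step with enough strength to sustain this termination argument; bridging this gap is exactly the point at which all existing approaches — including the restricted results of this paper — have stopped short of the full conjecture.
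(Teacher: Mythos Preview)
The statement you are attempting to prove is Conjecture~\ref{conj:inftract}, the Infinite-domain Algebraic Tractability Conjecture. The paper does \emph{not} prove this conjecture; it is stated as an open problem, and the paper's contribution is the much more restricted Theorem~\ref{thm:main}, which establishes bounded width under the far stronger hypothesis of a chain of quasi directed J\'{o}nsson operations, and only for first-order expansions of finitely bounded homogeneous symmetric binary cores with free amalgamation. So there is no ``paper's own proof'' to compare against.

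Your proposal is not a proof but a research programme, and you say so yourself in the final paragraph: the entire argument hinges on a ``new theorem asserting that \ldots\ the depth of the orbit recursion \ldots\ is polynomially bounded,'' and you concede that ``bridging this gap is exactly the point at which all existing approaches \ldots\ have stopped short of the full conjecture.'' That is the gap, and it is fatal. Stages one and two are reasonable sketches of known machinery (canonization via Ramsey expansions, lifting Zhuk's subuniverse taxonomy to orbit algebras), but neither stage is known to go through in the generality you need: canonization of a pseudo-Siggers polymorphism need not yield a polymorphism whose action on orbits is Taylor (this is precisely the obstruction noted in the paper's Section~\ref{sect:canonpolim} and in the cited work on hypergraphs), and without that, stage two has no input. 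Stage three then asks for exactly the missing ingredient that would resolve the conjecture. What you have written is an honest outline of why the conjecture is hard, not a proof of it.
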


The above conjecture has been confirmed in a number of special cases, in particular for  temporal languages~\cite{BodirskyK10} or
first-order reducts of homogeneous graphs~\cite{BodirskyP15,BodirskyMPP19}.
It is also known that if there is no pseudo-Siggers operation preserving $\structB$, then $\csp(\structB)$ is NP-complete~\cite{bodirsky_pinsker_pongrácz_2021}.
A simple strategy to attack the above conjecture in full generality would be to provide a polynomial-time algorithm for an infinite-domain $\csp(\structB)$  
which works only under the assumption that $\structB$ is preserved by a pseudo-Siggers operation. 
The reality is, however, that we do not have such algorithms even for operations satisfying much stronger algebraic conditions of that type, called \emph{pseudo minor conditions} or \emph{quasi minor conditions}. The situation is very different for finite-domain
CSPs where a number of algebraic conditions implying tractability has been identified prior to the dichotomy proofs, see~\cite{IdziakMMVW10} and~\cite{BartoK14} for the two most important results of this kind. 
The notable exception in an infinite-domain CSP regime is a quasi near-unanimity operation $f$ satisfying:
$$f(y, x, \ldots, x) \approx \cdots \approx f(x, \ldots,x, y) \approx f(x,\ldots, x)$$
for all $x,y$ in domain, but even in this case the proof~\cite{BodirskyD13}
is a straightforward adaptation of the one for finite structures~\cite{FederV98}.

 The next natural algebraic condition to consider for infinite structures are  chains of quasi directed J\'{o}nsson
operations.

\begin{definition}
\label{def:quasi_directed_Jonsson}
A sequence $(D_1, \ldots , D_n)$ of ternary operations on a set $A$ is
called a \emph{chain of quasi directed J\'{o}nsson operations} if for all $x, y, z \in A$ all of the following hold:

\begin{eqnarray}
\label{eq:D1} D_1(x,x,y) &=& D_1(x,x,x), \\
\label{eq:Di} D_i(x,y,x) &=& D_i(x,x,x) \hspace{40 pt }\textrm{ for all } i \in [n],\\
\label{eq:Dii+1} D_i(x,y,y) &=& D_{i+1}(x,x,y) \hspace{30 pt} \textrm{ for all } i \in [n-1],\\
\label{eq:Dn} D_n(x,y,y) &=& D_n(y,y,y).
\end{eqnarray} 
\end{definition}

Every relational structure preserved by a quasi near-unanimity operation is also preserved by a chain of quasi directed J\'{o}nsson operations, see~\cite{bodirsky_2021} for a simple proof.
Another, even weaker system of operations worth considering is  a chain of \emph{quasi J\'{o}nsson operations}. 

A finite-signature finite structure is preserved by a 
\begin{itemize}
    \item near-unanimity operation iff 
    \item by a chain of directed J\'{o}nsson operations~\cite{Kazda2018} iff
    \item  by a chain of J\'{o}nsson operations\cite{barto_2013}. 
\end{itemize} 
One should however keep in mind that a direct proof  of tractability for J\'{o}nsson operations~\cite{BartoK09} came prior to the equivalence of the three-conditions.

The main result of this paper, explained in details in the following subsection, states that every first-order expansion $\structB$ of a finitely-bounded homogeneous symmetric binary core (all relations are binary and symmetric)
whose age has free amalgamation and which is preserved by a chain of directed quasi  J\'{o}nsson operations has bounded width, and in consequence $\csp(\structB)$ may be solved in polynomial time by establishing local consistency. 

Examples of symmetric binary structures are the  countably infinite homogeneous universal graph
a.k.a. the random graph $\mathbb{G}$ or the countably infinite 
homogeneous universal graph omitting cliques of size $k$, for any $k \geq 3$, known as the $k$-th Henson graph $\mathbb{H}_k$. The complexity classifications of CSPs for first-order reducts of these structures have been obtained  in~\cite{BodirskyP15} and respectively in~\cite{BodirskyMPP19}.
The authors of the latter paper actually suggest that a reasonable intermediate step towards confirming Conjecture~\ref{conj:inftract}  might be to prove it for first-order reducts of structures whose  age has \emph{free amalgamation}. Here we make a major step towards completing that goal, which is much more challenging, even for symmetric binary cores, than the classification for homogeneous graphs presented in that paper. Homogeneous graphs have been classified in~\cite{LachlanWoodrow} --- there are only few types of these. There are no such results for symmetric binary cores whose age has free amalgamation except for some partial results over multi-graphs with three kinds of edges~\cite{cherlin_2022_2}.

\subsection{The Main Result}
\label{subsect:mainresult}
In this section we present the main result of this paper. 

\subsubsection{Fra\"{i}ss\'{e} Amalgamation and Free Amalgamation}

Let $\structB_1, \structB_2$ be two structures over 
the same signature $\tau$ such that all the common elements are the elements of $\structA$. We say that $\structC$ is an \emph{amalgam} of $\structB_1$ and $\structB_2$ over $\structA$
if for
$i = 1, 2$ there are embeddings $f_i: \structB_i
\rightarrow \structC$  such that $f_1(a) = f_2(a)$ for every $a \in \structA$.
An isomorphism-closed class $\mathscr{C}$ of relational $\tau$-structures
 has the \emph{amalgamation property} if 
$\mathscr{C}$ is closed under taking amalgams. 
A class of finite $\tau$-structures that
contains at most countably many non-isomorphic structures, has the amalgamation
property, and is closed under taking induced substructures and isomorphisms is called
an \emph{amalgamation class}.
It is known that there is a deep connection between amalgamation classes and homogeneous structures. The age of a homogeneous $\tau$-structure $\structA$ is the class of all finite $\tau$-structures that embed into $\structA$. First of all, it is known that the age of a homogeneous structure is an amalgamation class. But what is more surprising is that out of any amalgamation class one can construct a unique, up to isomorphism, homogeneous structure.

\begin{theorem} (Fra\"{i}ss\'{e})
Let $\tau$ be a countable relational signature
and let $\mathscr{C}$ be an amalgamation class of $\tau$-structures. Then there is a homogeneous and
at most countable $\tau$-structure $\structA$ whose age equals $\mathscr{C}$. The structure $\structA$ is unique up to
isomorphism, and called the Fra\"{i}ss\'{e}-limit of $\mathscr{C}$.
\end{theorem}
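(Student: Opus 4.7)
The plan is to prove existence and uniqueness separately, both via back-and-forth style arguments that exploit the amalgamation property of $\mathscr{C}$.

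For existence, I would construct $\structA$ as the union of an ascending chain $\structA_0 \hookrightarrow \structA_1 \hookrightarrow \cdots$ of members of $\mathscr{C}$. First, enumerate all isomorphism types in $\mathscr{C}$ as $\structC_0, \structC_1, \ldots$ (possible since $\mathscr{C}$ contains at most countably many structures up to isomorphism), and interleave with an enumeration of all triples $(\structB, \structD, \iota)$ where $\structB$ is a finite substructure of some $\structA_n$ already built, $\structD \in \mathscr{C}$, and $\iota: \structB \hookrightarrow \structD$ (again countable, since each $\structA_n$ is finite and $\mathscr{C}$ is countable). At stage $n$, dovetail two types of tasks: (a) enlarge $\structA_n$ so that $\structC_n$ embeds into the next stage, by amalgamating $\structA_n$ and $\structC_n$ over the empty structure; (b) handle the next triple $(\structB, \structD, \iota)$ by amalgamating $\structA_n$ and $\structD$ over $\structB$ along $\iota$ and the inclusion $\structB \hookrightarrow \structA_n$. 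The amalgamation property guarantees $\structA_{n+1} \in \mathscr{C}$ at each step. Setting $\structA := \bigcup_n \structA_n$, (a) forces $\Age(\structA) = \mathscr{C}$, while (b) gives the extension property: every embedding of a finite substructure of $\structA$ into a structure $\structD \in \mathscr{C}$ extends to an embedding of $\structD$ into $\structA$.

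Homogeneity then follows from the extension property by a standard back-and-forth: given a finite isomorphism $f$ between finite substructures of $\structA$, enumerate the domain of $\structA$ and alternately extend $f$ to include the next element on the domain side (using the extension property to realize the enlarged configuration on the image side) and on the image side (symmetrically). The union of the chain of partial isomorphisms is an automorphism extending $f$. For uniqueness, suppose $\structA, \structA'$ are countable homogeneous $\tau$-structures with $\Age(\structA) = \Age(\structA') = \mathscr{C}$. I would build an isomorphism $\structA \to \structA'$ by the same back-and-forth, this time justifying each extension as follows: if $f$ is the current finite partial isomorphism and $a$ is the next element to absorb, the finite structure induced on $\mathrm{dom}(f) \cup \{a\}$ lies in $\mathscr{C}$ and hence embeds into $\structA'$; by homogeneity of $\structA'$ that embedding can be pre-composed with an automorphism of $\structA'$ so as to agree with $f$ on $\mathrm{dom}(f)$, yielding the required extension.

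The main obstacle is conceptual bookkeeping rather than mathematical depth: arranging the interleaved enumeration in (a) and (b) so that every $\structC \in \mathscr{C}$ is realized and every finite configuration in the eventual $\structA$ is eventually considered, despite the fact that the finite substructures of $\structA$ keep growing as the construction proceeds. This is resolved by a standard diagonal schedule ensuring that the triple $(\structB, \structD, \iota)$ produced at stage $n$ ranges over all $(\structB, \structD, \iota)$ with $\structB \subseteq \structA_m$ for $m \leq n$. Beyond this, nothing is used except the countability of $\mathscr{C}$, closure under substructures, and amalgamation, so the argument goes through cleanly.
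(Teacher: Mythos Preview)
Your proposal is the standard and correct proof of Fra\"{i}ss\'{e}'s theorem: build the limit as a union of a chain of members of $\mathscr{C}$ by dovetailing realization tasks and extension tasks, derive the extension property, and obtain both homogeneity and uniqueness by back-and-forth. There is nothing to fix.

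As for comparison with the paper: the paper does not prove this theorem at all. It is quoted as a classical background result (attributed to Fra\"{i}ss\'{e}) in the introductory material, without proof or even a sketch. So there is no ``paper's own proof'' to compare against; your write-up is simply the textbook argument one would find in, e.g., Hodges or the book~\cite{bodirsky_2021} that the paper cites elsewhere.
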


For example, we have that $(\mathbb{Q}; <)$ is the Fra\"{i}ss\'{e}-limit
of the class of all finite linear orders, $\mathbb{G}$ is the 
Fra\"{i}ss\'{e}-limit
of the class of all finite graphs and that $\mathbb{H}_k$ is the 
Fra\"{i}ss\'{e}-limit
of the class of all finite graphs omitting $k$-cliques.
In fact, in the two last cases as an amalgam of the graphs $G_1$ and $G_2$ one can take the union  $G_1 \cup G_2$ of $G_1$ and $G_2$. We amalgamate without imposing new edges. Observe that it is in general not the case for linear orders when one usually have to add additional arcs to impose transitivity.

In general, we say that $\structB_1 \cup \structB_2$ is a \emph{free amalgam} of $\structB_1$ and $\structB_2$ over $\structB_1 \cap \structB_2$. A class of structures $\mathscr{C}$ for which taking free amalgams suffices is the class with the free amalgamation property. For the sake of simplicity we will simply say say that the class has free amalgamation. 

 A more general recipe of obtaining a finitely bounded homogeneous structure over a signature $\tau= \{ \orbitA, \orbitB \}$  whose age has  free-amalgamation  is given in~\cite{cherlin_2022_2}. First take any finite set of finite $2$-graphs $\forbA$ with 
 $\orbitA$-edges and $\orbitB$-edges: every two different vertices are connected by an $\orbitA$-edge or a $\orbitB$-edge. The class of finite structures $\mathscr{C}_{\structA}$ over $\tau$ omitting the $2$-graphs in $\forbA$ has free amalgamation and the Fra\"{i}ss\'{e} limit $\structA$ of $\mathscr{C}_{\structA}$ is the desired $\tau$-structures. 

Although, it is not the case that the age of all homogeneous structures has free amalgamation, it is a natural restriction for structures considered in Conjecture~\ref{conj:inftract}.




\subsection{Formulation of the Main Result}
\label{sect:mainresultform}

We will prove the tractability of $\csp(\structB)$ for templates under consideration by employing an algorithm establishing $(k,l)$-minimality for $(k \leq l)$. This algorithm is a slight variation of the better known procedure for enforcing $(k,l)$-consistency but better suited for measuring the level of consistency needed to solve a problem~\cite{Barto16}.
Roughly speaking, it propagates the local information about $k$ variables in context of $l$ variables through the structure of a CSP instance until a fixed-point is reached.
Sometimes it only narrows the search space, but in the case of $2$-coloring, $2$-SAT, Horn-SAT or the already mentioned DIGRAPH ACYCLICITY, the procedure simply decides whether a given instance of the problem has a solution. 

We say that a structure $\structB$ has relational width $(k,l)$ if  establishing $(k,l)$-minimality on an instance of  $\csp(\structB)$ decides if it has a solution. A structure $\structB$ has bounded (relational) width
if it has relational width $(k,l)$ for some $(k \leq l)$. 

\smallskip
\noindent
Here comes the main result of this paper.

\begin{theorem}
\label{thm:main}
Let $\structA$ be a finitely bounded homogeneous symmetric binary core whose age has free amalgamation and  $\structB$  a first-order expansion of $\structA$ preserved by a chain of  quasi directed J\'{o}nsson operations. Then $\csp(\structB)$ has  relational width $(2, \maxboundA)$ where $\maxboundA$ is the size of the largest forbidden structure in $\forbA$ but  not smaller than $3$. In particular, $\csp(\structB)$ is in PTIME.
\end{theorem}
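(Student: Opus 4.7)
The plan is to show that any $(2, \maxboundA)$-minimal instance $\instance$ of $\csp(\structB)$ with no empty constraint admits a solution, thereby yielding both tractability and the announced relational width bound. I would proceed in three phases: a reduction to the orbit structure of $\structA$, a canonicity upgrade of the J\'{o}nsson chain, and an incremental ``absorption walk'' construction of the solution.

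First I would analyse what $(2, \maxboundA)$-minimality imposes. Since $\structA$ is a symmetric binary core, the $\Aut(\structA)$-orbits of pairs are in bijection with the binary relations of $\structA$, and a routine argument shows that in a minimal instance the projection of each constraint onto any pair of variables is a union of pair-orbits. Because $\maxboundA$ is at least as large as every forbidden substructure in $\forbA$, free amalgamation supplies the crucial \emph{local realisability}: any collection of pair-orbit labels on at most $\maxboundA$ variables that does not embed a structure from $\forbA$ is witnessed by actual elements of $\structA$. Thus the instance is encoded, up to equivalence, by a graph of pair-orbit labels on the variables, with realisability guaranteed on every small subset.

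Next I would promote the chain $D_1,\ldots,D_n$ to a \emph{canonical} chain with respect to $\Aut(\structA)$, exploiting the Ramsey property enjoyed by finitely bounded homogeneous symmetric binary cores whose age has free amalgamation. Canonicity means that each $D_i$ induces a well-defined operation on the finite set of pair-orbits, so the quasi-identities (\ref{eq:D1})--(\ref{eq:Dn}) become ordinary identities between operations on pair-orbits. This is the step that turns the infinite-domain problem into a finite combinatorial one about the action of the canonical chain on orbit labels.

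Finally I would construct a solution inductively. Enumerating $x_1,\ldots,x_m$, at step $i+1$ minimality gives, for every $j\leq i$, a set of candidate pair-orbit labels between $x_j$ and $x_{i+1}$, and local realisability produces some element of $\structA$ witnessing them all at once. The task is to route an arbitrary starting candidate into the pair-orbit actually demanded by $\instance$ on every pair $(x_j,x_{i+1})$ simultaneously. I would do this by an absorption walk along the canonical chain: (\ref{eq:D1}) and (\ref{eq:Dn}) anchor the two ends, (\ref{eq:Di}) lets $D_i$ absorb unwanted orbit-labels into a chosen anchor argument, and (\ref{eq:Dii+1}) propagates the absorption from $D_i$ to $D_{i+1}$, mirroring the finite-domain proof that directed J\'{o}nsson terms imply bounded width for SD$(\wedge)$ algebras, but carried out on pair-orbits instead of domain elements. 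The main obstacle, and presumably the technical heart of the paper, is showing that this walk drives \emph{all} of the pair-orbits between $x_{i+1}$ and the previously placed variables into the correct target orbit \emph{at once}: coordinating free amalgamation (which provides slack in $\structA$), $(2,\maxboundA)$-minimality (which supplies local compatibility), and the chain identities (which provide the algebraic absorption) across many binary constraints simultaneously is the delicate point, and care is needed because the quasi-identities hold only modulo $\Aut(\structA)$, so the absorbed value must be tracked to the right orbit rather than merely to one that is locally indistinguishable.
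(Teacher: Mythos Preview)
Your approach diverges fundamentally from the paper's, and the canonization step contains a genuine gap.

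The paper never constructs a solution using the J\'{o}nsson chain and never invokes Ramsey theory or canonical polymorphisms. Instead it isolates a purely relational condition on the clone of $\structB$, called \emph{implicational uniformity}: for every pair of complementary implications $R_1:\subsetA\to\subsetB$ and $R_2:\subsetB\to\subsetA$ pp-definable in $\structB$, one has $\subsetA=\subsetB$. The argument then splits cleanly. If the clone is implicationally uniform, a short instance-shrinking argument (Theorem~\ref{thm:impuniform}) shows that every non-trivial $(2,\maxboundA)$-minimal instance has a solution; this half uses only free amalgamation and minimality, not the J\'{o}nsson operations at all. If the clone is not implicationally uniform, the paper shows (Theorems~\ref{thm:nondegen} and~\ref{thm:degen}) that $\structB$ cannot be preserved by any chain of quasi directed J\'{o}nsson operations, by pp-defining specific relations on which every such chain provably fails. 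So the J\'{o}nsson hypothesis is used only \emph{negatively}, to exclude the bad relational configuration; the solution is then built by a descent on the instance graph $\graphinstance$, not by applying the $D_i$.

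The concrete gap in your plan is the promotion to a canonical chain. First, finitely bounded homogeneous symmetric binary cores with free amalgamation need not be Ramsey as they stand; one typically has to expand by a generic linear order, after which canonicity is with respect to $\Aut(\structA,<)$ rather than $\Aut(\structA)$, and the induced action on $\Aut(\structA)$-orbits of pairs is no longer the simple object you describe. Second, and more seriously, the standard canonization machinery hands you a canonical function from each $D_i$ \emph{individually}; there is no mechanism guaranteeing that the canonized $D_i'$ and $D_{i+1}'$ still satisfy the linking identity~(\ref{eq:Dii+1}). Preserving identities across an entire chain of operations under canonization is a known obstacle in the infinite-domain setting, and Section~\ref{sect:canonpolim} of the paper explicitly flags the limitations of the canonical-polymorphism route. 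Without this step your reduction to a finite algebra on pair-orbits with directed J\'{o}nsson terms collapses, and the absorption walk has nothing to walk along.
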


\subsection{Bounded (Strict) Width Collapses}
\label{sect:collapses}

A structure $\structB$ has strict width $k$ if every non-trivial $(k,k+1)$-minimal instance of $\csp(\structB)$ not only has a solution but also  every its partial solution may be extended to a total solution. It is known that an $\omega$-categorical structure (all structures in this paper are $\omega$-categorical, for a definition see~\cite{bodirsky_2021}) has 
strict width $k$ if and only if it is preserved by a $(k+1)$-ary quasi near-unanimity operation.
Thus, any first-order expansion $\structB$ of a finitely bounded homogeneous symmetric binary core  whose age has free amalgamation and which is preserved by a $(k+1)$-ary quasi near-unanimity polymorphism has relational width $(k, k+1)$. Theorem~\ref{thm:main} implies a stronger result.

\begin{corollary}
\label{cor:collapse}
Let $\structA$ be a finitely bounded homogeneous symmetric binary core  whose age has free amalgamation and  $\structB$  a first-order expansion of $\structA$ with bounded strict width. Then $\structB$ has  relational width $(2, \maxboundA)$ where $\maxboundA$ is the size of the largest forbidden structure in $\forbA$ but not smaller than $3$. 
\end{corollary}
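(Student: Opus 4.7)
The plan is to chain together three facts already assembled in the paper and reduce Corollary~\ref{cor:collapse} to Theorem~\ref{thm:main}. First I would unfold the hypothesis: saying that $\structB$ has bounded strict width means that there is some integer $k$ such that $\structB$ has strict width $k$. The paper records the characterization that, for $\omega$-categorical structures (and every structure considered here is $\omega$-categorical), having strict width $k$ is equivalent to being preserved by a $(k+1)$-ary quasi near-unanimity polymorphism $f$. So the hypothesis immediately yields such an $f$ in $\Pol(\structB)$.

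Next I would invoke the remark stated just after Definition~\ref{def:quasi_directed_Jonsson}: every relational structure preserved by a quasi near-unanimity operation is also preserved by a chain of quasi directed J\'{o}nsson operations $(D_1, \ldots, D_n)$. The proof referenced in~\cite{bodirsky_2021} is a direct term construction from the single operation $f$ and does not depend on the combinatorial structure of $\structA$ or the signature of $\structB$, so it applies verbatim. Concretely, one can define each $D_i$ as a suitable composition/identification of variables of $f$ so that the identities \eqref{eq:D1}--\eqref{eq:Dn} are all forced by the near-unanimity identities $f(y,x,\ldots,x) \approx \cdots \approx f(x,\ldots,x,y) \approx f(x,\ldots,x)$. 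Since each $D_i$ is then a term operation of $(A; f)$, it preserves every relation preserved by $f$, and in particular every relation of $\structB$.

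Finally, since $\structA$ is a finitely bounded homogeneous symmetric binary core whose age has free amalgamation and $\structB$ is a first-order expansion of $\structA$ preserved by a chain of quasi directed J\'{o}nsson operations, Theorem~\ref{thm:main} applies and delivers the desired relational width $(2, \maxboundA)$. I do not expect any genuine obstacle in this argument: the entire difficulty of the corollary is absorbed by Theorem~\ref{thm:main}, and the only thing one must be careful about is to quote the near-unanimity-to-directed-J\'{o}nsson reduction in its \emph{quasi} form, i.e.\ allowing the identities to hold only modulo automorphisms of $\structA$ if one wants to stay within pseudo minor conditions; for ordinary (non-pseudo) identities, as stated in Definition~\ref{def:quasi_directed_Jonsson}, the term-composition proof works without modification.
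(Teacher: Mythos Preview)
Your proposal is correct and follows essentially the same route as the paper: bounded strict width gives a quasi near-unanimity polymorphism, this yields a chain of quasi directed J\'{o}nsson operations, and then Theorem~\ref{thm:main} applies. The paper's own proof is just the two-line version of this, citing Theorem~\ref{thm:main} together with Proposition~6.9.10 in~\cite{bodirsky_2021} for the reduction step.
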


\begin{proof}
The corollary follows by Theorem~\ref{thm:main} and Proposition 6.9.10 in~\cite{bodirsky_2021}.
\end{proof}


A similar result has been shown for first-order reducts of homogeneous graphs in~\cite{Wrona20STACS} and first-order expansions of (not necessarily symmetric) binary cores which do not forbid any substructures of size 3, 4, 5, or 6 in~\cite{Wrona20LICS}. That kind of results imply that in the considered cases the level of consistency needed to solve $\csp(\structB)$ 
for a first-order reduct (expansion) of $\structA$ depends rather on $\structA$ than on $\structB$. These results are related to the original bounded width hierarchy collapse in~\cite{Barto16} which says that a finite structure with bounded  width has relational width $(1,1)$ or $(2,3)$. The already discussed bounded width collapses for infinite-domain CSP consider only strict width. In~\cite{MottetNPW21}, one can find a result similar to Corollary~\ref{cor:collapse} considering structures with bounded width but not necessarily bounded strict width.

\subsection{Black box approach and canonical polymorphisms}
\label{sect:canonpolim}

An approach of converging to an algorithm for a pseudo-Siggers operation  in Conjecture~\ref{conj:inftract} by finding polynomial-time algorithms for weaker and weaker algebraic conditions is not the only approach that can possibly lead to resolving the conjecture. Indeed, all the algorithmic results in the classification for the first-order reducts of $\mathbb{G}$ or 
$\mathbb{H}_k$ with $k \geq 3$ may be obtained by use of a black box reduction to finite CSP from~\cite{BodirskyM18}. See~\cite{MottetP22} for a neat proof of the latter. That black box reduction is obtained by use of canonical polymorphisms~\cite{BodPin-CanonicalFunctions} and in particular of a canonical pseudo-Siggers operation. The problem with this approach is that already getting from homogeneous graphs to homogeneous hypergraphs makes the use of the simple black box reduction impossible~\cite{Pinsker22,MottetPN23}.  

\subsection{Organisation of the Present Article}
\label{sect:org}

All the notions and definitions that have not been properly introduced in Section~\ref{sect:intro} are defined in Section~\ref{sect:prelim}. In Section~\ref{sect:ImpUniform} we 
treat $\structB$ whose relational clones are called \emph{implicationally uniform}. We show that such $\structB$  have relational width $(2, \maxboundA)$.
 Section~\ref{sect:ImpNonUniform} is dedicated to $\structB$ whose relational clones are implicationally non-uniform --- in that case $\structB$ is not preserved by any chains of quasi directed  J\'{o}nsson operations.
Armed with these results, we give a proof of Theorem~\ref{thm:main} in Section~\ref{sect:mainresult}.

\section{Preliminaries}
\label{sect:prelim}
We use $[n]$ for $\{ 1, \ldots, n\}$ and we write $t[i]$ as a reference to the $i$-th element of an $n$-ary tuple with $i \in [n]$. A structure is usually denote by $\structA, \structB, \structC$ etc. and their corresponding domains by $A,B,C$ etc. 

\subsection{Structures, Relations and Formulas}
\label{sect:structrelform}
All structures $\structA$ considered in this paper are \emph{$\omega$-categorical}, i.e., all countable models of the theory of $\structA$ are isomorphic. It is very well known that an automorphism group of $\structA$, denoted here by $\Aut(\structA)$, of an $\omega$-categorical structure $\structA$ is \emph{oligomorphic}, i.e., there are at most finitely many orbits of $k$-tuples w.r.t. $\Aut(\structA)$ for all $k \in \mathbb{N}$. An \emph{orbit of a $n$-tuple} $t$ is the set 
$\{ s \in A^n \mid \exists \alpha \in \Aut(\structA)~(s[1], \ldots, s[n]) = (\alpha(t[1]), \ldots, \alpha(t[n])) \}$.
Since all orbits in this paper are w.r.t. to the automorphism group $\Aut(\structA)$ of a structure  $\structA$ we will simply say orbits instead of more formally: orbits w.r.t. $\Aut(\structA)$.
  An orbit of a pair is called an \emph{orbital}.  Observe that every orbital except for $=$ is anti-reflexive.

An $\omega$-categorical structure $\structA$ is a \emph{core}, also referred to as a \emph{model-complete core}, if all
of its endomorphisms are elementary self-embeddings, i.e., preserve all first-order formulas definable in $\structA$. 
In this paper, we deal with finitely bounded homogeneous core structures $\structA$ over finite signatures. They are all $\omega$-categorical. We additionally assume that all relations in $\structA$ are both binary and symmetric. Thus, we mainly cope with \emph{finitely bounded homogeneous binary cores} $\structA$ and their  \emph{first-order expansions} $\structB$ that are structures containing all the relations in $\structA$ as well as some other relations with first-order definitions in $\structA$.

For the sake of clarity we will use the same symbol $R$ for a relational symbol in the signature $\tau$ of $\structB$ as well as for the relation $R^{\structA}$. 

Since the age of $\structA$ has free amalgamation, it follows that 
\begin{align}
    \orbitN = \left\{ (a_1, a_2) \in A^2\mid \bigwedge_{R \in \tau}  \neg R^{\structA}(a_1, a_2)  \right\}\nonumber
\end{align}
is nonempty. It is also straightforward to see that $\orbitN$ is an orbital w.r.t. $\Aut(\structA)$.

We intend to use Theorem~4.5.1 in~\cite{bodirsky_2021} and therefore assume that all orbitals are in $\structA$. Hence we need a new definition of free amalgamation known also as free amalgamation over $\orbitN$, see e.g.~\cite{cherlin_2022_2}.

\begin{definition}
\label{def:freemulti}
Let $\structA$ be a finitely bounded homogeneous symmetric binary core over a signature containing $\orbitN$. We say that the age of $\structA$ has \emph{free amalgamation (w.r.t. $\orbitN$)} if for all finite substructures $\structD, \structB_1, \structB_2$ such that $\structD$ is a common substructure of $\structB_1$ and $\structB_2$
of $\structA$ there exists an amalgam $\structC$ of $\structB_1$ and $\structB_2$ over $\structD$ satisfying $(a,b) \in \orbitN^{\structC}$ for all $a \in B_1 \setminus B_2$
and $b \in B_2 \setminus B_1$.
We say that $\structC$ is a \emph{free amalgam (w.r.t. $\orbitN$)} of $\structB_1$ and $\structB_2$ over $\structD$. 
\end{definition}

It is easy to see that the age of a finitely bounded symmetric binary core has free amalgamation according to the definition in Section~\ref{sect:intro} if and only if its expansion with all orbitals has free amalgamation 
w.r.t $\orbitN$. 

\begin{example}
The graph $\mathbb{H}_3 = (H_3, \orbitE)$ pp-defines 
$\orbitN(x_1, x_2) \equiv \exists y~\orbitE(x_1, y) \wedge \orbitE(y, x_2)$.
We have that the age of a $\{ \orbitE \}$-structure $\mathbb{H}_3$ has 
free-amalgamation while the expansion of $\mathbb{H}_3$ seen as a $\{ \orbitE, \orbitN \}$-structure has free amalgamation w.r.t. $\orbitN$. 
\end{example}

For the sake of simplicity we will write that the age of 
a finitely bounded homogeneous symmetric binary core expanded with all orbitals has free amalgamation instead of more properly writing that it has free amalgamation w.r.t. $\orbitN$.

Observe now that every orbit of $k$-tuples w.r.t. $\Aut(\structA)$ where $\structA$ is a finitely bounded homogeneous symmetric binary core may be defined by a conjunction of atomic formulae $\orbitB(x,y)$ where $\orbitB$ is an orbital w.r.t. $\Aut(\structA)$. It explains the following definition.

\begin{definition}
\label{def:BBBtuple}
 Let $\structA$ be a finitely bounded homogeneous symmetric binary core over signature $\tau$. We say that a $k$-ary tuple $t = (t[1], \ldots, t[k])$ 
is a
$$(\orbitB_{1,2},  \ldots \orbitB_{1,k}, \orbitB_{2,3}, \ldots, \orbitB_{2, k}, \ldots, \orbitB_{k-1,k})-\text{tuple},$$
where the indices above are pairs 
$(i < j)$ in $[k]^2$ ordered lexicographically, if $\orbitB_{i,j}$
is an orbital w.r.t. $\Aut(\structA)$ and $(t[i], t[j]) \in \orbitB_{i,j}$ for all $i < j$.
\end{definition}

If we do not care about other orbitals, we often simply write that $t$ is a $(\orbitB_{1,2},  \ldots, \orbitB_{k-1,k})$-tuple requiring only that $(t[1], t[2]) \in \orbitB_{1,2}$ and $(t[k-1], t[k]) \in \orbitB_{k-1,k}$ and we do not care about other $(t[i], t[j])$. 

\begin{example}
A tuple $t \in H_3^3$ where $\mathbb{H}_3 = (H_3, \orbitE, \orbitN)$ is a $(\orbitE, \orbitE, \orbitN)$-tuple if $(t[1], t[2]) \in \orbitE$, $(t[1], t[3]) \in \orbitE$ and $(t[2], t[3]) \in \orbitN$.  The tuple $t$ is in particular a $(\orbitE, \ldots, \orbitN)$-tuple.  
\end{example}

A projection of a formula $\varphi(x_1, \ldots, x_k)$ over the set of free variables $\{ x_1, \ldots, x_k\}$ to variables $\{x_{i_1},\ldots, x_{i_l} \}$ is the  relation
$$R'(y_1, \ldots, y_l) \equiv \exists x_1 \cdots \exists x_{k}~R(x_1, \ldots, x_k) \wedge \bigwedge_{j=1}^l y_j = x_{i_j}.$$
A projection $\Pi_{i_1, \ldots, i_l} R$ of a $k$-ary relation $R$ to coordinates $\{i_1,\ldots, i_l\}$ is a projection of $R(x_1, \ldots, x_k)$ to 
$\{x_{i_1},\ldots, x_{i_l} \}$.
 We extend this notation to   $\{i_1,\ldots, i_l \} \subseteq \{ -k, \ldots, -1, 1, \ldots, k \}$  where the coordinate $-l$ is a shorthand for $k+1-l$.

\begin{definition}
\label{def:plus}
Let  $R$ be an $n$-ary relation with $n \geq 3$ and $\subsetA \subseteq \Pi_{1,2}(R)$.
Then  
$$\subsetA + R =\\ \{ (c,d) \mid \exists t \in R~ (t[1], t[2]) \in \subsetA \wedge (t[-2], t[-1])  = (c,d) \}.$$

\end{definition}

\noindent
We use the above definition mainly in the following context.

\begin{definition}
\label{def:implication}
We say that  $R$  is a $(\subsetA \rightarrow \subsetB)$--implication, which we denote also by $R:\subsetA \rightarrow \subsetB$ if:
\begin{itemize}
    \item $\subsetA \subsetneq \Pi_{1,2}(R)$, $\subsetB \subsetneq \Pi_{-2,-1}(R)$, and
    \item $\subsetB = \subsetA + R$.
\end{itemize}
\end{definition}


We will usually consider pairs of implications satisfying the following condition.

\begin{definition}
\label{def:complementary}
We say that two relations $R_1$ and $R_2$ agree on projections if  
\begin{itemize}
\item $\Pi_{1,2}(R_1) = \Pi_{-2,-1}(R_2)$ and $\Pi_{1,2}(R_2) = \Pi_{-2,-1}(R_1)$.
\end{itemize}

Two implications $R_1:\subsetA \rightarrow \subsetB$ and $R_2: \subsetB \rightarrow \subsetA$ are \emph{complementary} if they agree on projections.
\end{definition}

\subsection{Primitive-Positive Definability and Polymorphisms}
\label{sect:polym}

A \emph{primitive-positive (pp)-formula} is a first-order formula built up exclusively from atomic formulae, conjunction, equality and existential quantifiers.  
A relation $R$ is pp-definable in a structure $\structB$ if it can be defined by a pp-formula. 

\begin{definition}
\label{def:relclone}
A \emph{relational clone} of $\structB$ is the set of all relations pp-definable in $\structB$.
\end{definition}

A \emph{polymorphism} $f:A^m \rightarrow A$ of a structure $\structB$ over domain $A$ is a homomorphism from a power of the structure $\structB$ to $\structB$. A polymorphism $f$ of a relation $R \subseteq A^n$ is a polymorphism of a structure $(A, R)$. In this case we say that $f$ preserves  $\structB$ or $R$.

 There is a deep connection between the polymorphisms of a structure $\structB$, denoted by $\Pol(\structB)$, and the relations pp-definable in that structure.

\begin{theorem}
(\cite{BodirskyN06}) 
\label{thm:Galoisconn}
Let $\structB$ be a countable $\omega$-categorical structure. Then $R$ is preserved by the polymorphisms of $\structB$ if and only if it has a  primitive-positive definition in $\mathbb{A}$.
\end{theorem}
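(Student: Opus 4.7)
The plan is to prove the biconditional in two parts: the forward direction (pp-definable implies preserved) by a routine syntactic induction, and the converse by a compactness argument that is the only place where $\omega$-categoricity is used.

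For the easy direction, I would induct on the structure of a pp-formula $\varphi(x_1,\ldots,x_n)$ defining $R$. Atomic formulas are preserved by definition of polymorphism; equality and conjunction are immediate because polymorphisms act coordinatewise; and for $\exists y\,\psi(\bar{x},y)$, given input tuples $r^{(j)} \in R$ with witnesses $b_j$ for $\psi(r^{(j)},b_j)$, the value $f(b_1,\ldots,b_m)$ witnesses the existential quantifier for $f(r^{(1)},\ldots,r^{(m)})$ by the inductive hypothesis. This direction uses nothing beyond the definition of polymorphism and works for any relational structure.

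For the converse, suppose $R \subseteq A^n$ is preserved by every polymorphism of $\structB$, and fix $s \in A^n \setminus R$; it suffices to construct a pp-formula over $\structB$ satisfied by every tuple of $R$ but not by $s$, since $\omega$-categoricity gives only finitely many orbits of $n$-tuples, so only finitely many such $s$ modulo $\Aut(\structB)$, and the conjunction of the corresponding separators will pp-define $R$ exactly. Enumerate $R = \{r^{(1)}, r^{(2)}, \ldots\}$ and for each $k$ let $\varphi_k(\bar{x})$ be $\exists y_1 \cdots \exists y_k\, \psi_k$, where $\psi_k$ is the conjunction of all atomic facts that hold in $\structB$ under the intended assignment $y_j \mapsto r^{(j)}$ and $x_i \mapsto r^{(j)}[i]$ (so $x_i$ plays the role of the $i$-th row of the $n \times k$ matrix $[r^{(1)}\ \cdots\ r^{(k)}]$). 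Each $\varphi_k$ is satisfied by every tuple of $R$ via projection onto a single column, and the relations $\varphi_k(\structB)$ form a descending chain containing $R$.

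The heart of the proof is to show $\bigcap_k \varphi_k(\structB) = R$, so that some finite $\varphi_k$ already excludes $s$. A tuple $s$ in this intersection provides, for each $k$, a compatible interpretation of $y_1,\ldots,y_k$ respecting the atomic diagram of $\structB$ on the $r^{(j)}[i]$'s and sending each row to $s_i$; oligomorphicity of $\Aut(\structB)$ bounds, at every level $k$, the number of orbits of $k$-tuples, so a K\"{o}nig's lemma / Tychonoff compactness extraction yields a coherent infinite branch that assembles into a polymorphism $f$ of $\structB$ for which $f$ applied coordinatewise to a family of tuples in $R$ produces $s$. Since $R$ is preserved by $f$ we obtain $s \in R$, contradicting the choice of $s$. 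The main obstacle is precisely this limit step: guaranteeing that the compatible partial maps extend to a \emph{genuine} polymorphism of $\structB$ rather than merely an operation preserving finite fragments, and this is where the tameness of oligomorphic automorphism groups (the content of $\omega$-categoricity via Engeler--Ryll-Nardzewski--Svenonius) is indispensable and substitutes for the finite counting argument of the classical Geiger--Bodnarchuk--Kaluzhnin theorem.
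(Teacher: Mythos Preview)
The paper does not prove this theorem; it is quoted from the literature (Bodirsky--Ne\v{s}et\v{r}il) and used as a black box, so there is no proof in the paper to compare against. I assess your sketch on its own.

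The forward direction is fine, and your strategy for the converse---canonical conjunctive queries from finite pieces of $R$, then a compactness extraction of a violating polymorphism---is the standard one. But there is a real gap in the extraction step: your K\"onig's-lemma argument runs over an enumeration of all of $R$ and in the limit produces an operation whose arity is $|R|$; for infinite $R$ this is an $\omega$-ary operation, not a finitary polymorphism, so it does not contradict the hypothesis that $R$ is preserved by $\Pol(\structB)$. A second, related issue is the claim that every tuple of $R$ satisfies $\varphi_k$ ``via projection onto a single column'': this only justifies $r^{(j)}\models\varphi_k$ for $j\le k$, and the inclusion $R\subseteq\varphi_k(\structB)$ can genuinely fail (for instance if $r^{(1)},\dots,r^{(k)}$ all agree on the first coordinate but $r^{(k+1)}$ does not). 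Your description of $\psi_k$ also does not type-check as written: the assignment $y_j\mapsto r^{(j)}$ sends a single variable to an $n$-tuple.

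The standard repair is to use $\omega$-categoricity at the outset rather than only at the limit step. Since $R$ is $\Aut(\structB)$-invariant it is a finite union of orbits; choose one representative $r^{(1)},\dots,r^{(m)}$ per orbit. Work with this fixed $m\times n$ matrix and build pp-formulas $\varphi_F$ indexed by finite subsets $F\subseteq A^m$ containing the $n$ column tuples $c_1,\dots,c_n\in A^m$, with $\varphi_F$ asserting the atomic diagram of $F$ in $\structB^m$. Every element $\alpha(r^{(j)})$ of $R$ satisfies every $\varphi_F$ via the $m$-ary polymorphism $\alpha\circ\pi_j$, and if $s$ satisfies all $\varphi_F$ then K\"onig's lemma (branching bounded by oligomorphicity) assembles the witnesses into a genuine $m$-ary polymorphism $f$ with $f(r^{(1)},\dots,r^{(m)})=s$, forcing $s\in R$. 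Only finitely many of the relations $\varphi_F(\structB)$ are distinct, so their intersection is a finite conjunction and pp-defines $R$.
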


\subsection{Constraints, CSP and Minimality}
\label{sect:CSP}
A constraint $\constraint$ over domain $A$ is a pair $((x_1, \ldots, x_k),R)$ where $(x_1, \ldots, x_k)$ is a tuple of pairwise different variables and $R \subseteq A^k$. 
A set of variables in $\constraint$ called also the scope of $\constraint$ is denoted by $\Var(\constraint)$. A projection $\Pi_{x_{i_1}, \ldots, x_{i_l}} \constraint$ of a constraint $\constraint$ to variables $\{x_{i_1}, \ldots, x_{i_l}\} \subseteq \Var(\constraint)$
is a constraint 
$((x_{i_1}, \ldots, x_{i_l}),\Pi_{i_1, \ldots, i_l} R)$.

\begin{definition}
\label{def:csp}
An instance $\instance$ of $\csp(\structB)$ for a relational structure $\structB$ over domain $A$ and a set of variables $\Var = \{ v_1, \ldots, v_n \}$ is a set of
constraints of the form $((x_1, \ldots, x_k),R)$ such that $
\{ x_1, \ldots, x_k \} \subseteq \Var$ and $R$ is in $\structB$. The question is whether  there is a solution to $\instance$, i.e., a mapping $f: \Var \rightarrow A$ such that for every constraint $((x_1,\ldots, x_k),R)$ it holds that $(f(x_1), \ldots, f(x_k)) \in R$.
\end{definition}

We say that a constraint $((x_1, \ldots, x_k), R)$ is non-trivial if $R$ is different from $\emptyset$ and trivial otherwise. We say that  an instance $\instance$ is non-trivial if all constraints in it are non-trivial, and trivial otherwise.  

\begin{definition}
\label{def:klmininstances}
An instance $\instance$ over variables $\Var = \{ v_1, \ldots, v_n \}$ is $(k,l)$-minimal with $k \leq l$ if 
\begin{enumerate}
\item every $l$-element subset of $\Var$ is contained in the scope of some constraint, and
\item for all $k$-element subsets of variables $\{x_1, \ldots, x_k \}$ and all constraints $\constraint_1, \constraint_2$ whose scope contains $(x_1, \ldots, x_k)$ 
we have that $\Pi_{x_1, \ldots, x_k} \constraint_1$equals $\Pi_{x_1, \ldots, x_k}  \constraint_2$. 
\end{enumerate}
\end{definition}

An algorithm that transforms any instance into a $(k,l)$-minimal instance is straightforward and works in time $O(\left| \Var \right|^m)$ where $m$ is the maximum of $l$ and the largest arity in the signature of $\structB$. Indeed, we simply introduce a new  constraint $((x_1, \ldots, x_l), A^l)$ for all pairwise different variables $x_1, \ldots, x_l \in \Var$ to satisfy the first condition and then remove tuples (orbits of tuples) from  the relations in constraints in the instance as long as the second condition is not satisfied. It is widely known and easy to prove that an instance $\instanceJ$ of the CSP obtained by the described algorithm has the same set of solutions as $\instance$: they are equivalent. In particular,  if $\instanceJ$ is trivial, then $\instance$ has no  solutions.
Under a natural assumption that $\structB$ contains all at most $l$-ary relations $pp$-definable in $\structB$, we have that $\instanceJ$ is an instance of $\csp(\structB)$. From now on  this assumption will be in effect.

\begin{definition}
\label{def:relwidth}
A relational structure $\structB$ has \emph{relational width $(k,l)$} if every $(k,l)$-minimal instance $\instance$ of $\csp(\structB)$ has a solution iff it is non-trivial.

A relational structure $\structB$ has \emph{bounded relational width} if it has relational width $(k,l)$ for some natural numbers $k \leq l$.
\end{definition}


\section{Implicationally Uniform Relational Clones Imply Bounded Width}
\label{sect:ImpUniform}

In this section we deal with $\structB$ whose relational clones have a very particular property.

\begin{definition}
\label{def:impuniform}
We say that a relational clone  is \emph{implicationally uniform} if
for every pair of complementary implications $R_1: \subsetA \rightarrow \subsetB$ and $R_2: \subsetB \rightarrow \subsetA$ in it we have that  
$\subsetA = \subsetB$. Otherwise, we say that a relational clone is 
\emph{implicationally non-uniform}.
\end{definition}

In this section, we show that if a relational clone of $\structB$ is implicationally uniform, then $\csp(\structB)$ is solvable by establishing $(2, \maxboundA)$-minimality where $\maxboundA$ is the size of the largest forbidden substructure in $\forbA$ but not smaller than $3$. We consider instances over variables $\{ v_1, \ldots, v_n\}$.
Since instances $\instance$ of interest are $(2, \maxboundA)$-minimal, the projection of every constraint to variables $v_i, v_j$ with $i,j \in [n]$ is the same binary relation. We denote it by $\instance_{i,j}$. The next important notion that we are going to use is  a (directed) graph of an instance.

\begin{definition}
\label{def:graphinstance}
Let $\instance$ be an instance of $\csp(\structB)$ over variables $\{ v_1, \ldots, v_n \}$ and a
first-order expansion $\structB$ of a finitely bounded homogeneous symmetric binary core $\structA$. A (directed) graph  $\graphinstance$ of an instance $\instance$ 
consists of vertices of the form $((v_i, v_j), \subsetA)$  such that 
$\subsetA \subsetneq \instance_{i,j}$ has a pp-definition in $\structB$.

There is an arc between 
$((v_i, v_j), \subsetA)$ and $((v_k, v_l), \subsetB)$ in $\graphinstance$ if there is a relation $R$ in the relational clone of $\structB$ such that
\begin{itemize}
\item  $\Pi_{1,2}(R) = \instance_{i,j}$, 
$\Pi_{-2,-1}(R) = \instance_{k,l}$ and 
\item $R$ is a $(\subsetA \rightarrow \subsetB)$-implication. 
\end{itemize}
\end{definition}

We are now ready to prove that the minimality algorithm solves instances of $\csp(\structB)$ such that the relational clone of $\structB$ is implicationally uniform clones.

\begin{theorem}
\label{thm:impuniform}
    Let $\structB$ be a first-order expansion of a finitely bounded homogeneous symmetric binary core $\structA$ such that the relational clone of $\structB$ is implicationally uniform and 
    $\instance$  a non-trivial $(2,\maxboundA)$-minimal instance of $\csp(\structB)$ where $\maxboundA$ is the maximal size of a structure in $\forbA$ but not smaller than $3$. Then $\instance$ has a solution. 
\end{theorem}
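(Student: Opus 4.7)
The plan is to construct a solution to $\instance$ in two stages. First, I would refine the binary projections so that each $\instance_{i,j}$ consists of a single orbital of $\Aut(\structA)$, while preserving non-triviality of the instance and $(2, \maxboundA)$-minimality. Second, I would use finite boundedness and homogeneity of $\structA$ to embed the resulting orbital pattern as a concrete substructure of $\structA$, which directly yields a solution to $\instance$.

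For the refinement stage, I would pick a pair $(v_i, v_j)$ for which $\instance_{i,j}$ contains more than one orbital, choose a proper non-empty subset $\subsetA \subsetneq \instance_{i,j}$ (for example a single orbital), replace $\instance_{i,j}$ by $\subsetA$, and then re-establish $(2, \maxboundA)$-minimality by propagating the restriction through all constraints. Each step of the propagation corresponds to following an arc in $\graphinstance$, so the process stabilizes after finitely many rounds. The central claim is that some choice of $\subsetA$ yields a non-trivial restricted instance. Otherwise, by pp-composing the implications along the cycles in $\graphinstance$ that drive both $\subsetA$ and $\instance_{i,j}\setminus \subsetA$ to the empty set, one extracts a pair of complementary implications $R_1:\subsetA \rightarrow \subsetB$ and $R_2:\subsetB \rightarrow \subsetA$ with $\subsetA \neq \subsetB$, directly contradicting implicational uniformity. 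Iterating this refinement on every multi-orbital pair produces an instance in which each $\instance_{i,j}$ is a single orbital $\orbitO_{i,j}$ and which remains non-trivial and $(2, \maxboundA)$-minimal.

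Once the orbital assignment is in place, I claim it describes an embeddable finite substructure of $\structA$ on the variables $v_1, \ldots, v_n$. Because $\structA$ is a symmetric binary core, the orbit of any tuple is determined by the orbitals on its pairs, so $(\orbitO_{i,j})$ specifies a $\tau$-structure $\structD$ on $\{v_1, \ldots, v_n\}$. Non-triviality combined with $(2, \maxboundA)$-minimality ensures that on every $\maxboundA$-subset of variables the constraint is witnessed by a concrete tuple in $\structB$, which as a substructure of $\structA$ avoids every member of $\forbA$; since all forbidden structures have size at most $\maxboundA$, no member of $\forbA$ embeds into $\structD$ either, and therefore $\structD$ embeds into $\structA$ by finite boundedness. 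The image of such an embedding is a solution to $\instance$: for each constraint of scope $(v_{i_1}, \ldots, v_{i_k})$, the restriction of its relation to tuples matching the orbital pattern is a single $k$-ary orbit of $\structA$ (because in a symmetric binary core the binary orbitals determine the $k$-ary orbit), non-empty by non-triviality, so the image tuple of the embedding lies in this relation.

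The main obstacle is the refinement step, and specifically the extraction of a pair of complementary implications from a hypothetical trivializing propagation. The pp-composition along a cycle in $\graphinstance$ must be performed so that the resulting relation is an implication in the sense of Definition~\ref{def:implication} between the correct binary projections, that no intermediate projection accidentally collapses to the empty or full set, and that the two directions of the cycle combine into a complementary pair in the sense of Definition~\ref{def:complementary}. Handling the possible splits into several orbitals during the propagation, guaranteeing that a certificate of non-uniformity can be extracted in every scenario in which the instance becomes trivial, and controlling the interaction between the $(2, \maxboundA)$-minimality re-enforcement and these pp-compositions, will constitute the technical heart of the argument.
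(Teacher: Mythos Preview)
Your two-stage outline (shrink every $\instance_{i,j}$ down to a single orbital, then embed the resulting pattern into $\structA$) is exactly the shape of the paper's argument, and your second stage is essentially correct; the only point you gloss over is that when some $\orbitO_{i,j}$ is the equality orbital you must pass to the quotient before embedding, which the paper does explicitly.

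The genuine gap is in the refinement stage. You propose to fix an arbitrary pair $(v_i,v_j)$, try some proper $\subsetA\subsetneq\instance_{i,j}$, propagate, and argue that if every choice of $\subsetA$ trivialises the instance then you can read off complementary implications $R_1:\subsetA\to\subsetB$, $R_2:\subsetB\to\subsetA$ with $\subsetA\neq\subsetB$. That extraction is the part that is not justified and, as stated, need not work. A trivialisation during propagation is not caused by a single cycle in $\graphinstance$ returning to $(v_i,v_j)$ with a smaller set; it can equally well arise from two \emph{different} implications $R:\subsetA\to\subsetB_1$ and $R':\subsetA\to\subsetB_2$ (possibly through different intermediate pairs) landing on the same pair with $\subsetB_1\cap\subsetB_2=\emptyset$. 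Such a configuration produces no complementary pair in the sense of Definition~\ref{def:complementary}, so implicational uniformity gives you nothing. You yourself flag this step as the ``technical heart'' and leave it open; as written it is a gap, not a difficulty of presentation.

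The paper avoids this problem by not choosing $\subsetA$ blindly. It looks at a \emph{maximal} strongly connected component $\maximal$ of $\graphinstance$ and first proves (Observation~\ref{obs:maxcompuniform}) that, under implicational uniformity, every vertex of $\maximal$ carries the \emph{same} subset $\subsetA$: any two vertices of $\maximal$ lie on a common cycle, and the $\circ$-composition of the implications along that cycle is precisely a complementary pair, which uniformity forces to have equal left and right sets. One then imposes this common $\subsetA$ simultaneously on \emph{all} pairs occurring in $\maximal$. Because $\maximal$ is maximal, every implication fired by this restriction lands back in $\maximal$ and hence only reproduces $\subsetA$; this is made precise in Observation~\ref{obs:RwithAs}, which shows directly (no contradiction argument) that every binary projection of the restricted instance is either unchanged or equal to $\subsetA$, so the instance stays non-trivial and $(2,\maxboundA)$-minimal. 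The size strictly drops, and one concludes by induction on the total number of orbits across all constraints.

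In short: the missing idea in your proposal is to let the maximal component of $\graphinstance$ dictate both \emph{which} pairs to restrict and \emph{to what}, rather than restricting a single pair and hoping to squeeze a contradiction out of a failed propagation.
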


\begin{proof}
We will measure the size of an instance  $\instance$ of $\csp(\structB)$ by the sum of the number of orbits across all constraints, i.e.,
$$\text{size}(\instance) = \sum_{(\overline{x}, R) \in \instance} \text{number of orbits}(R)$$

Suppose now that the theorem does not hold and take a minimal in size instance of $\csp(\structB)$ that is non-trivial $(2, \maxboundA)$-minimal and has no solution. We start with a case where every $\instance_{i,j}$ with $i,j \in [n]$ contains only one orbit. 
In this case we construct a finite structure $\structC$ over the domain consisting of variables in $\instance$ so that $(v_i, v_j) \in \orbitO^{\structC}$ if and only if  $\instance_{i,j} = \orbitO$.
Since $\maxboundA \geq 3$ we have that $\structC$ respects the transitivity of $=$, i.e. we have $(v_{i_k}, v_{i_m}) \in =^{\structC}$
whenever $(v_{i_k}, v_{i_l}) \in =^{\structC}$ and 
$(v_{i_l}, v_{i_m})  \in =^{\structC}$. In fact, $=^{\structC}$ is an equivalence relation on $\{ v_1, \ldots, v_n \}$. 
Thus consider a quotient structure $\structD$ which is $\structC /  
=^{\structC}$ and whose domain is the set of equivalence classes of 
$=^{\structC}$. We have $([v_i]_{=^{\structC}}, [v_j]_{=^{\structC}}) \in \orbitO^{\structC}$
if and only if $\instance_{i,j} = \orbitO$. 
Since $\instance$ is $(2,\maxboundA)$-minimal there is an embedding 
of $\structD$ into $\structA$ and a homomorphism of $\structC$ into
$\structA$. The latter is also a solution to $\instance$ which was to be proved. 

The previous case represents the situation where $\graphinstance$ is empty. Now, we move to the case where at least one $\instance_{i,j}$ with $i \neq j$ in $[n]$ is not an orbital. Since every orbital is 
pp-definable in $\structA$ and thereby in $\structB$, and $\instance_{i,j}$ contains at least two orbitals, from now on the graph $\graphinstance$ is non-empty. 
We will look at  maximal strongly connected components $\maximal$
of $\graphinstance$, i.e. strongly connected components such that every arc originating in $\maximal$ ends up in $\maximal$. 
For the sake of simplicity we simply say that  $\maximal$ is a maximal component of $\graphinstance$. We start with a simple observation. (All omitted proofs may be found in the appendix.)

\begin{observation}
\label{obs:maxcompuniform}
Let $\maximal$ be a maximal component of $\graphinstance$, then for all pairs of vertices $((v_i,v_j), \subsetA)$ and
$((v_k,v_l), \subsetB)$ in $\maximal$ we have $\subsetA = \subsetB$. 
\end{observation}

\begin{proof}

Assume on the contrary that there is a cycle in $\maximal$ containing both
$((v_i,v_j), \subsetA)$ and
$((v_k,v_l), \subsetB)$ with $\subsetA \neq \subsetB$. 
It implies that there is a 
chain of implications $R_1: \subsetA_1 \rightarrow \subsetA_2, R_2: \subsetA_2 \rightarrow \subsetA_3,  \ldots, R_a: \subsetA_{a} \rightarrow \subsetB, R_{a+1}: \subsetB \rightarrow \subsetB_1,  R_{a+2}: \subsetB_1 \rightarrow \subsetB_2,  
\ldots, R_{a+b}: \subsetB_{a+b} \rightarrow \subsetA$
satisfying the conditions in Definition~\ref{def:graphinstance}.
In particular all these relations are pp-definable in $\structB$.
By use of $\circ$-composition in Definition~\ref{def:RcircR}, we have that $((R_1 \circ R_2) \circ \cdots \circ R_a)$
is a $(\subsetA \rightarrow \subsetB)$-implication 
and that $((R_{a+1} \circ R_{a+2}) \circ \cdots \circ R_{a+b})$ is a
 $(\subsetB \rightarrow \subsetA)$-implication. Clearly $R_1, R_2$ are complementary. The fact that $\subsetA \neq \subsetB$ contradicts the fact that the relational clone of $\structB$ is implicationally uniform. 
\end{proof}

Thus, from now on we assume that $\graphinstance$ contains a maximal strongly connected component $\maximal$ in which all vertices are of the form $((v_i, v_j), \subsetA)$ with the same $\subsetA$. We reduce the size of an $\instance$ so that  every $R$ in a constraint $((x_1, \ldots, x_k), R)$ in $\instance$ is replaced by 
$$ R(x_1, \ldots, x_k) \wedge \bigwedge_{((x_i, x_j), \subsetA) \in \maximal} \subsetA(x_i,x_j).$$
The new instance will be denoted by $\instanceJ$. 
In order to show that it is non-trivial and $(2, \maxboundA)$-minimal consider the following observation, which directly follows from the definition of $\graphinstance$.

\begin{observation}
\label{obs:RwithAs}
Let $R$ be a $k$-ary relation in a relational clone of $\structB$ and $R(v_{i_1}, \ldots, v_{i_k})$ an atomic formula over variables of the instance $\instance$ such that for all $l,m \in [k]$ the projection of  $R(v_{i_1}, \ldots, v_{i_k})$ to $v_{i_l}, v_{i_m}$ is $\instance_{i_l, i_m}$.
Then for all $c>0$ and $a_1, b_1, \ldots, a_c, b_c, l,k \in [m]$
satisfying $((v_{i_{a_j}}, v_{i_{b_j}}), \subsetA) \in \maximal$ 
for all $j \in [c]$
we have that the
 projection of  
$$R'(v_{i_1}, \ldots, v_{i_k}) \equiv \left( R(v_{i_1}, \ldots, v_{i_k}) \wedge \bigwedge_{j \in [c]} \subsetA(v_{i_{a_{j}}}, v_{i_{b_{j}}}) \right)$$ 
to any $v_{i_l}, v_{i_m}$ with $l,m \in [k]$ is either 
\begin{itemize}
\item $\instance_{i_l, i_m}$ or 
\item $\subsetA$ and then $((v_{i_l}, v_{i_m}), \subsetA) \in \maximal$. 
\end{itemize}
\end{observation}

\begin{proof}
Assume the contrary and take a minimal $c$ such that
there are $v_{i_l}, v_{i_m}$ with $l,m \in [k]$ and such that the projection of  
$R'(v_{i_1}, \ldots, v_{i_k}) $ to $v_{i_l}, v_{i_m}$ is 
$\subsetB \notin \{ \subsetA, \instance_{i_l, i_m}, \emptyset \}$.
Observe that since $c$ is minimal, we can assume that $\subsetB \neq \emptyset$. 
Then let $e \in [c]$ be such that for $I = [c] \setminus \{e\}$ we have that the projection of 
$$R''(v_{i_1}, \ldots, v_{i_k}) \equiv \left( R(v_{i_1}, \ldots, v_{i_k}) \wedge \bigwedge_{j \in I} \subsetA(v_{i_{a_{j}}}, v_{i_{b_{j}}}) \right)$$ 
to $v_{i_{a_e}}, v_{i_{b_e}}, v_{i_l}, v_{i_m}$ is a $(\subsetA \rightarrow \subsetB)$--implication for some $\subsetB \subsetneq \instance_{i_l, i_m}$. 
Since $\subsetA$ and $R$ are pp-definable in $\structB$, by the definition of $\graphinstance$, it follows that $\maximal$ contains  $((v_{i_l}, v_{i_m}), \subsetB)$ which contradicts Observation~\ref{obs:maxcompuniform}.

The other thing that could happen is when the projection of $R'(v_{i_1}, \ldots, v_{i_k})$ to $v_{i_l}, v_{i_m}$ is 
$\subsetA$ and $((v_{i_l}, v_{i_m}), \subsetA) \notin \maximal$.
But then again by the definition of $\graphinstance$ the vertex
$((v_{i_l}, v_{i_m}), \subsetA)$ should have been in $\maximal$.
It completes the proof of the observation. 
\end{proof}

By the observation above it follows that for every constraint $((v_{i_1},\ldots, v_{i_k}), R) \in \instance$ a projection of 
$$R(v_{i_1},\ldots, v_{i_k}) \wedge \bigwedge_{((v_{i_l}, v_{i_m}), \subsetA) \in \maximal} \subsetA(v_{i_l},v_{i_m})$$
to $v_{i_l}, v_{i_m}$ for any $l,m \in [k]$ is either $\instance_{i_l, i_m}$ or
$\subsetA$ and the latter case holds only when $((v_{i_l}, v_{i_m}), \subsetA) \in \maximal$. 
It follows that $\instanceJ$ is non-trivial and $(2,\maxboundA)$-minimal. 
Furthermore, if $\instance$ does not have a solution, then $\instanceJ$ does not have it either. Since $\instanceJ$ is of a smaller size than $\instance$ we have a contradiction with the minimality of $\instance$.
\end{proof}


\section{Implicationally Non-Uniform Clones Are Not Preserved By Chains of Quasi Directed  J\'{o}nsson Operations}
\label{sect:ImpNonUniform}

In this section we show that whenever a relational clone of $\structB$ is implicationally non-uniform, i.e. it contains complementary relations $R_1:\subsetA \rightarrow \subsetB$ and $R_2: \subsetB \rightarrow \subsetA$ with $\subsetA \neq \subsetB$, then  $\structB$ is not preserved by any chain of quasi directed J\'{o}nsson operations.

We start with a definition of a directed bipartite graph $\bipartite_{R_1, R_2}$ which reflects the structure of 
$R_1, R_2$ which agree on projections. We use it mainly for $R_1, R_2$ which are complementary implications.

If it is not stated otherwise we will assume that both $R_1$ and $R_2$
are quaternary.
To this end, for every binary relation $\subsetA$ first-order definable in $\structA$ we set  
  \begin{itemize}
  \item $\vertices_L(\subsetA) := \{ \orbitO_L \mid \orbitO \textrm{ is an orbital contained in  } \subsetA \}$, and
 \item $\vertices_R(\mathcal{A}):=\{  \orbitO_R \mid \orbitO \textrm{ is an orbital contained in  } \subsetA \}$.
\end{itemize}

\begin{definition}
\label{def:bipartite}
Let $R_1, R_2$
 be two relations that agree on projections. We define  $\bipartite_{R_1, R_2}$ to be a bipartite digraph over vertices
$\vertices_L(\Pi_{1,2}(R_1)) \cup \vertices_R(\Pi_{1,2}(R_2)$ where we have two kinds of arcs:
\begin{itemize}
\item $(\orbitO_L, \orbitP_R) \in \vertices_L(\Pi_{1,2}(R_1)) \times \vertices_R(\Pi_{1,2}(R_2))$ if 
the relation $R_1$ contains a $(\orbitO, \ldots, \orbitP)$-tuple,
\item $(\orbitO_R, \orbitP_L)   \in \vertices_R(\Pi_{1,2}(R_2)) \times \vertices_L(\Pi_{1,2}(R_1))$ if 
the relation $R_2$ contains   a
$(\orbitO, \ldots, \orbitP)$-tuple.
\end{itemize}
\end{definition}

A strongly connected component of $\bipartite_{R_1, R_2}$ 
is simply called a component. A component is non-trivial if it contains more than one vertex.
A component $\mathcal{N}$
is maximal if all outgoing edges end up in $\mathcal{N}$ and it is minimal if  all ingoing edges originate in $\mathcal{N}$.

For the sake of simplicity we say that a subgraph of $\bipartite_{R_1, R_2}$ induced by
$(\vertices_L(\subsetC) \cup \vertices_R(\subsetD))$ for some binary relations $\subsetC \subseteq \Pi_{1,2}(R_1)$ and $\subsetD \subseteq \Pi_{1,2}(R_2)$ is a $(\subsetC, \subsetD)$-subgraph of $\bipartite_{R_1, R_2}$. 
In particular, a component $(\vertices_L(\subsetC) \cup \vertices_R(\subsetD))$ of $\bipartite_{R_1, R_2}$ induces a $(\subsetC, \subsetD)$--subgraph and in this case we say also that that component is  a $(\subsetC, \subsetD)$---component.
 We say that a $(\subsetC, \subsetD)$-component in $\bipartite_{R_1, R_2}$ contains a tuple $t$
if it is a $(\orbitC, \ldots, \orbitD)$-tuple in $R_1$ or a $(\orbitD, \ldots, \orbitC)$-tuple in $R_2$ with $\orbitC \subseteq \subsetC$ and $\orbitD \subseteq \subsetD$.

\begin{observation}
\label{obs:minmaxcomponents}
Let $R_1: \mathcal{A} \rightarrow \mathcal{B}$ and $R_2: \mathcal{B} \rightarrow \mathcal{A}$ be complementary. Then a $(\subsetA, \subsetB)$ subgraph 
of
$\bipartite_{R_1, R_2}$ has both a  minimal 
and a
maximal component. 
Furthermore, a $(\Pi_{1,2}(R_1) \setminus \subsetA, \Pi_{3,4} \setminus \subsetB)$-subgraph of $\bipartite_{R_1, R_2}$ has a minimal component.
\end{observation}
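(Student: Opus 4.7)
The plan is to treat both claims as routine consequences of a standard finite graph-theoretic fact: any finite non-empty directed graph has a strongly connected component that is a sink of the condensation DAG (a maximal component, in the sense of Definition above) and one that is a source (a minimal component). The argument therefore reduces to checking that the two subgraphs in question are (i) finite and (ii) non-empty, and then invoking the condensation argument.

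First I would observe finiteness: since $\structA$ is $\omega$-categorical, there are only finitely many orbitals $\orbitO$, so $\bipartite_{R_1, R_2}$ has a finite vertex set, and hence so does every induced subgraph. Next I would argue non-emptiness. For the $(\Pi_{1,2}(R_1) \setminus \subsetA,\, \Pi_{-2,-1}(R_1) \setminus \subsetB)$-subgraph, non-emptiness of both vertex sides is immediate from the strict inclusions $\subsetA \subsetneq \Pi_{1,2}(R_1)$ and $\subsetB \subsetneq \Pi_{-2,-1}(R_1) = \Pi_{1,2}(R_2)$ in Definitions~\ref{def:implication} and~\ref{def:complementary}. For the $(\subsetA,\subsetB)$-subgraph, we may assume $\subsetA$ and $\subsetB$ contain at least one orbital each, since otherwise $\subsetB = \subsetA + R_1 = \emptyset$ (and symmetrically) and the claim is vacuous.

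With finiteness and non-emptiness in place, I would carry out the condensation argument once for each subgraph. Let $G$ be the finite directed subgraph under consideration and let $G/\!\sim$ be its condensation, a finite non-empty DAG. Pick any sink of $G/\!\sim$; its preimage is an SCC $\maximal$ of $G$ no arc of which leaves $\maximal$, so $\maximal$ is a maximal component in the sense of the paper. A source of $G/\!\sim$ lifts dually to a minimal component. Applying this to the $(\subsetA, \subsetB)$-subgraph yields both a minimal and a maximal component, and applying it to the $(\Pi_{1,2}(R_1) \setminus \subsetA,\, \Pi_{-2,-1}(R_1) \setminus \subsetB)$-subgraph yields the required minimal component.

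I do not expect a real obstacle here; the observation is a lightweight graph-theoretic setup that will be used later to locate witnesses for more delicate pp-definability arguments. The only point requiring a moment's care is that "maximal" and "minimal" must be understood relative to the subgraph under consideration (so that outgoing/incoming arcs are those of the induced subgraph), after which the sink/source of the condensation DAG delivers the claim immediately.
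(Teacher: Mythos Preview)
Your argument is correct. Both proofs amount to locating a sink and a source strongly connected component in a finite non-empty digraph, so the overall shape is the same, but the execution differs in one respect worth noting.

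The paper does not invoke the condensation of the induced subgraph directly. Instead it first uses the implication identities $\subsetA + R_1 = \subsetB$ and $\subsetB + R_2 = \subsetA$ to verify that every vertex of the $(\subsetA,\subsetB)$-subgraph has out-degree at least one \emph{inside} that subgraph (and dually in-degree at least one), and similarly that every vertex of the complementary subgraph has in-degree at least one. Only then does it walk forward (resp.\ backward) from an arbitrary vertex until a sink (resp.\ source) SCC is reached. Your condensation argument bypasses these degree facts entirely: once the induced subgraph is finite and non-empty, its condensation DAG automatically has a sink and a source, regardless of whether individual vertices have nonzero in/out-degree. This is cleaner for the statement at hand. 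What the paper's route buys in exchange is the slightly stronger (though not separately recorded) information that the $(\subsetA,\subsetB)$-subgraph has no vertex of out- or in-degree zero, and that the complementary subgraph has no vertex of in-degree zero; these facts surface implicitly in later arguments where one needs to walk within a subgraph, so it is worth being aware of them even though your proof of the observation itself does not require them.
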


\begin{proof}
Since $\subsetA + R_1 = \subsetB$ 
and $\subsetB + R_2 = \subsetA$, we have that  for any vertex $\orbitA_L$ in $\vertices_L(\subsetA)$ there exists 
$\orbitB_R$ in $\vertices_B(\subsetA)$ 
such that $R_1$ contains an $(\orbitA, \ldots, \orbitB)$-tuple 
and that for any 
$\orbitB_R$ in $\vertices_B(\subsetA)$
we have $\orbitA_L$ in $\vertices_L(\subsetA)$ such that 
a $(\orbitB, \ldots, \orbitA)$-tuple is in $R_2$.
It follows that by an appropriately long walk we can reach a component from where there is no outgoing arc. It is a maximal component in the $(\subsetA, \subsetB)$--subgraph of $\bipartite_{R_1, R_2}$. For the minimal component in  
the $(\subsetA, \subsetB)$--subgraph of $\bipartite_{R_1, R_2}$ notice that for any $\orbitA_L$ in $\vertices_L(\subsetA)$ there exists 
$\orbitB_R$ in $\vertices_R(\subsetB)$ such that 
a $(\orbitB, \ldots, \orbitA)$-tuple is in $R_2$
and that for any 
$\orbitB_R$ in $\vertices_R(\subsetB)$ there exists 
$\orbitA_L$ in $\vertices_L(\subsetA)$ such that 
a $(\orbitA, \ldots, \orbitB)$-tuple is in $R_1$.
Thus by walking from any vertex in $(\subsetA, \subsetB)$--subgraph backwards but inside the subgraph we reach a minimal component in 
$(\subsetA, \subsetB)$.
For the minimal component in a $(\Pi_{1,2}(R_1) \setminus \subsetA, \Pi_{3,4} \setminus \subsetB)$--subgraph of $\bipartite_{R_1, R_2}$ we do the same thing just start with any vertex in there and go against the arrows. 
\end{proof}

We will now define a $\circ$-composition of two quaternary relations, another slightly different $\bowtie$-composition is defined later on.

\begin{definition}
\label{def:RcircR}
Let $R_1, R_2$ be two quaternary relations such that $\pi_{-2,-1}(R_1) = \pi_{1,2}(R_2)$. We define a $\circ$-\emph{composition} $R_1 \circ R_2$ of $R_1$ and $R_2$ to be the relation defined by the formula
$$\exists y \exists z~R_1(x_1,  x_2, y, z) \wedge R_2(y, z, x_3,  x_4).$$
\end{definition}

We will write $(R_1 \circ R_2)^n$ as a shorthand for the expression $(( \cdots (((R_1 \circ R_2) \circ R_1) \circ R_2) \circ \cdots \circ R_1) \circ R_2)$ where both $R_1$ and $R_2$ occur $n$ times. 
\begin{observation}
\label{obs:circbipartite}
  Let $R_1: \subsetA \rightarrow \subsetB$ and $R_2: \subsetB \rightarrow \subsetA$ be quaternary complementary implications. Then, for all $n \geq 1$ we have both of the following: 
  \begin{itemize}
      \item $S_{2n} \equiv (R_1 \circ R_2)^n$ contains an $(\orbitO,\ldots, \orbitP)$-tuple
      iff there is a path in $\bipartite_{R_1, R_2}$ of length $2n$ from $\orbitO_L$ to $\orbitP_L$;
      \item $S_{2n+1} \equiv (R_1 \circ R_2)^n \circ R_1$ contains an $(\orbitO, \ldots, \orbitP)$-tuple
      iff there is a path in $\bipartite_{R_1, R_2}$ of length $2n+1$ from $\orbitO_L$ to $\orbitP_R$. 
  \end{itemize}
\end{observation}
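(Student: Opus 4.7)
The plan is to prove the observation by straightforward induction on $n$, unpacking how an $(\orbitO,\ldots,\orbitP)$-tuple in a $\circ$-composition factors through an intermediate orbital, and then matching each such factorization to an arc in $\bipartite_{R_1,R_2}$ via Definition~\ref{def:bipartite}. The complementarity of $R_1,R_2$ guarantees that the projections required for $\circ$-composition line up at every step, so each $S_m$ is well-defined.

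For the base case $n=1$, I would unfold Definition~\ref{def:RcircR} directly: $S_2 = R_1 \circ R_2$ contains a $(\orbitO,\ldots,\orbitP)$-tuple iff there exist witnesses $y,z$ and an orbital $\orbitQ$ with $(y,z) \in \orbitQ$ such that $R_1$ contains a $(\orbitO,\ldots,\orbitQ)$-tuple and $R_2$ contains a $(\orbitQ,\ldots,\orbitP)$-tuple. By Definition~\ref{def:bipartite}, the former is precisely the arc $\orbitO_L \to \orbitQ_R$ and the latter the arc $\orbitQ_R \to \orbitP_L$, giving a length-$2$ path from $\orbitO_L$ to $\orbitP_L$ in $\bipartite_{R_1,R_2}$. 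For $S_3 = S_2 \circ R_1$ the same unfolding, combined with the base case for $S_2$, produces a length-$3$ path ending at $\orbitP_R$.

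For the inductive step I would split on parity. In the even case, write $S_{2n} = S_{2n-1} \circ R_2$. A $(\orbitO,\ldots,\orbitP)$-tuple in $S_{2n}$ exists iff some orbital $\orbitQ$ witnesses both a $(\orbitO,\ldots,\orbitQ)$-tuple in $S_{2n-1}$ and a $(\orbitQ,\ldots,\orbitP)$-tuple in $R_2$; by the inductive hypothesis the first is a length-$(2n-1)$ path from $\orbitO_L$ to $\orbitQ_R$, and by Definition~\ref{def:bipartite} the second is the arc $\orbitQ_R \to \orbitP_L$, so the concatenation is a length-$2n$ path from $\orbitO_L$ to $\orbitP_L$. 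The odd case $S_{2n+1} = S_{2n} \circ R_1$ is completely symmetric, producing a length-$(2n+1)$ path that ends in $\vertices_R$.

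The proof is essentially bookkeeping, and the only point to watch carefully is the bipartite parity: $R_1$-arcs always go $\vertices_L \to \vertices_R$ and $R_2$-arcs go $\vertices_R \to \vertices_L$, so the side on which a path terminates is forced by the parity of its length, exactly matching the two clauses of the statement. I do not anticipate a real obstacle; the observation is the expected translation between composition of implications and walks in the associated bipartite digraph.
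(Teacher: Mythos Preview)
Your proposal is correct and follows essentially the same induction as the paper's own proof: factor each composition through an intermediate orbital and match the two halves to arcs in $\bipartite_{R_1,R_2}$. The only point the paper makes explicit that you leave implicit is the appeal to homogeneity of $\structA$ in the ``if'' direction, needed to glue a $(\orbitO,\ldots,\orbitQ)$-tuple in one factor to a $(\orbitQ,\ldots,\orbitP)$-tuple in the other when the intermediate pairs are merely in the same orbital rather than literally equal; you may want to say this once.
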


\begin{proof}
    We prove the observation by induction on $n \geq 1$.
    Notice that if $R_{1}$ contains an $(\orbitO, \ldots, \orbitA)$-tuple and $R_2$ contains an
    $(\orbitA, \ldots, \orbitP)$-tuple for some orbital $\orbitA$, then by the homogeneity of $\structA$, the relation 
    $S_{2}$ has an $(\orbitO, \ldots, \orbitP)$-tuple. On the other hand if there are no such tuples, by the definition of $S_2$, there is not an $(\orbitO, \ldots, \orbitP)$-tuple in $S_2$.

    In the induction step consider $S_{2n+1} \equiv S_{2n} \circ R_1$. If there is an $(\orbitO,\ldots, \orbitP)$-tuple in $S_{2n+1}$, then for some orbital $\orbitA$, by the definition of this relation, we have an $(\orbitO, \ldots, \orbitA)$-tuple in $S_{2n}$ and
     an
    $(\orbitA, \ldots, \orbitP)$-tuple in $R_1$. By the induction hypothesis, there is a path of length $2n$ 
    from $\orbitO_L$ to $\orbitA_L$ in $\bipartite_{R_1, R_2}$, and hence a path of length $2n+1$ from $\orbitO_L$ to $\orbitP_R$.
    On the other hand, if there is a path from $\orbitO_L$ to $\orbitP_R$ in $\bipartite_{R_1, R_2}$ of length $2n+1$, then there exists some $\orbitA$ so that there is a path of length $2n$ from $\orbitO_L$ to $\orbitA_L$ in $\bipartite_{R_1, R_2}$ and a $(\orbitA, \ldots, \orbitP)$-tuple in $R_1$.
    By the induction hypothesis, there is an $(\orbitO, \ldots, \orbitA)$-tuple in $S_{2n}$. By the homogeneity of $\structA$, there is an $(\orbitO, \ldots, \orbitP)$-tuple in $S_{2n+1}$. 
    The proof for $S_{2n}$ with $n > 1$ is analogous. We just replace $R_1$ with $R_2$ and $S_{2n+1}$ with $S_{2n}$ in the proof above.  
    It completes the proof of the observation.
\end{proof}

The next step is to extend the above observation so that it specifies what kind of a tuple we can expect, for that we need the following definition. 

\begin{definition}
\label{tuples:sorts}
A quaternary tuple $t$ is
\begin{itemize}
\item \emph{degenerated} if $t[1] = t[4]$ and $t[2] = t[3]$,
\item \emph{essentially ternary} if $t[2] = t[3]$ and $t[1] \neq t[4]$,
\item \emph{essentially quaternary} if $t[i] \neq t[j]$ whenever $i \in \{ 1,2\}$ and $j \in \{ 3,4 \}$, 
\item \emph{partially-free} if  $t[1,4] \in \orbitN$.
\item \emph{fully-free} if $(t[i], t[j]) \in \orbitN$ whenever $i \in \{ 1,2\}$ and $j \in \{ 3,4 \}$.
\end{itemize}
\end{definition}

\noindent
We now prove observation that tells us what kind of tuples we can expect in a $\circ$-composition of two relations.  

\begin{observation}
\label{obs:tuplescomposition}
Let $R_1 : \subsetA \rightarrow \subsetB$ and $R_2: \subsetB \rightarrow \subsetC$ be such that $R_1$ contains a $(\orbitA, \ldots, \orbitB)$-tuple $t_1$ and $R_2$ contains a $(\orbitB,\ldots, \orbitC)$-tuple $t_2$. Then $R_3 := R_1 \circ R_2$ is a 
$(\subsetA \rightarrow \subsetC)$-implication containing a $(\orbitA, \ldots, \orbitC)$-tuple $t_3$ which is
\begin{itemize}
\item essentially ternary if both $t_1$ and $t_2$ are essentially ternary, $\orbitA, \orbitC$ anti-reflexive and $\orbitB$ is $=$, 
\item essentially quaternary if both $t_1, t_2$ are essentially ternary and all $\orbitA, \orbitB, \orbitC$ are anti-reflexive,
\item fully-free if both $t_1, t_2$ are essentially quaternary,
\item partially-free if $t_1$ is essentially quaternary and  $t_2$ essentially ternary and $\orbitC$ is not $=$ or 
$t_1$ is essentially ternary and  $t_2$ essentially quaternary and $\orbitA$ is not $=$,
\item essentially quaternary (fully-free) if at least one of $t_1, t_2$ is essentially quaternary (fully-free) 
\item non-degenerated if at least one of them is non-degenerated.
\end{itemize}
\end{observation}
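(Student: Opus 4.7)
\emph{Proof plan.} The plan is to first verify that $R_3 := R_1 \circ R_2$ is a $(\subsetA \rightarrow \subsetC)$-implication, then to use homogeneity of $\structA$ to glue $t_1$ and $t_2$ into a composable pair producing a tuple $t_3 \in R_3$ of the required orbit type, and finally to invoke free amalgamation over the shared interface to enforce the $\orbitN$-pairs needed for the stronger clauses; the bullet points are then read off by a short case analysis.

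For the first two steps, unpacking Definition~\ref{def:implication} and Definition~\ref{def:RcircR} yields $\Pi_{1,2}(R_3) = \Pi_{1,2}(R_1)$, $\Pi_{-2,-1}(R_3) = \Pi_{-2,-1}(R_2)$, and $\subsetA + R_3 = (\subsetA + R_1) + R_2 = \subsetB + R_2 = \subsetC$, so $R_3$ is indeed the claimed implication. Next, since $(t_1[3], t_1[4])$ and $(t_2[1], t_2[2])$ both lie in the orbital $\orbitB$, homogeneity of $\structA$ supplies $\sigma \in \Aut(\structA)$ with $\sigma(t_2[1]) = t_1[3]$ and $\sigma(t_2[2]) = t_1[4]$. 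Because $R_1$ and $R_2$ are first-order definable in $\structA$ (and hence preserved by $\Aut(\structA) = \Aut(\structB)$), $\sigma(t_2) \in R_2$, and then taking $(y,z) = (t_1[3], t_1[4])$ as the existential witnesses shows that the tuple $t_3 := (t_1[1], t_1[2], \sigma(t_2[3]), \sigma(t_2[4]))$ lies in $R_3$ and satisfies $(t_3[1], t_3[2]) \in \orbitA$, $(t_3[3], t_3[4]) \in \orbitC$.

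To handle the stronger clauses I will strengthen this construction by a free amalgamation step. The finite substructures of $\structA$ induced by the entries of $t_1$ and of $\sigma(t_2)$ intersect in (at most) the set $\{t_1[3], t_1[4]\}$; taking their free amalgam, re-embedding into $\structA$, and extending the identity on the shared part to an automorphism of $\structA$ via homogeneity, I replace $\sigma(t_2)$ by an automorphic image (still in $R_2$ and still composable with $t_1$) in which every cross pair $(p,q)$ with $p \in \{t_1[1], t_1[2]\} \setminus \{t_1[3], t_1[4]\}$ and $q \in \{\sigma(t_2[3]), \sigma(t_2[4])\} \setminus \{t_1[3], t_1[4]\}$ now lies in $\orbitN$.

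The case analysis is then immediate. For example, when both $t_1, t_2$ are essentially ternary and $\orbitB$ is $=$, one has $t_1[2] = t_1[3] = t_1[4] = \sigma(t_2[3])$, so $t_3[2] = t_3[3]$, and the $\orbitN$-pair between $t_1[1]$ and $\sigma(t_2[4])$ (anti-reflexivity of $\orbitN$) gives $t_3[1] \neq t_3[4]$, so $t_3$ is essentially ternary. The remaining clauses follow similarly by combining the anti-reflexivity hypotheses on $\orbitA, \orbitB, \orbitC$ with the $\orbitN$-pairs from the previous paragraph; in particular, essentially quaternary or fully-free structure propagates from either input because the cross pairs are $\orbitN$-pairs, and non-degeneracy survives because $t_1[1] \neq t_1[4]$ or $\sigma(t_2[3]) \neq \sigma(t_2[4])$. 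The most delicate clause I expect is the essentially ternary one: when $\orbitB$ is $=$ the interface shrinks to a single point, so the crucial inequality $t_3[1] \neq t_3[4]$ cannot be extracted from the orbit types of $t_1, t_2$ alone, and it is precisely free amalgamation over the singleton interface that supplies the required $\orbitN$-pair.
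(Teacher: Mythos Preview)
Your proof is correct and follows essentially the same strategy as the paper: glue $t_1$ and $t_2$ along the $\orbitB$-interface using homogeneity, then invoke free amalgamation to force the new cross pairs into $\orbitN$, and read off the bullets. The paper carries out this construction separately in each case with explicit elements, whereas you perform a single uniform amalgamation-and-realignment step and then do the case analysis; the underlying idea is identical.
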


\begin{proof}
It is straightforward that $R_3$ is a $(\subsetA \rightarrow \subsetC)$-implication. The rest of the observation we prove case by case. 

The age of $\structA$ has free amalgamation and therefore there are $a, b, c \in A$ such that $(a,b) \in \orbitA, (b,c) \in \orbitB$ and $(a,c) \in \orbitN$. The first item follows.

For the second case we consider $a,b,c,d \in \orbitA$ such that
$(a,b,b,c)$ is isomorphic with $t_1$ and $(b,c,c,d)$ with $t_2$.
Since the age of $\structA$ has free amalgamation, we may assume that  $(a,d) \in \orbitN$. Observe that $(a,b,c,d)$ is in $R_3$ and is essentially quaternary. 

For the third case we need $a,b,c,d,e,f \in A$ such that
$(a,b,c,d)$ is isomorphic with $t_1$ and $(c,d,e,f)$ with $t_2$. Again using the fact that 
the age of $\structA$ has free  amalgamation, we may assume that  $(i,j) \in \orbitN$ whenever $i \in \{ a,b \}$ and $j \in \{ e,f \}$. Observe that $(a,b,e,f)$ is in $R_3$ and is fully free.

For the fourth case we only look at the first part. To this end we consider $a,b,c,d,e$ such that $(a,b,c,d)$ is isomorphic to $t_1$ and $(c,d,e)$ to $t_2$. Since $a \neq d$ and $d \neq e$ we may assume that $(a,e) \in \orbitN$. It follows that $(a,b,d,e)$ is a partially free and is in $\orbitN$. 

The proof of the fifth and the sixth case is straightforward and very similar to the proof of the preceding cases.  
\end{proof}

We now define a self-complementary implication.

\begin{definition}
\label{def:selfcomp}
We say that an implication $R: \subsetA \rightarrow \subsetA$ is \emph{self-complementary} if $R$ and $R$ are complementary and every non-trivial component of $\bipartite_{R_1, R_2}$ is a $(\subsetC, \subsetC)$-component.     
\end{definition}


The following will be used throughout the paper.

\begin{observation}
\label{obs:selfcomplementary}
Let $R_1: \mathcal{A} \rightarrow \mathcal{B}$ and $R_2: \mathcal{B} \rightarrow \mathcal{A}$ be complementary implications. Then $R \equiv R_1 \circ R_2$ is a self-complementary  $(\subsetA \rightarrow \subsetA)$-implication. 
\end{observation}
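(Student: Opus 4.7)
The plan is to verify the three conditions packaged into Definition~\ref{def:selfcomp} in turn. Condition (i), that $R = R_1 \circ R_2$ is a $(\subsetA \rightarrow \subsetA)$-implication, is immediate from Observation~\ref{obs:tuplescomposition} specialised to $\subsetC = \subsetA$: it even produces explicit tuples of $R$ and verifies $\subsetA + R = \subsetA$. Condition (ii), that $R$ and $R$ agree on projections, reduces to checking $\Pi_{1,2}(R) = \Pi_{-2,-1}(R)$. Unpacking the $\circ$-composition, any pair in $\Pi_{1,2}(R_1)$ extends through $R_1$ to a witness in $\Pi_{-2,-1}(R_1) = \Pi_{1,2}(R_2)$ (by complementarity), which then extends through $R_2$; hence $\Pi_{1,2}(R) = \Pi_{1,2}(R_1)$. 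Symmetrically $\Pi_{-2,-1}(R) = \Pi_{-2,-1}(R_2)$, and a final application of complementarity yields $\Pi_{1,2}(R_1) = \Pi_{-2,-1}(R_2)$ and so closes the loop.

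The substantive condition is (iii): every non-trivial strongly connected component of $\bipartite_{R,R}$ must be a $(\subsetC,\subsetC)$-component. The key structural observation is that $\bipartite_{R,R}$ admits a digraph automorphism $\sigma$ which swaps the $L$- and $R$-labels, defined by $\sigma(\orbitO^L) = \orbitO^R$ and $\sigma(\orbitO^R) = \orbitO^L$. This is an automorphism because both arc types of $\bipartite_{R,R}$ are encoded by tuples of the single relation $R$, so applying $\sigma$ turns the condition for an $L \to R$ arc into the condition for an $R \to L$ arc and vice versa. Consequently $\sigma$ permutes the strongly connected components of $\bipartite_{R,R}$ and sends non-trivial ones to non-trivial ones, so it suffices to show that $\sigma(\mathcal{N}) = \mathcal{N}$ for every non-trivial component $\mathcal{N}$. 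Equivalently, for any orbital $\orbitO$ with $\orbitO^L \in \mathcal{N}$, one must force $\orbitO^R$ into the same $\mathcal{N}$.

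The strategy for this last step is to produce a $(\orbitO, \ldots, \orbitO)$-tuple of $R$: such a tuple simultaneously yields the two arcs $\orbitO^L \to \orbitO^R$ and $\orbitO^R \to \orbitO^L$, both of which lie in the component containing $\orbitO^L$. Non-triviality of $\mathcal{N}$ means $\orbitO^L$ is on some cycle in $\bipartite_{R,R}$; by Observation~\ref{obs:circbipartite} applied first to the pair $(R,R)$ and then to the factorisation $R = R_1 \circ R_2$, this cycle translates into a closed walk through $\orbitO_L$ in $\bipartite_{R_1, R_2}$. One then combines this walk with Observation~\ref{obs:minmaxcomponents}, applied inside the relevant $(\subsetA,\subsetB)$-subgraph, together with the complementarity identities $\subsetA + R_1 = \subsetB$ and $\subsetB + R_2 = \subsetA$, to collapse the long walk to a length-$2$ detour $\orbitO_L \to \orbitQ_R \to \orbitO_L$, which is by definition a $(\orbitO, \ldots, \orbitO)$-tuple of $R$. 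This collapse, turning an arbitrary closed walk around $\orbitO_L$ in $\bipartite_{R_1, R_2}$ into a length-$2$ return, is the one step that genuinely uses complementarity and is the expected source of technical difficulty; the rest of the argument is bookkeeping through $\sigma$ and Observation~\ref{obs:tuplescomposition}.
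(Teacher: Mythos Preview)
Your reduction to showing $\sigma(\mathcal N)=\mathcal N$ via the label-swapping automorphism of $\bipartite_{R,R}$ is correct and elegant, but the final ``collapse'' step does not go through. You assert that any closed walk through $\orbitO_L$ in $\bipartite_{R_1,R_2}$ can be shortened to a length-$2$ return $\orbitO_L\to\orbitQ_R\to\orbitO_L$, thereby producing an $(\orbitO,\ldots,\orbitO)$-tuple of $R$; but neither Observation~\ref{obs:minmaxcomponents} nor the complementarity identities $\subsetA+R_1=\subsetB$, $\subsetB+R_2=\subsetA$ deliver this. Concretely, let $\bipartite_{R_1,R_2}$ consist of a $2$-cycle $\orbitA_L\leftrightarrow\orbitC_R$ together with a disjoint directed $4$-cycle $\orbitB_L\to\orbitD_R\to\orbitE_L\to\orbitF_R\to\orbitB_L$, and take $\subsetA=\{\orbitA\}$, $\subsetB=\{\orbitC\}$; one checks directly that $R_1,R_2$ are complementary implications. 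Then $R=R_1\circ R_2$ has exactly the $(\orbitO,\ldots,\orbitP)$-tuples with $(\orbitO,\orbitP)\in\{(\orbitA,\orbitA),(\orbitB,\orbitE),(\orbitE,\orbitB)\}$; in particular $R$ has no $(\orbitB,\ldots,\orbitB)$-tuple, the non-trivial component of $\bipartite_{R,R}$ containing $\orbitB_L$ is $\{\orbitB_L,\orbitE_R\}$, and $\orbitB_R$ lies in the distinct component $\{\orbitE_L,\orbitB_R\}$. So $\sigma$ does \emph{not} fix this component, and the unique out-neighbour of $\orbitB_L$ in $\bipartite_{R_1,R_2}$ is $\orbitD_R$, whose unique out-neighbour is $\orbitE_L\ne\orbitB_L$ --- there is no length-$2$ return to collapse to.

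The paper proceeds quite differently for condition~(iii): rather than attempting to produce $(\orbitO,\ldots,\orbitO)$-tuples or to show $\sigma(\mathcal N)=\mathcal N$, it starts from a non-trivial $(\subsetC,\subsetD)$-component of $\bipartite_{R_1,R_2}$, contracts every two-arc path through a vertex of $\vertices_R(\subsetD)$ into a single $R$-arc to obtain a closed walk in $\bipartite_{R,R}$ on the vertex set $\vertices_L(\subsetC)\cup\vertices_R(\subsetC)$, and then argues by a pull-back contradiction (expanding a putative larger component of $\bipartite_{R,R}$ back to a larger component of $\bipartite_{R_1,R_2}$) that this set is already a full strongly connected component.
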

\begin{proof}

The relation $R$ is clearly a $(\subsetA \rightarrow \subsetA)$-implication. Since every vertex in $\bipartite_{R_1, R_2}$ has an incoming edge we have that $\Pi_{3,4} (R) = \Pi_{1,2}(R)$, and hence $R,R$ are complementary.  

Observe now that every non-trivial maximal component in considered bipartite digraphs corresponds to a maximal in the number of pairwise different vertices closed walk.
From a closed walk in $\bipartite_{R_1, R_2}$ corresponding to a maximal $(\subsetC, \subsetD)$-component one can get a closed walk in $\bipartite_{R,R}$ corresponding to a component $(\subsetC, \subsetC)$ in $\bipartite_{R,R}$ by contracting all two arcs with a vertex from $\vertices_R(\subsetD)$ in the middle into one arc. Suppose now that it is not a maximal component, i.e. there is a $(\subsetC', \subsetC'')$-component in $\bipartite_{R,R}$ containing the $(\subsetC, \subsetC)$-component. But then from a corresponding walk one can create a closed walk in $\bipartite_{R_1, R_2}$ corresponding to a component containing vertices  $(\vertices_L(\subsetC') \cup \vertices_L(\subsetC'') \cup \vertices_R(\subsetD))$. It contradicts the maximality of 
the $(\subsetC, \subsetD)$-component in $\bipartite_{R_1, R_2}$. It follows that $R$ is a self-complementary implication.
\end{proof}

A non-trivial component in $\bipartite_{R_1, R_2}$ for two $R_1, R_2$ which agree on projections is \emph{degenerated} if all tuples in it are degenerated.
Observe that a non-trivial degenerated component contains exactly two  $(\orbitO,\orbitO,=,=, \orbitO,\orbitO)$-tuples for some orbital $\orbitO$ and therefore we will call it an $\orbitO$-degenerated component.

In the proof announced at the beginning of this section we distiguish two cases: either there is a non-trivial component in $\bipartite_{R_1, R_2}$ which is non-degenerated or all non-trivial components in $\bipartite_{R_1, R_2}$ are degenerated. We take care of the former case in Section~\ref{sect:nondegen} and of the latter in Section~\ref{sect:alldegen}.

\subsection{A non-trivial non-degenerated component} 
\label{sect:nondegen}

We start from a proposition which reduces the situation we take
care of in this section to two simple cases.

\begin{proposition}
\label{prop:nondegen_reduce}
Let  $\structB$ be a first-order expansion of a finitely bounded homogeneous symmetric binary core whose age has free amalgamation.
If $\structB$ pp-defines complementary implications $R_1:\subsetA \rightarrow \subsetB, R_2:\subsetB \rightarrow \subsetA$  such that $\bipartite_{R_1, R_2}$ contains a non-trivial non-degenerated component. Then $R$ pp-defines a self-complementary $R: \subsetA \rightarrow \subsetA$ such that $R$ contains
 \begin{itemize}
 \item a $(\orbitA, \orbitN, \orbitN, \orbitN,\orbitN, \orbitA)$-tuple for some $\orbitA \subseteq \subsetA$ and
 \item a $(\orbitB, \orbitN, \orbitN, \orbitN,\orbitN, \orbitB)$-tuple or a $(\orbitB, \orbitB, =, =,\orbitB, \orbitB)$-tuple for some $\orbitB \nsubseteq \subsetA$. 
 \end{itemize}
\end{proposition}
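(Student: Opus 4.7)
The plan is to construct $R$ as an iterated $\circ$-composition of $R_1$ and $R_2$ followed by a symmetrisation. By Observation~\ref{obs:selfcomplementary}, $R_0 := R_1 \circ R_2$ is already a self-complementary $(\subsetA \rightarrow \subsetA)$-implication, and every iterate $R_0^K$ inherits both properties. A key structural remark I would record first is that, because $\subsetA + R_1 = \subsetB$ forces every outgoing arc from $\vertices_L(\subsetA)$ to end in $\vertices_R(\subsetB)$, and symmetrically on the $R_2$-side, every strongly connected component of $\bipartite_{R_1, R_2}$ lies entirely in the $(\subsetA, \subsetB)$-subgraph or entirely in its complement. The argument then splits according to where the given non-trivial non-degenerated component $\mathcal{C}$ sits.

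Next I would normalise the distinguished non-degenerated tuple witnessing an edge of $\mathcal{C}$: by possibly applying the symmetry $R_1(x_2, x_1, x_4, x_3)$ and/or swapping the roles of $R_1$ and $R_2$, I may assume it sits in $R_1$ and is essentially ternary or essentially quaternary. Walking around $\mathcal{C}$ so as to traverse this edge at least twice and applying Observation~\ref{obs:tuplescomposition} (two essentially quaternary tuples compose to a fully-free one; two essentially ternary tuples with an anti-reflexive middle compose to an essentially quaternary one), for $K$ large enough $R_0^K$ contains a fully-free $(\orbitC, \orbitN, \orbitN, \orbitN, \orbitN, \orbitC)$-tuple for some orbital $\orbitC$ from $\mathcal{C}$. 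If $\mathcal{C}$ is inside $(\subsetA, \subsetB)$ this immediately gives (i) with $\orbitA := \orbitC \subseteq \subsetA$; if it is outside it gives the fully-free form of (ii) with $\orbitB := \orbitC \nsubseteq \subsetA$.

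For the remaining half of the conclusion I would invoke Observation~\ref{obs:minmaxcomponents} on the opposite subgraph. In the \emph{inside} case the complementary subgraph contains a non-trivial component through some $\orbitB_L$ with $\orbitB \nsubseteq \subsetA$ (existing because $\subsetA \subsetneq \Pi_{1,2}(R_1)$), and a closed walk through $\orbitB_L$ produces, after enough iterations, a $(\orbitB, \orbitB, =, =, \orbitB, \orbitB)$-degenerated tuple in $R_0^K$, witnessing (ii). In the \emph{outside} case, a non-trivial component of the $(\subsetA, \subsetB)$-subgraph through some $\orbitA_L$ with $\orbitA \subseteq \subsetA$ provides only a degenerated loop, so to produce the required fully-free shape for (i) I would build a pp-formula that augments the closed walk at $\orbitA_L$ with an existentially quantified auxiliary excursion through the outside non-degenerated component, connected back to the closed walk via a crossing arc from $\vertices_R(\Pi_{1,2}(R_2) \setminus \subsetB)$ into $\vertices_L(\subsetA)$ witnessed by a non-degenerated $R_2$-tuple.

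Finally, choosing $K$ large enough that both required tuples coexist in a single iterate, I would define $R(x_1, x_2, x_3, x_4) := R_0^K(x_1, x_2, x_3, x_4) \wedge R_0^K(x_4, x_3, x_2, x_1)$ to enforce coincidence of the orbitals at $(t[1], t[2])$ and $(t[3], t[4])$, producing the exact symmetric shapes demanded in (i) and (ii). The main obstacle I anticipate is the outside case of the preceding paragraph: since no closed walk at $\orbitA_L$ can leave $(\subsetA, \subsetB)$, the non-degeneracy must be imported through an auxiliary pp-construction rather than through the iterated composition itself, and verifying that such a construction really delivers a tuple of the exact fully-free shape $(\orbitA, \orbitN, \orbitN, \orbitN, \orbitN, \orbitA)$ with $\orbitA \subseteq \subsetA$ is the delicate technical point.
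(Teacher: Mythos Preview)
Your overall strategy---iterate $\circ$-compositions, promote the non-degenerated tuple to a fully-free one via Observation~\ref{obs:tuplescomposition}, pick up a second orbital from a disjoint component, and then symmetrise via $R(x_1,x_2,x_3,x_4) \wedge R(x_4,x_3,x_2,x_1)$---is exactly what the paper does. The normalisation of the witness tuple and the use of Observation~\ref{obs:selfcomplementary} are also the same.

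Where you diverge is in reading the $\subsetA$ in the conclusion as the \emph{same} $\subsetA$ as in the hypothesis. The paper's notation is admittedly overloaded, but from the final paragraph of its proof it is clear that the $\subsetA$ attached to the output relation is whatever proper union of components of the final symmetrised $T'$ happens to contain the orbital $\orbitA$; it need not be the original $\subsetA$. Once you allow this, your inside/outside case split evaporates: one simply lets $\orbitA$ be an orbital from the non-degenerated component (wherever that component sits), obtains the fully-free $(\orbitA,\orbitN,\orbitN,\orbitN,\orbitN,\orbitA)$-tuple by iteration, picks $\orbitB$ from any component disjoint from the one containing $\orbitA$ (such a component exists because both the $(\subsetA,\subsetB)$-subgraph and its complement contain non-trivial components, by your own confinement observation combined with Observation~\ref{obs:minmaxcomponents}), and then sets the new $\subsetA$ to be the $T'$-component of $\orbitA$.

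Consequently your ``outside case'' and the proposed auxiliary-excursion construction are unnecessary, and the technical obstacle you flag at the end---importing non-degeneracy into the $(\subsetA,\subsetB)$-side through an ad hoc pp-formula---never arises. That construction is also not worked out in your sketch, so as written the outside case is a genuine gap; but the gap disappears once you read the statement as the paper intends.
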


\begin{proof}
Let $t$ be a non-degenerated $(\orbitA, \ldots, \orbitC)$-tuple in a nontrivial $(\subsetA,\subsetA)$-component of $\bipartite_{R_1, R_2}$. Without loss of generality we may assume that either $t$ is essentially quaternary or $\orbitA$ is anti-reflexive. Indeed, if $t[2] \neq t[3]$ and $t[1] = t[4]$ we replace $R_1$ with $R_1(x_2,x_1, x_4, x_3)$,  if $\orbitC$ is $=$, then we swap $R_1$ with $R_2$ and if $\orbitA, \orbitC$ are both $=$ and $t$ is essentially ternary, then $t$ is degenerated which contradicts the asumption.

By Observation~\ref{obs:circbipartite}, it follows 
that there is $k$ such that  $R'_2 \equiv R_2 \circ (R_1 \circ R_2)^k$ 
contains  a $(\orbitB, \ldots, \orbitA)$-tuple and that 
$R \equiv R_1 \circ R'_2$ is a self-complementary implication that contains a non-degenerated  $(\orbitA, \ldots, \orbitA)$-tuple. The last tuple is non-degenerated by Observation~\ref{obs:tuplescomposition}. 
By the same observation, it follows that $R' = (R \circ R)^2$ contains an essentially quaternary  $(\orbitA, \ldots, \orbitA)$-tuple and that 
$S := (R' \circ R')^2$ has a fully-free  $(\orbitA, \ldots, \orbitA)$-tuple, i.e., a $(\orbitA, \orbitN, \orbitN, \orbitN, \orbitN, \orbitA)$-tuple.
The relation $S$ is clearly a self-complementary $(\subsetA \rightarrow \subsetA)$-implication by Observation~\ref{obs:selfcomplementary}.

By Observation~\ref{obs:minmaxcomponents}, there is a $(\subsetB, \subsetB)$-component in $\bipartite_{R_1, R_2}$ with $\subsetA \cap \subsetB = \emptyset$.
It follows that  there exists $l$ such that $S':=S^l$ contains both an
$(\orbitA, \orbitN, \orbitN, \orbitA)$-tuple and a $(\orbitB, \ldots, \orbitB)$-tuple $s$ for some $\orbitB$. If the latter 
tuple is not a $(\orbitB, \orbitB, =, =,\orbitB, \orbitB)$-tuple and $\orbitB$
is not $=$, then as above we show that $T = (((S'\circ S')^{2})^2$ has  
a $(\orbitB, \orbitN, \orbitN, \orbitN, \orbitN, \orbitB)$-tuple. If $\orbitB$ is $=$, then the non-degenerated $s$ has to be essentially quaternary. Then, again, by Observation~\ref{obs:tuplescomposition}, $T$ has a   $(\orbitB, \orbitN, \orbitN, \orbitN, \orbitN, \orbitB)$-tuple. $T$ clearly also contain a $(\orbitA, \orbitN, \orbitN, \orbitN, \orbitN, \orbitA)$-tuple and is a self-complementary implication.

Finally, observe that  $\bipartite_{T', T'}$ where $$T'(x_1, x_2, x_3, x_4) \equiv T(x_1,x_2, x_3, x_4) \wedge T(x_4,x_3, x_2, x_1)$$ contains a $(\orbitA, \ldots \orbitB)$-tuple only if $T$ contains that tuple and additionally a $(\orbitB, \ldots \orbitA)$-tuple. Since all components in $\bipartite_{T,T}$ are $(\subsetC, \subsetC)$-components the orbitals $\orbitA$ and $\orbitB$
cannot come from two different components. Indeed, if $T$ contains a 
$(\orbitA, \ldots, \orbitB)$-tuple where $\orbitB_R$ is in a component above
the component of $\orbitA_L$, then by the shape of the components it cannot have a $(\orbitB, \ldots, \orbitA)$-tuple. It follows that $\bipartite_{T', T'}$ is a union of disjoint non-trival  $(\subsetA, \subsetA)$-components each of which is a subset of some non-trivial component in $\bipartite_{T, T}$. In particular, we have that $T'$ is a $(\subsetA \rightarrow \subsetA)$-implication with $\orbitA \subseteq \subsetA$. 
\end{proof}

It is a trivial thing that $=$ is transitive. We also observe that for any orbital $\orbitO$ and for all elements $a,b,c \in A$ such that $(a,b) \in \orbitO$ and $(b=c)$; or
$a = b $ and $(b,c) \in \orbitO$
we have $(a,c) \in \orbitO$. The latter property of $=$  will be called \emph{semi-transitivity}. Obviously, semi-transitivity implies transitivity.

We will now show that in both cases in the lemma above $\structB$
is not preserved by a chain of quasi directed J\'{o}nsson operations.
Each time we will use the following definition.  
\begin{definition}
\label{def:UVconstantAB}
Let $\structA$ be a finitely bounded homogeneous symmetric binary core whose age has free amalgamation, $U,V  \subseteq [3]$ and $\orbitM, \orbitO, \orbitP \in \{ \orbitA, \orbitB \}$ for some orbitals $\orbitA, \orbitB$.
We say that $(u,v)$ in $A^3$ is a $(U,V)$-constant pair of triples satisfying $\orbitM \orbitO \orbitP(u,v)$ if all of the following hold.
\begin{enumerate}
\item \label{UVconstant:one} $(u[1], v[1]) \in \orbitM$, $(u[2], v[2]) \in \orbitO$, $(u[3], v[3]) \in \orbitP$.
\item \label{UVconstant:two} $u[i] = u[j]$ iff $i,j \in U$ and $v[i] = v[j]$ iff $i,j \in V$.
\item \label{UVconstant:three} For all other $a,b \in \{ u[1], u[2], u[3], v[1], v[2], v[3] \}$ for which the above conditions and the semi-transitivity of equality do not determine the orbital, we have $(a,b) \in \orbitN$.
\end{enumerate} 
\end{definition}

Obviously not for all $U,V \subseteq [3]$ and $\orbitM, \orbitO, \orbitP$
there are $u, v$ satisfying the conditions above. Therefore anytime we use that definition  we make sure the appropriate $u,v$ exist.


\begin{lemma}
\label{lem:bnbnoJonsson}
Let $\structB$ be a  first-order expansion of a finitely bounded homogeneous symmetric binary core $\structA$ whose age has free amalgamation. If $\structB$ pp-defines a quaternary self-complementary relation $R: \subsetA \rightarrow \subsetA$.
which contains  
\begin{itemize}
\item   a $(\orbitA, \orbitN, \orbitN, \orbitN, \orbitN, \orbitA)$-tuple and 
\item  a $(\orbitB, \orbitN, \orbitN, \orbitN, \orbitN, \orbitB)$-tuple, 
\end{itemize}
for two different orbitals $\orbitA \subseteq \subsetA$ and $\orbitB \nsubseteq \subsetA$, 
then $R$ is not preserved by any chain of quasi directed   J\'{o}nsson operations.   
\end{lemma}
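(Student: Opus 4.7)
The plan is to prove the lemma by contradiction. Suppose $(D_1, \ldots, D_n)$ is a chain of quasi directed J\'{o}nsson operations preserving $R$. First I would fix concrete representatives $a_1, \ldots, a_4 \in A$ of the $(\orbitA, \orbitN, \orbitN, \orbitN, \orbitN, \orbitA)$-tuple $t_A \in R$ and $b_1, \ldots, b_4 \in A$ of the $(\orbitB, \orbitN, \orbitN, \orbitN, \orbitN, \orbitB)$-tuple $t_B \in R$ and, using free amalgamation of the age of $\structA$, arrange additionally that $(a_i, b_j) \in \orbitN$ for all $i, j \in [4]$. Exploiting that $R$ is self-complementary and that $R:\subsetA\rightarrow\subsetA$, every tuple $r \in R$ satisfies $(r[1], r[2]) \in \subsetA$ if and only if $(r[3], r[4]) \in \subsetA$, so $R$ splits into \emph{type A} (both end pairs in $\subsetA$) and \emph{type B} (both end pairs in $\Pi_{1,2}(R) \setminus \subsetA$) tuples. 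Since $\orbitA \subseteq \subsetA$ and $\orbitB \nsubseteq \subsetA$ forces $\orbitB \cap \subsetA = \emptyset$, the tuple $t_A$ is type A and $t_B$ is type B, and any "mixed" candidate such as $(a_1, a_2, b_3, b_4)$ violates the implication and so is not in $R$.

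Next, I would use the J\'{o}nsson identities to produce a chain of $R$-tuples interpolating between $t_A$ and $t_B$. Consider $u_0 := D_1(t_A, t_A, t_A)$ and $u_j := D_j(t_A, t_B, t_B)$ for $j \in [n]$. Applying~\eqref{eq:D1}, \eqref{eq:Dii+1}, and~\eqref{eq:Dn} coordinate-wise yields $u_0 = D_1(t_A, t_A, t_B)$, $u_j = D_{j+1}(t_A, t_A, t_B)$ for $1 \le j \le n-1$, and $u_n = D_n(t_B, t_B, t_B)$; all these tuples lie in $R$ since each $D_j$ is a polymorphism. Because $u_0$ is type A and $u_n$ is type B, there must exist a transition index $k^* \in \{0, \ldots, n-1\}$ with $u_{k^*}$ type A and $u_{k^*+1}$ type B. In particular, the \emph{same} operation $D_{k^*+1}$ sends $(t_A, t_A, t_B)$ to a type-A tuple but $(t_A, t_B, t_B)$ to a type-B tuple.

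The final step is to turn this transition into a concrete violation of the implication. Examining the first two and the last two coordinates, the pair $(D_{k^*+1}(a_1, a_1, b_1), D_{k^*+1}(a_2, a_2, b_2))$ lies in $\subsetA$ whereas $(D_{k^*+1}(a_3, b_3, b_3), D_{k^*+1}(a_4, b_4, b_4))$ lies in $\Pi_{1,2}(R) \setminus \subsetA$. I would show that the $4$-tuple composed of these two pairs, or a tuple pp-definable from $R$ containing it, must itself lie in $R$, contradicting $R:\subsetA\rightarrow\subsetA$. To produce such a witness, I plan to run the dual sequence $v_j := D_j(t_B, t_A, t_A)$ -- whose endpoints are type B at $v_0 = D_1(t_B, t_B, t_B)$ by~\eqref{eq:D1} and type A at $v_n = D_n(t_A, t_A, t_A)$ by~\eqref{eq:Dn} -- alongside the sequence $(u_j)$, and to combine them through the self-complementarity of $R$ and further auxiliary $R$-tuples obtained from free amalgams of additional copies of $t_A, t_B$, using the identities~\eqref{eq:Di} and~\eqref{eq:Dii+1} to collapse unwanted middle coordinates onto diagonals.

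The main obstacle I foresee is precisely this last construction: the "obvious" middle argument that would realize the mixed output, namely $(a_1, a_2, b_3, b_4)$, is itself of forbidden mixed type and hence not in $R$, so a single application of $D_{k^*+1}$ to tuples in $R$ cannot directly produce the contradiction. Bridging this gap likely requires either (a) invoking canonicity of $D_j$ via the Ramsey-theoretic reductions standard for Fra\"{i}ss\'{e} limits of free amalgamation classes, so that the chain descends to an induced chain on the orbital algebra over $\Pi_{1,2}(R)$ where the sharp $\subsetA$-partition becomes incompatible with the Jónsson identities, or (b) carefully nesting polymorphism applications to auxiliary free amalgams of $t_A$ and $t_B$ so that the dual transitions in $(u_j)$ and $(v_j)$ jointly pp-define a mixed tuple inside $R$, yielding the required contradiction with the implication.
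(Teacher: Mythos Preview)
Your approach correctly identifies the endpoints of the J\'{o}nsson chain and a transition index, but the gap you yourself flag is real and neither of your proposed fixes (canonicity via Ramsey, or unspecified ``nesting'') closes it. The obstruction is structural: feeding $D_{k^*+1}$ three tuples of $R$ and hoping for a forbidden output requires a mixed input tuple, which $R$ cannot contain precisely because $R:\subsetA\rightarrow\subsetA$. Canonicity is not available here either --- the lemma makes no Ramsey assumption on $\structA$.

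The paper avoids this by working at a different level. Rather than tracking $4$-tuples $D_j(t_A,t_B,t_B)\in R$, it tracks \emph{pairs} $(D_i(u),D_i(v))\in A^2$ for carefully chosen triples $u,v\in A^3$, and proves by induction on $i$ that such pairs lie in $\subsetA$. The induction invariant covers two configurations (called $([3],[2])$- and $([3],\{2,3\})$-constant pairs satisfying $\orbitA\orbitA\orbitB$ and $\orbitA\orbitB\orbitB$ respectively), designed so that identity~\eqref{eq:Dii+1} transports one to the other.

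The key device replacing your missing mixed tuple is this: given $(u,v)$ of the second kind, free amalgamation produces auxiliary triples $s,t$ and a single fresh element $u'[2]$ so that each of the three rows $(s[j],t[j],u'[j],v[j])$, $j\in[3]$, is one of the two known tuples $t_A,t_B\in R$. Now apply $D_i$ row-wise: since $(D_i(s),D_i(t))\in\subsetA$ (from the first configuration, already in hand) and $\subsetA+R=\subsetA$, the implication forces $(D_i(u'),D_i(v))\in\subsetA$; identity~\eqref{eq:Di} then gives $D_i(u')=D_i(u)$ because $u,u'$ differ only in the middle coordinate while $u[1]=u[3]$. A final pass at $i=n$ using~\eqref{eq:Dn} forces $(D_n(u),D_n(v))\in\subsetA$ for a pair with $(u[j],v[j])\in\orbitB$ for all $j$, contradicting preservation of $\orbitB$. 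The point is that the implication $R$ is used not to contain a mixed tuple but to \emph{transport} membership in $\subsetA$ from one pair to another, with the J\'{o}nsson identities absorbing the single-coordinate swap that makes all rows legal.
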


\begin{proof}
Suppose that there is a chain $(D_1, \ldots,D_n)$ of quasi directed J\'{o}nsson operations preserving $\structB$ and $R$. We will show that this assumption contradicts the fact that $\Pol(\structB)$ preserves $\orbitB$ which should be preserved since $structB$ contains all orbitals w.r.t $\Aut(\structA)$.

At various steps of the proof we simultaneously consider three different cases: 
\begin{enumerate}
\item both $\orbitA$ and $\orbitB$ are anti-reflexive,
\item only $\orbitA$ is anti-reflexive,
\item only $\orbitB$ is anti-reflexive.
\end{enumerate}

In the  following claim we consider two different types of $(U, V)$-constant pairs of triples. Inspect Figure~\ref{fig:bnbextension} and observe that $t,s$ is a  
$([3],[2])$-constant pair of triples  in $A$
satisfying $\orbitA \orbitA \orbitB(u,v)$
and that $u,v$ is a 
$([3],\{2,3\})$-constant pair of triples $(u,v)$ in $A$ 
satisfying $\orbitA\orbitB\orbitB(u,v)$.
According to Definition~\ref{def:UVconstantAB}  we have that 
 $\orbitO$ is
\begin{enumerate}
\item   $\orbitN$  if  $\orbitA$ and $\orbitC$ are anti-reflexive,
\item  $\orbitB$ if $\orbitA$ is $=$ and $\orbitB$ is anti-reflexive, 
\item $\orbitA$ if $\orbitC$ is $=$ and $\orbitA$ is anti-reflexive.
\end{enumerate}

We are now ready to formulate and prove the claim whose proof is a large part of the proof of the lemma.

\begin{claim}
\label{claim:bnbnoJonsson}
For all $i \in [n]$ we have both of the following.
\begin{itemize}
\item  For all $([3],[2])$-constant pairs of triples $(u,v)$ in $A$ 
satisfying $\orbitA \orbitA \orbitB(u,v)$ we have that $(D_i(u), D_i(v)) \in \subsetA$. 
\item For all $([3],\{2,3\})$-constant pairs of triples $(u,v)$ in $A$ 
satisfying $\orbitA\orbitB\orbitB(u,v)$ we have that  $(D_i(u), D_i (v)) \in \subsetA$. 
\end{itemize}
\end{claim}

\begin{proof}
We prove the claim by the induction on $i \in [n]$. 

\textbf{(BASE CASE)} In the base case we cope with $i = 1$. 
Let $(u,v)$ be a $([3],[2])$-constant pair of triples in $A$ satisfying  
$\orbitA \orbitA \orbitB(u,v)$. We now show that there is $u' \in A^3$ such as $u$ with a difference that $(u'[3],v[3]) \in \orbitA$.
By the fact that $\structB$ has free amalgamation over $\orbitN$ we have that there exists $p, r \in A^3$ satisfying the conditions from  Figure~\ref{fig:cncbasestep}. Since $\structA$ is homogeneous, there exists an automorphism $\alpha$ of $\structA$ such that $\alpha(u[i])= \alpha(p[i])$ and $\alpha(v[i]) = \alpha(r[i])$ for $i \in [3]$. Now $\alpha^{-1}(p'[3])$ is the desired $u'[3]$.
Since $\structB$ contains $\orbitA$ it follows that 
$(D_1(u'),D_1(v)) \in \orbitA$. By~(\ref{eq:D1}), we have $D_1(u') = D_1(u)$, and hence 
$(D_1(u),D_1(v)) \in \orbitA$. Since $\orbitA \subseteq \subsetA$,it completes the proof of the base case for the first item.


\begin{figure}
\tikz {
\node (u) at (0,4) {$p[1] = p[2] = p[3]$};
\node (v12) at (3,4) {$r[1] = r[2]$};
\node (v3) at (2,2) {$r[3]$};
\node (u3prim) at (0,1) {$p'[3]$};

\draw (u) -- node[above] {$\orbitA$} (v12);
\draw (u) -- node[above] {$\orbitB$} (v3);
\draw (u) -- node[left] {$\orbitO$} (u3prim);
\draw (v12) -- node[left] {$\orbitO$} (v3);
\draw (u3prim) -- node[above] {$\orbitA$} (v3);
\path (u3prim) edge [out=290, in=270] node[below] {$\orbitP$} (v12)
}
\centering
 \caption{A structure in $\Age(\structA)$ needed in the proof of the base case of Claim~\ref{claim:bnbnoJonsson}. If both $\orbitA$ and $\orbitB$ are anti reflexive, then $\orbitO$ and $\orbitP$ are $\orbitN$. If $\orbitA$ is $=$, then $\orbitO$ and $\orbitP$ are $\orbitB$. If $\orbitB$ is $=$, then $\orbitO$ is $\orbitA$ and $\orbitP$ is $\orbitN$.}
\label{fig:cncbasestep}
\end{figure}

\begin{figure}
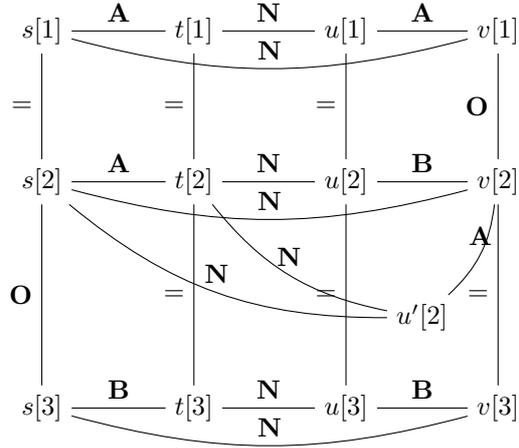

\tikz {
\node (s1) at (0,5) {$s[1]$};
\node (s2) at (0,3) {$s[2]$};
\node (s3) at (0,0) {$s[3]$};
\node (t1) at (2,5) {$t[1]$};
\node (t2) at (2,3) {$t[2]$};
\node (t3) at (2,0) {$t[3]$};
\node (u1) at (4,5) {$u[1]$};
\node (u2) at (4,3) {$u[2]$};
\node (u3) at (4,0) {$u[3]$};
\node (v1) at (6,5) {$v[1]$};
\node (v2) at (6,3) {$v[2]$};
\node (v3) at (6,0) {$v[3]$};
\node (u2prim) at (5,1.2) {$u'[2]$};

\draw (s1) -- node[above] {$\orbitA$} (t1);
\draw (s2) -- node[above] {$\orbitA$} (t2);
\draw (s3) -- node[above] {$\orbitB$} (t3);

\draw (t1) -- node[above] {$\orbitN$} (u1);
\draw (t2) -- node[above] {$\orbitN$} (u2);
\draw (t3) -- node[above] {$\orbitN$} (u3);

\draw (u1) -- node[above] {$\orbitA$} (v1);
\draw (u2) -- node[above] {$\orbitB$} (v2);
\draw (u3) -- node[above] {$\orbitB$} (v3);

\draw (s1) -- node[left] {$=$} (s2);
\draw (s2) -- node[left] {$\orbitO$} (s3);

\draw (t1) -- node[left] {$=$} (t2);
\draw (t2) -- node[left] {$=$} (t3);

\draw (u1) -- node[left] {$=$} (u2);
\draw (u2) -- node[left] {$=$} (u3);

\draw (v1) -- node[left] {$\orbitO$} (v2);
\draw (v2) -- node[left] {$=$} (v3);

\path (u2prim) edge[bend right=20] node[above] {$\orbitA$} (v2);
\path (u2prim) edge[bend left=20] node[above] {$\orbitN$} (s2);
\path (u2prim) edge[bend left=20] node[above] {$\orbitN$} (t2);

\path (s1) edge[bend right=15] node[above] {$\orbitN$} (v1);
\path (s2) edge[bend right=15] node[above] {$\orbitN$} (v2);
\path (s3) edge[bend right=15] node[above] {$\orbitN$} (v3)
}
\centering
\caption{Since the age of $\structA$ has free amalgamation over $\orbitN$,  it is straightforward to show that there are four vectors $s,t,u,v \in A^3$ and a single additional element $u'[2]$ described by the diagram above. By homogeneity of $\structA$ we may assume that  $u,v \in A^3$ are the vectors from the proof of  Claim~\ref{claim:bnbnoJonsson}. For edges not depicted in the picture the label either follows by the depicted ones and the semi-transitivity of equality or it is $\orbitN$.}
\label{fig:bnbextension}
\end{figure}


For the second bullet, let $(u,v)$ be a $([3],\{2,3\})$-constant pair of triples 
satisfying $\orbitA \orbitB \orbitB (u,v)$. 
Consult now Figure~\ref{fig:bnbextension} to see that there exists 
a  $([2],[3])$-constant pair of triples $s,t \in A^3$ satisfying  
$\orbitA \orbitA \orbitC (s,t)$
such that 
\begin{itemize}
\item $a_1 =(s[1], t[1], u[1], v[1])$ is a $(\orbitA, \orbitN, \orbitN,\orbitN, \orbitN, \orbitA)$-tuple, 
\item $a_2 = (s[2], t[2], u[2], v[2])$ is a $(\orbitA, \orbitN, \orbitN,\orbitN, \orbitN, \orbitB)$-tuple, and
\item $a_3 = (s[3], t[3], u[3], v[3])$ is a $(\orbitB, \orbitN, \orbitN,\orbitN, \orbitN, \orbitB)$-tuple.
\end{itemize}
Let now $u'$ be as $u$ except for the second coordinate so that the tuple
$a'_2 = (s[2], t[2], u'[2], v[2])$ is a $(\orbitA, \orbitN, \orbitN,\orbitN, \orbitN, \orbitA)$-tuple. 
Now all $a_1, a'_2, a_3$ are tuples in $R$.
Since $(D_1(s), D_1(t)) \in \subsetA$ and $\subsetA + R = \subsetA$ we have that 
$(D_1(u'), D_1(v)) \in \subsetA$. By~(\ref{eq:Di}) we have $D_1(u') = D_1(u)$, and hence
$(D_1(u), D_1(v)) \in \subsetA$.
It completes the base case.  

\smallskip
\noindent
\textbf{(INDUCTION STEP)}
For the induction step, consider any $([3],[2])$-constant pair of triples $(u,v)$
in $A$
satisfying $\orbitA \orbitA \orbitB (u,v)$. We want to show that
$(D_{i+1}(u), D_{i+1}(v)) \in \subsetA$.
To that end consider  $v'$ such that $v'[1] =  v[1]$ and $v'[2] = v'[3] = v[3]$.
Observe that $(u,v')$ is a $([3],\{ 2,3 \})$-constant pair of triples in $A$ satisfying $\orbitA \orbitB \orbitB (u,v')$. By the induction hypothesis, it follows that ($D_i(u),D_i(v')) \in \subsetA$.
By~(\ref{eq:Dii+1}), we have $D_{i}(v) = D_{i+1}(v')$
and clearly $D_{i+1}(u) = D_i(u)$. It follows that 
$(D_{i+1}(u), D_{i+1}(v)) \in \subsetA$.
Now, as in the base case we show that
$(D_{i+1}(u), D_{i+1}(v)) \in \subsetA$ for all $([3], \{ 2,3 \})$-constant 
pair of triples $(u,v)$ in $\orbitA$ satisfying $\orbitA \orbitB \orbitB(u,v)$. It completes the proof of the claim.  
\end{proof}

Finally we have to show that $(D_n(u),D_n(v)) \in \subsetA$ for some $([3], [3])$-constant pair of triples $u,v$ satisfying $\orbitB \orbitB \orbitB (u,v)$.
Since the age of $\structA$ has free amalgamation and the structure is homogeneous, there exists a $(\{ 2, 3\}, [3])$-constant pair of triples $(s,t)$ in $A$ satisfying  
$\orbitA \orbitB \orbitB (s,t)$
such that 
\begin{itemize}
\item $a_1 =(s[1], t[1], u[1], v[1])$ is a $(\orbitA, \orbitN, \orbitN,\orbitN, \orbitN, \orbitB)$-tuple, 
\item $a_2 = (s[2], t[2], u[2], v[2])$ is a $(\orbitB, \orbitN, \orbitN,\orbitN, \orbitN, \orbitB)$-tuple, and
\item $a_3 = (s[3], t[3], u[3], v[3])$ is a $(\orbitB, \orbitN, \orbitN,\orbitN, \orbitN, \orbitB)$-tuple
\end{itemize}
(See~Figure~\ref{fig:bnbfinalstep}.)
Let now $u'$ be as $u$ except for  $u'[1]$ which is such that 
$a'_1 = (s[1], t[1], u'[1], v[1])$ is a $(\orbitA, \orbitN, \orbitN,\orbitN, \orbitN, \orbitA)$-tuple. Now all $a'_1, a_2, a_3$ are tuples in $R$.
Since $(D_n(s), D_n(t)) \in \subsetA$ and $\subsetA + R = \subsetA$ we have that 
$(D_n(u'), D_n(v)) \in \subsetA$. By~(\ref{eq:Dn}), we have $D_n(u') = D_n(u)$. It follows that
$(D_n(u), D_n(v)) \in \subsetA$ for $(u,v)$ satisfying $\orbitB \orbitB \orbitB (u,v)$. Since $\orbitB \subsetneq \subsetA$, we arrived at a contradiction. It completes the proof of the lemma. 
\end{proof}

\begin{figure}
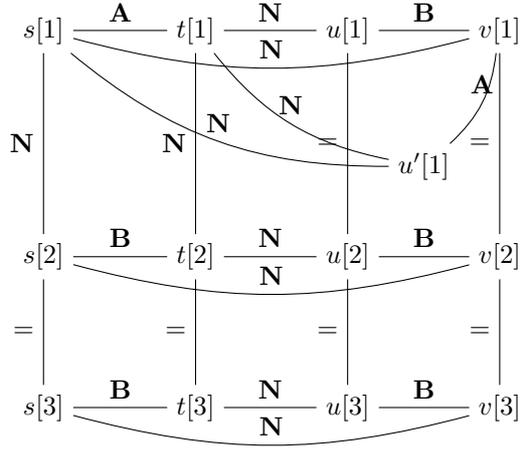

\tikz {
\node (s1) at (0,6) {$s[1]$};
\node (s2) at (0,3) {$s[2]$};
\node (s3) at (0,1) {$s[3]$};
\node (t1) at (2,6) {$t[1]$};
\node (t2) at (2,3) {$t[2]$};
\node (t3) at (2,1) {$t[3]$};
\node (u1) at (4,6) {$u[1]$};
\node (u2) at (4,3) {$u[2]$};
\node (u3) at (4,1) {$u[3]$};
\node (v1) at (6,6) {$v[1]$};
\node (v2) at (6,3) {$v[2]$};
\node (v3) at (6,1) {$v[3]$};
\node (u2prim) at (5,4.2) {$u'[1]$};

\draw (s1) -- node[above] {$\orbitA$} (t1);
\draw (s2) -- node[above] {$\orbitB$} (t2);
\draw (s3) -- node[above] {$\orbitB$} (t3);

\draw (t1) -- node[above] {$\orbitN$} (u1);
\draw (t2) -- node[above] {$\orbitN$} (u2);
\draw (t3) -- node[above] {$\orbitN$} (u3);

\draw (u1) -- node[above] {$\orbitB$} (v1);
\draw (u2) -- node[above] {$\orbitB$} (v2);
\draw (u3) -- node[above] {$\orbitB$} (v3);

\draw (s1) -- node[left] {$\orbitN$} (s2);
\draw (s2) -- node[left] {$=$} (s3);

\draw (t1) -- node[left] {$\orbitN$} (t2);
\draw (t2) -- node[left] {$=$} (t3);

\draw (u1) -- node[left] {$=$} (u2);
\draw (u2) -- node[left] {$=$} (u3);

\draw (v1) -- node[left] {$=$} (v2);
\draw (v2) -- node[left] {$=$} (v3);

\path (u2prim) edge[bend right=20] node[above] {$\orbitA$} (v1);
\path (u2prim) edge[bend left=20] node[above] {$\orbitN$} (s1);
\path (u2prim) edge[bend left=20] node[above] {$\orbitN$} (t1);

\path (s1) edge[bend right=15] node[above] {$\orbitN$} (v1);
\path (s2) edge[bend right=15] node[above] {$\orbitN$} (v2);
\path (s3) edge[bend right=15] node[above] {$\orbitN$} (v3)
}
\centering
\caption{Triples $s,t,u,v$ and $u'$ used for a final step in the proof of Lemma~\ref{lem:bnbnoJonsson}. If it does not follows from the labels in the diagram and the semi-transitivity of equality, all omitted edges are labelled with $\orbitN$.} 
\label{fig:bnbfinalstep}
\end{figure}

The following lemma is a version of Lemma~\ref{lem:bnbnoJonsson} but with a slightly different condition on the second tuple and a bit more complicated proof. 


\begin{lemma}
\label{lem:beqbnoJonsson}
Let $\structB$ be a  first-order expansion of a finitely bounded homogeneous symmetric binary core $\structA$ whose age has free amalgamation. If $\structB$ pp-defines a self-complementary implication $R: \subsetA \rightarrow \subsetA$ containing  
\begin{itemize}
\item a $(\orbitA, \orbitN, \orbitN, \orbitN, \orbitN, \orbitA)$-tuple and 
\item  a $(\orbitB, \orbitB, =, = \orbitB,  \orbitB)$-tuple, 
\end{itemize}
for some orbitals $\orbitA \subseteq \subsetA$ and $\orbitB \nsubseteq \subsetA$, then $\structB$ is not preserved by any chain of quasi directed J\'{o}nsson operations. 
\end{lemma}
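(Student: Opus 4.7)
The proof mirrors that of Lemma~\ref{lem:bnbnoJonsson}: assume that some chain $(D_1, \ldots, D_n)$ of quasi directed J\'onsson operations preserves $\structB$, establish an analogue of Claim~\ref{claim:bnbnoJonsson} by induction on $i \in [n]$, and extract the contradiction by constructing a matrix whose columns $s, t, u', v$ force $(D_n(u), D_n(v)) \in \subsetA$ for a $([3],[3])$-constant pair $(u,v)$ satisfying $\orbitB\orbitB\orbitB(u,v)$. Since $D_n$ preserves $\orbitB$ coordinate-wise and $(u,v)$ is $\orbitB$-constant, we simultaneously have $(D_n(u), D_n(v)) \in \orbitB$, so $\orbitB \cap \subsetA \neq \emptyset$ and hence $\orbitB \subseteq \subsetA$, contradicting $\orbitB \nsubseteq \subsetA$.

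The only essential change from Lemma~\ref{lem:bnbnoJonsson} is how the $\orbitB$-tuples of $R$ are used in the matrix. In Lemma~\ref{lem:bnbnoJonsson} the rows $a_2, a_3$ at the $\orbitB$-positions of the matrix were fully free $\orbitB$-tuples, which kept $(s,t)$ entirely fresh relative to $(u,v)$. Under the present hypothesis $R$ contains only the degenerated $\orbitB$-tuple $(\orbitB, \orbitB, =, =, \orbitB, \orbitB)$, and requiring $a_k \in R$ at a $\orbitB$-position therefore forces $s[k] = v[k]$ and $t[k] = u[k]$. The auxiliary triples $(s, t)$ thus inherit the entries of $(u,v)$ at the $\orbitB$-positions and are no longer a $(U,V)$-constant pair of triples in the sense of Definition~\ref{def:UVconstantAB}. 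Consequently Claim~\ref{claim:bnbnoJonsson} must be broadened to cover ``matched'' pairs of triples where the $\orbitB$-positions of $(s,t)$ coincide with the corresponding coordinates of $(u,v)$, while the $\orbitA$-position remains fresh.

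The first-bullet parts of the extended claim go through verbatim, since they rely only on orbital preservation together with \eqref{eq:D1} and \eqref{eq:Dii+1}. The second-bullet parts substitute the degenerated $\orbitB$-tuple for the fully free one: at $\orbitB$-positions the matrix row becomes $(v[k], u[k], u[k], v[k])$, while at the $\orbitA$-position the row $a_1$ (and the modified row $a'_2$ at the $\orbitA$-position of the base case) is still a fully free $\orbitA$-tuple supported on fresh elements $s[1], t[1], u'[1]$. Free amalgamation over $\orbitN$ still realises such configurations because the imposed equalities are compatible with the $\orbitN$-defaults on the fresh elements and with the remaining constraints. For the final step take $(u,v)$ to be $([3],[3])$-constant with $\orbitB\orbitB\orbitB$, set $s[2] = s[3] = v[1]$ and $t[2] = t[3] = u[1]$, and choose fresh $s[1], t[1], u'[1]$ with $(s[1], t[1]), (u'[1], v[1]) \in \orbitA$ and all remaining cross-pairs in $\orbitN$. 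Then $a_1$ is a fully free $\orbitA$-tuple and $a_2, a_3$ are degenerated $\orbitB$-tuples, so all three are in $R$. The extended claim at step $n$ applied to the matched pair $(s,t)$ yields $(D_n(s), D_n(t)) \in \subsetA$; propagation through $R$ gives $(D_n(u'), D_n(v)) \in \subsetA$; and \eqref{eq:Dn} gives $D_n(u') = D_n(u)$, producing the desired contradiction.

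The principal technical obstacle is the broadening of Claim~\ref{claim:bnbnoJonsson} together with the verification that, at every step of the induction, the matrices used remain realisable under free amalgamation once $(s,t)$ is forced to share coordinates with $(u,v)$ at the $\orbitB$-positions. Intuitively this works because the shared positions contribute only entries already present in $(u,v)$, and the fully free $\orbitA$-row $a_1$ constrains only the newly introduced fresh elements at position~1, for which $\orbitN$ with all existing elements can always be arranged.
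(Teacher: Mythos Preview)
Your approach is essentially the paper's, and you have correctly identified the central difficulty and its remedy: the degenerated $\orbitB$-tuple forces $s[k]=v[k]$, $t[k]=u[k]$ at $\orbitB$-positions, so the auxiliary pair $(s,t)$ is no longer of the $(U,V)$-type covered by Claim~\ref{claim:bnbnoJonsson}, and the claim must be broadened. Your description of the final step is correct and matches the paper's Figure~\ref{fig:beqbfinalstep}.

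There is, however, one genuine gap. You assert that ``the first-bullet parts of the extended claim go through verbatim, since they rely only on orbital preservation together with~\eqref{eq:D1} and~\eqref{eq:Dii+1}.'' This is not so. The broadened first bullet must cover pairs $(u,v)$ that are, in the paper's terminology, $([2],[2])$-constant with $\orbitA\orbitA\orbitB(u,v)$: that is, $u[1]=u[2]\neq u[3]$ and $v[1]=v[2]\neq v[3]$. For such a pair, applying~\eqref{eq:D1} to collapse $u[3]$ to $u[1]$ (or $v[3]$ to $v[1]$) leaves you with a pair satisfying $\orbitA\orbitA\orbitN$ rather than $\orbitA\orbitA\orbitA$, so orbital preservation gives no control over $(D_1(u),D_1(v))$. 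The paper treats this case with its own matrix argument (Figure~\ref{fig:beqb22extension}): rows $a_1,a_2$ are fully free $\orbitA$-tuples and $a_3$ is the degenerated $\orbitB$-tuple, and the auxiliary $(s,t)$ is $([2],[3])$-constant, which by symmetry reduces to the original $([3],[2])$ case. A parallel extra case is needed in the second bullet for $(\{2,3\},\{2,3\})$-constant pairs (Figure~\ref{fig:beqb2323extension}). Thus the paper's Claim~\ref{claim:beqbnoJonsson} carries four cases per step, with the dependency chain $([3],[2])\Rightarrow([2],[2])\Rightarrow([3],\{2,3\})\Rightarrow(\{2,3\},\{2,3\})$, and the final step uses the last of these. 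Your sketch is missing the second and fourth links of this chain.
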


\begin{proof}
The proof goes along the lines of the proof of Lemma~\ref{lem:bnbnoJonsson}. We assume on the contrary that there exists a chain $(D_1, \ldots,D_n)$ of quasi directed J\'{o}nsson operations preserving $R$ and show that it contradicts the fact that $\structB$ contains all orbitals.


In the formulation of the claim below we deal with   $([3],[2])$-constant pairs of triples $(u,v)$ in $A$
satisfying $\orbitA\orbitA\orbitB(u,v)$ as well as  
$([3],\{2,3\})$-constant  pairs of triples $(u,v)$ in $A$ 
satisfying $\orbitA\orbitB\orbitB(u,v)$ of which we already know they exist --- see the discussion above Claim~\ref{claim:bnbnoJonsson}.
For other $(U,V)$-constant tuples in the claim below consult Figures~\ref{fig:beqb22extension} and~\ref{fig:beqb2323extension}. Indeed, $(u,v)$ in the former figure is a 
$([2], [2])$-constant pair of triples 
satisfying $\orbitA\orbitA\orbitB(u,v)$ while
$(u,v)$ in the latter is a $(\{2,3\},$
$ \{2,3\})$-constant pair of triples in $A$ satisfying  $\orbitA\orbitB\orbitB(s,t)$.

\begin{claim}
\label{claim:beqbnoJonsson}
For all $i \in [n]$ we have both of the following.
\begin{itemize}
\item  For all $([3],[2])$-constant and all $([2], [2])$-constant pairs of triples $(u,v)$ in $A$
satisfying $\orbitA\orbitA\orbitB(u,v)$ it is the truth that
$(D_i(u), D_i(v)) \in \orbitA$. 
\item For all $([3],\{2,3\})$-constant and all $(\{2,3\}, \{2,3\})$-constant  pairs of triples $(u,v)$ in $A$ 
satisfying $\orbitA\orbitB\orbitB(u,v)$ it is the truth that
$(D_i(u), D_i (v)) \in \orbitA$.  
\end{itemize}
\end{claim}

\begin{figure}
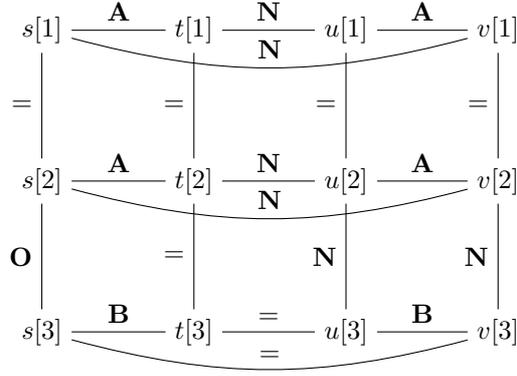

\tikz {
\node (s1) at (0,5) {$s[1]$};
\node (s2) at (0,3) {$s[2]$};
\node (s3) at (0,1) {$s[3]$};
\node (t1) at (2,5) {$t[1]$};
\node (t2) at (2,3) {$t[2]$};
\node (t3) at (2,1) {$t[3]$};
\node (u1) at (4,5) {$u[1]$};
\node (u2) at (4,3) {$u[2]$};
\node (u3) at (4,1) {$u[3]$};
\node (v1) at (6,5) {$v[1]$};
\node (v2) at (6,3) {$v[2]$};
\node (v3) at (6,1) {$v[3]$};

\draw (s1) -- node[above] {$\orbitA$} (t1);
\draw (s2) -- node[above] {$\orbitA$} (t2);
\draw (s3) -- node[above] {$\orbitB$} (t3);

\draw (t1) -- node[above] {$\orbitN$} (u1);
\draw (t2) -- node[above] {$\orbitN$} (u2);
\draw (t3) -- node[above] {$=$} (u3);

\draw (u1) -- node[above] {$\orbitA$} (v1);
\draw (u2) -- node[above] {$\orbitA$} (v2);
\draw (u3) -- node[above] {$\orbitB$} (v3);

\draw (s1) -- node[left] {$=$} (s2);
\draw (s2) -- node[left] {$\orbitO$} (s3);

\draw (t1) -- node[left] {$=$} (t2);
\draw (t2) -- node[left] {$=$} (t3);

\draw (u1) -- node[left] {$=$} (u2);
\draw (u2) -- node[left] {$\orbitN$} (u3);

\draw (v1) -- node[left] {$=$} (v2);
\draw (v2) -- node[left] {$\orbitN$} (v3);


\path (s1) edge[bend right=15] node[above] {$\orbitN$} (v1);
\path (s2) edge[bend right=15] node[above] {$\orbitN$} (v2);
\path (s3) edge[bend right=15] node[above] {$=$} (v3)
}
\centering
\caption{A substructure of $\structA$ needed for the proof of the base case of induction in the proof of Claim~\ref{claim:beqbnoJonsson}.} 
\label{fig:beqb22extension}
\end{figure}


\begin{proof}
We prove the claim by the induction on $i \in [n]$.

\textbf{(BASE CASE)} 
 The fact that for all  $([3],[2])$-constant pairs of triples $(u,v)$
 in $A$ satisfying  
$\orbitA \orbitA \orbitB(u,v) $ we have $(D_1(u),D_1(v)) \in \orbitA$ has been already proved while proving Claim~\ref{claim:bnbnoJonsson}.

Let now  $(u,v)$ be any $([2],[2])$-constant pair of triples satisfying $\orbitA \orbitA \orbitB(u,v)$. 
Since the age of $\structA$ has free amalgamation (over $\orbitN$) and the structure is homogeneous,  there exists a $([2],[3])$-constant pair of triples $s,t$ in $A$ (as in Figure~\ref{fig:beqb22extension})
such that
\begin{itemize}
\item $a_1 =(s[1], t[1], u[1], v[1])$ is a $(\orbitA, \orbitN, \orbitN,\orbitN, \orbitN, \orbitA)$-tuple, 
\item $a_2 = (s[2], t[2], u[2], v[2])$ is a $(\orbitA, \orbitN, \orbitN,\orbitN, \orbitN, \orbitA)$-tuple, and
\item $a_3 = (s[3], t[3], u[3], v[3])$ is a $(\orbitB, \orbitB, =,=, \orbitB, \orbitB)$-tuple.
\end{itemize}
Hence all $a_1, a_2, a_3 \in R$. Since $(D_1(s), D_1(t))\in \subsetA$ and $\subsetA + R = \subsetA$ we have $(D_1(u), D_1(v)) \in \orbitA$. It completes the proof of the base case for the first bullet.

\begin{figure}
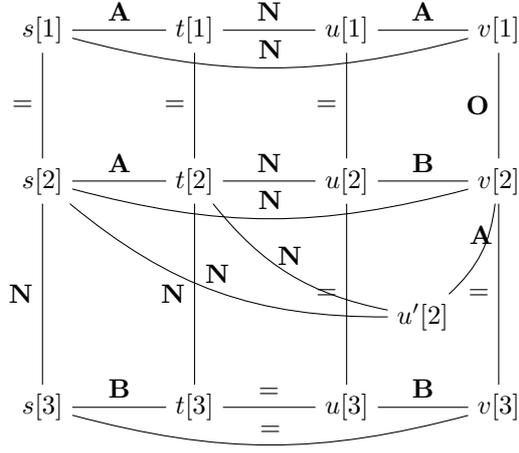

\tikz {
\node (s1) at (0,5) {$s[1]$};
\node (s2) at (0,3) {$s[2]$};
\node (s3) at (0,0) {$s[3]$};
\node (t1) at (2,5) {$t[1]$};
\node (t2) at (2,3) {$t[2]$};
\node (t3) at (2,0) {$t[3]$};
\node (u1) at (4,5) {$u[1]$};
\node (u2) at (4,3) {$u[2]$};
\node (u3) at (4,0) {$u[3]$};
\node (v1) at (6,5) {$v[1]$};
\node (v2) at (6,3) {$v[2]$};
\node (v3) at (6,0) {$v[3]$};
\node (u2prim) at (5,1.2) {$u'[2]$};

\draw (s1) -- node[above] {$\orbitA$} (t1);
\draw (s2) -- node[above] {$\orbitA$} (t2);
\draw (s3) -- node[above] {$\orbitB$} (t3);

\draw (t1) -- node[above] {$\orbitN$} (u1);
\draw (t2) -- node[above] {$\orbitN$} (u2);
\draw (t3) -- node[above] {$=$} (u3);

\draw (u1) -- node[above] {$\orbitA$} (v1);
\draw (u2) -- node[above] {$\orbitB$} (v2);
\draw (u3) -- node[above] {$\orbitB$} (v3);

\draw (s1) -- node[left] {$=$} (s2);
\draw (s2) -- node[left] {$\orbitN$} (s3);

\draw (t1) -- node[left] {$=$} (t2);
\draw (t2) -- node[left] {$\orbitN$} (t3);

\draw (u1) -- node[left] {$=$} (u2);
\draw (u2) -- node[left] {$=$} (u3);

\draw (v1) -- node[left] {$\orbitO$} (v2);
\draw (v2) -- node[left] {$=$} (v3);

\path (u2prim) edge[bend right=20] node[above] {$\orbitA$} (v2);
\path (u2prim) edge[bend left=20] node[above] {$\orbitN$} (s2);
\path (u2prim) edge[bend left=20] node[above] {$\orbitN$} (t2);

\path (s1) edge[bend right=15] node[above] {$\orbitN$} (v1);
\path (s2) edge[bend right=15] node[above] {$\orbitN$} (v2);
\path (s3) edge[bend right=15] node[above] {$=$} (v3)
}
\centering
\caption{Triples $u,v,s,t$ and $u'$ which play a role in the proof of
the  the base case of induction in Claim~\ref{claim:beqbnoJonsson}.}
\label{fig:beqb323extension}
\end{figure}

For the second bullet, let $(u,v)$ be now a $([3],\{2,3\})$-constant pair
of triples in $A$ satisfying $\orbitA \orbitB \orbitB(u,v)$.
Again, since the age of $\structA$ has free amalgamation and the structure is homogeneous,  there exists a $([2],[2])$-constant pair of triples $(s,t)$ in $A$ (as in Figure~\ref{fig:beqb323extension}) 
such that 
\begin{itemize}
\item $a_1 =(s[1], t[1], u[1], v[1])$ is a $(\orbitA, \orbitN, \orbitN,\orbitN, \orbitN, \orbitA)$-tuple, 
\item $a_2 = (s[2], t[2], u[2], v[2])$ is a $(\orbitA, \orbitN, \orbitN,\orbitN, \orbitN, \orbitB)$-tuple, and
\item $a_3 = (s[3], t[3], u[3], v[3])$ is a $(\orbitB, \orbitB, =,=, \orbitB, \orbitB)$-tuple.
\end{itemize}

Let now $u'$ be as $u$ except for the second coordinate. We choose $u'[2]$ so that 
$a'_2 = (s[2], t[2], u'[2], v[2])$ is a $(\orbitA, \orbitN, \orbitN,\orbitN, \orbitN, \orbitA)$-tuple.  Now all $a_1, a'_2, a_3$ are tuples in $R$.
Since $(D_1(s), D_1(t)) \in \subsetA$ and $\subsetA + R = \subsetA$ we have that 
$(D_1(u'), D_1(v)) \in \subsetA$. By~(\ref{eq:Di}), we have $D_1(u') = D_1(u)$, and hence
$(D_1(u), D_1(v)) \in \subsetA$.

Finally, let $(u,v)$ be a $(\{2,3\},\{2,3\})$-constant pair of triples in $A$ satisfying $\orbitA\orbitB\orbitB(u,v)$. Since the age of $\structA$ has free amalgamation and the structure is homogeneous, there is some $(\{ 2, 3 \}, [3])$-constant pair of triples $(s, t)$ in $A$ satisfying $\orbitA\orbitB\orbitB(s,t)$ (as in Figure~\ref{fig:beqb2323extension})
such that 
\begin{itemize}
\item $a_1 =(s[1], t[1], u[1], v[1])$ is a $(\orbitA, \orbitN, \orbitN,\orbitN, \orbitN, \orbitA)$-tuple, 
\item $a_2 = (s[2], t[2], u[2], v[2])$ is a $(\orbitB, \orbitB, =,=, 
\orbitB, \orbitB)$-tuple, and
\item $a_3 = (s[3], t[3], u[3], v[3])$ is a $(\orbitB, \orbitB, =,=, 
\orbitB, \orbitB)$-tuple.
\end{itemize}

Observe that all $a_1,a_2, a_3 \in R$.
Since $(D_1(s), D_1(t) \in \subsetA)$ and $\subsetA + R = \subsetA$, it follows that $(D_1(u), D_1(v) \in \subsetA$. It completes the proof of the base case.

\begin{figure}
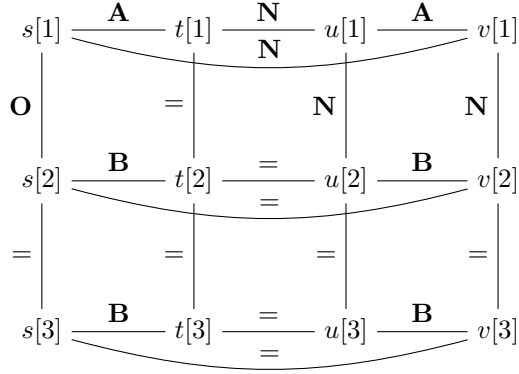

\tikz {
\node (s1) at (0,5) {$s[1]$};
\node (s2) at (0,3) {$s[2]$};
\node (s3) at (0,1) {$s[3]$};
\node (t1) at (2,5) {$t[1]$};
\node (t2) at (2,3) {$t[2]$};
\node (t3) at (2,1) {$t[3]$};
\node (u1) at (4,5) {$u[1]$};
\node (u2) at (4,3) {$u[2]$};
\node (u3) at (4,1) {$u[3]$};
\node (v1) at (6,5) {$v[1]$};
\node (v2) at (6,3) {$v[2]$};
\node (v3) at (6,1) {$v[3]$};

\draw (s1) -- node[above] {$\orbitA$} (t1);
\draw (s2) -- node[above] {$\orbitB$} (t2);
\draw (s3) -- node[above] {$\orbitB$} (t3);

\draw (t1) -- node[above] {$\orbitN$} (u1);
\draw (t2) -- node[above] {$=$} (u2);
\draw (t3) -- node[above] {$=$} (u3);

\draw (u1) -- node[above] {$\orbitA$} (v1);
\draw (u2) -- node[above] {$\orbitB$} (v2);
\draw (u3) -- node[above] {$\orbitB$} (v3);

\draw (s1) -- node[left] {$\orbitO$} (s2);
\draw (s2) -- node[left] {$=$} (s3);

\draw (t1) -- node[left] {$=$} (t2);
\draw (t2) -- node[left] {$=$} (t3);

\draw (u1) -- node[left] {$\orbitN$} (u2);
\draw (u2) -- node[left] {$=$} (u3);

\draw (v1) -- node[left] {$\orbitN$} (v2);
\draw (v2) -- node[left] {$=$} (v3);

\path (s1) edge[bend right=15] node[above] {$\orbitN$} (v1);
\path (s2) edge[bend right=15] node[above] {$=$} (v2);
\path (s3) edge[bend right=15] node[above] {$=$} (v3)
}
\centering
\caption{Triples $u,v,s,t$ which play a role in the proof of
the  base case of induction in the proof of Claim~\ref{claim:beqbnoJonsson}.}
\label{fig:beqb2323extension}
\end{figure}

\smallskip
\noindent
\textbf{(INDUCTION STEP)}
The fact that 
$(D_{i+1}(u), D_{i+1}(v)) \in \orbitA$.
for all $([3],[2])$-constant pair of triples $(u,v)$ in $A$
satisfying $\orbitA \orbitA \orbitB (u,v)$ is shown in the same way 
as in the proof of Lemma~\ref{lem:bnbnoJonsson}.

The three remaining facts we need to go through the induction step may be proved in the exactly same way as for the base case but one needs to replace $D_1$ with $D_{i+1}$ in all  reasonings.  
\end{proof}

The final step of the proof is to show that $(D_n(u),D_n(v)) \in \subsetA$ for a pair of $([3], [3])$-constant pair of triples $(u,v)$ in $A$ satisfying $\orbitB \orbitB \orbitB (u,v)$.
To this end, consider such $u,v \in A^3$.
Since $\structA$ is homogeneous and its age has free amalgamation over $\orbitN$, there exists a $(\{ 2, 3\}, \{2,3\})$-constant pair of triples $(s,t)$ in $A$ (consult Figure~\ref{fig:beqbfinalstep}) satisfying  
$\orbitA \orbitB \orbitB (s,t)$
such that 
\begin{itemize}
\item $a_1 =(s[1], t[1], u[1], v[1])$ is a $(\orbitA, \orbitN, \orbitN,\orbitN, \orbitN, \orbitB)$-tuple, 
\item $a_2 = (s[2], t[2], u[2], v[2])$ is a $(\orbitB,\orbitB, =, =, \orbitB, \orbitB)$-tuple, and
\item $a_3 = (s[3], t[3], u[3], v[3])$ is a $(\orbitB,\orbitB, =, =, \orbitB, \orbitB)$-tuple.
\end{itemize}
Let now $u'$ be as $u$ except for the first coordinate. We choose $u'[1]$ so that 
$a'_1 = (s[1], t[1], u'[1], v[1])$ is a $(\orbitA, \orbitN, \orbitN,\orbitN, \orbitN, \orbitA)$-tuple.  Now all $a'_1, a_2, a_3$ are tuples in $R$.
Since $(D_n(s), D_n(t)) \in \orbitA$ and $\subsetA + R = \subsetA$ we have that 
$(D_n(u'), D_n(v)) \in \subsetA$. By~(\ref{eq:Dn}), we have $D_n(u') = D_n(u)$ and hence
$(D_n(u), D_n(v)) \in \subsetA$ for $u,v$ satisfying $\orbitB \orbitB \orbitB (u,v)$. Since $D_n$ preserves $\orbitB$, we arrived at a contradiction. It completes the proof of the lemma. 
\end{proof}

\begin{figure}
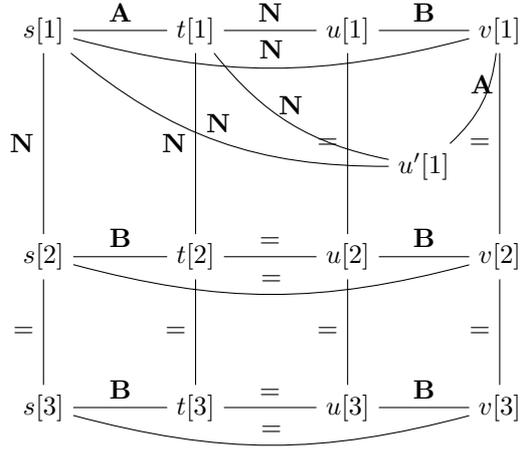

\tikz {
\node (s1) at (0,6) {$s[1]$};
\node (s2) at (0,3) {$s[2]$};
\node (s3) at (0,1) {$s[3]$};
\node (t1) at (2,6) {$t[1]$};
\node (t2) at (2,3) {$t[2]$};
\node (t3) at (2,1) {$t[3]$};
\node (u1) at (4,6) {$u[1]$};
\node (u2) at (4,3) {$u[2]$};
\node (u3) at (4,1) {$u[3]$};
\node (v1) at (6,6) {$v[1]$};
\node (v2) at (6,3) {$v[2]$};
\node (v3) at (6,1) {$v[3]$};
\node (u2prim) at (5,4.2) {$u'[1]$};

\draw (s1) -- node[above] {$\orbitA$} (t1);
\draw (s2) -- node[above] {$\orbitB$} (t2);
\draw (s3) -- node[above] {$\orbitB$} (t3);

\draw (t1) -- node[above] {$\orbitN$} (u1);
\draw (t2) -- node[above] {$=$} (u2);
\draw (t3) -- node[above] {$=$} (u3);

\draw (u1) -- node[above] {$\orbitB$} (v1);
\draw (u2) -- node[above] {$\orbitB$} (v2);
\draw (u3) -- node[above] {$\orbitB$} (v3);

\draw (s1) -- node[left] {$\orbitN$} (s2);
\draw (s2) -- node[left] {$=$} (s3);

\draw (t1) -- node[left] {$\orbitN$} (t2);
\draw (t2) -- node[left] {$=$} (t3);

\draw (u1) -- node[left] {$=$} (u2);
\draw (u2) -- node[left] {$=$} (u3);

\draw (v1) -- node[left] {$=$} (v2);
\draw (v2) -- node[left] {$=$} (v3);

\path (u2prim) edge[bend right=20] node[above] {$\orbitA$} (v1);
\path (u2prim) edge[bend left=20] node[above] {$\orbitN$} (s1);
\path (u2prim) edge[bend left=20] node[above] {$\orbitN$} (t1);

\path (s1) edge[bend right=15] node[above] {$\orbitN$} (v1);
\path (s2) edge[bend right=15] node[above] {$=$} (v2);
\path (s3) edge[bend right=15] node[above] {$=$} (v3)
}
\centering
\caption{The figure depicts triples $s,t,u,v$ and $u'$ used in the final step of the proof of Lemma~\ref{lem:beqbnoJonsson}.}
\label{fig:beqbfinalstep}
\end{figure}

\noindent
We will now conclude Section~\ref{sect:nondegen} by a theorem which follows directly from Proposition~\ref{prop:nondegen_reduce} and Lemmas~\ref{lem:bnbnoJonsson} and~\ref{lem:beqbnoJonsson}.

\begin{theorem}
\label{thm:nondegen}
Let $\structB$ be a first-order expansion of a finitely bounded homogeneous symmetric binary core which pp-defines $R_1:\subsetA \rightarrow \subsetB, R_2: \subsetB \rightarrow \subsetA$  such that $\bipartite_{R_1, R_2}$ has a non-degenerated component. Then $\structB$ is not preserved by any chains of quasi directed J\'{o}nnson operations.  
\end{theorem}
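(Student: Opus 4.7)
The plan is to combine the three preceding results in a direct manner: Proposition~\ref{prop:nondegen_reduce} reduces the general situation of complementary implications with a non-degenerated component to the production of a single self-complementary implication $R:\subsetA \to \subsetA$ carrying two diagnostic ``fully-free'' or ``degenerate-square'' tuples, and then Lemma~\ref{lem:bnbnoJonsson} and Lemma~\ref{lem:beqbnoJonsson} take care of the two possible forms of the second tuple by ruling out any chain of quasi directed J\'onsson operations. So essentially the whole proof is a dispatch.

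More concretely, first I would observe that by the very definitions, any non-degenerated component of $\bipartite_{R_1, R_2}$ mentioned in the theorem is automatically non-trivial: the notion of a degenerated component is only defined for non-trivial components (the sentence preceding Section~\ref{sect:nondegen}), so the hypotheses of Proposition~\ref{prop:nondegen_reduce} are met verbatim. Applying that proposition yields a relation $R:\subsetA \to \subsetA$ pp-definable in $\structB$ which is self-complementary and contains a $(\orbitA, \orbitN, \orbitN, \orbitN, \orbitN, \orbitA)$-tuple for some $\orbitA \subseteq \subsetA$, together with a second tuple that witnesses some orbital $\orbitB \nsubseteq \subsetA$ in one of two forms.

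Next I would split according to the form of the second tuple. If it is a $(\orbitB, \orbitN, \orbitN, \orbitN, \orbitN, \orbitB)$-tuple, then $R$ fits the hypotheses of Lemma~\ref{lem:bnbnoJonsson} and we conclude that $\structB$ (which pp-defines $R$) is not preserved by any chain of quasi directed J\'onsson operations, because $R$ itself is not. If instead it is a $(\orbitB, \orbitB, =, =, \orbitB, \orbitB)$-tuple, then the hypotheses of Lemma~\ref{lem:beqbnoJonsson} are met and the same conclusion follows. In either case we are done.

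Since the heavy combinatorial work has been carried out in the proposition and the two lemmas, I do not expect any technical obstacles here; the only thing to be careful about is the bookkeeping with the hypotheses, namely: checking that ``non-degenerated component'' in the theorem's statement triggers the ``non-trivial non-degenerated component'' hypothesis of the proposition, and verifying that pp-definability is transitive, so an $R$ pp-defined from $R_1, R_2$ is in the relational clone of $\structB$ and hence preserved by any chain of quasi directed J\'onsson operations that preserves $\structB$. The contrapositive of this last observation is what turns the two lemmas (stated about $R$) into a statement about $\structB$.
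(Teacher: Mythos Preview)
Your proposal is correct and matches the paper's own argument: the theorem is stated as following directly from Proposition~\ref{prop:nondegen_reduce} and Lemmas~\ref{lem:bnbnoJonsson} and~\ref{lem:beqbnoJonsson}, exactly the dispatch you describe. Your additional bookkeeping remarks (non-degenerated components are by definition non-trivial, and pp-definability transfers the conclusion about $R$ back to $\structB$) are accurate and make the one-line reduction fully explicit.
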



\subsection{All non-trivial components are degenerated}
\label{sect:alldegen}

By the results of the previous subsection we may assume that all  nontrivial components in $\bipartite_{R_1, R_2}$ are degenerated. 
We reduce this situation via Proposition~\ref{prop:degenreduce} to three simply cases.
To this end we need a number of definitions and observations. 

Besides the $\circ$-composition defined earlier, we also define the $\bowtie$-composition of two relations.  
\begin{definition}
\label{def:bowtie}
    Let $R_1, R_2$ be two quaternary relations, then $R_1 \bowtie R_2$ is
    $$R_3(x_1, x_2, x_3, x_4) \equiv \exists y_1 y_2~R_3(x_1, x_2, y_1, y_2) \wedge R_2(y_2, y_1,x_1, x_2).$$
\end{definition} 

The following observation may be proved in the exactly same way as Observation~\ref{obs:selfcomplementary}

\begin{observation}
\label{obs:selfcomp_bowtie}
Let $R_1: \mathcal{A} \rightarrow \mathcal{B}$ and $R_2: \mathcal{B} \rightarrow \mathcal{A}$ be complementary implications. Then $R_1 \bowtie  R_2$ is a self-complementary  $(\subsetA \rightarrow \subsetA)$-implication. 
\end{observation}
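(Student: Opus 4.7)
The plan is to follow the proof of Observation~\ref{obs:selfcomplementary} verbatim, the only new ingredient being that $\bowtie$ swaps the two intermediate coordinates before feeding them into $R_2$. This swap is innocuous because $\structA$ is a symmetric binary core: for any pair $(a,b)$ the map $a\mapsto b,\ b\mapsto a$ is an isomorphism between the two-element induced substructures (since every relation of $\structA$ is symmetric), which extends to an automorphism by homogeneity, so every orbital, and hence every union of orbitals, is invariant under swapping coordinates. In particular $\subsetA$, $\subsetB$, and the common projections $\mathcal{P} := \Pi_{1,2}(R_1) = \Pi_{-2,-1}(R_2)$ and $\mathcal{Q} := \Pi_{-2,-1}(R_1) = \Pi_{1,2}(R_2)$ are all symmetric.

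First I would verify that $R := R_1 \bowtie R_2$ is a $(\subsetA \rightarrow \subsetA)$-implication. For the projections, given $(a_1, a_2) \in \mathcal{P}$ pick $(y_1, y_2) \in \mathcal{Q}$ witnessing this through $R_1$; symmetry of $\mathcal{Q}$ places $(y_2, y_1) \in \mathcal{Q} = \Pi_{1,2}(R_2)$, so $(y_2, y_1)$ extends to a tuple in $R_2$ and $(a_1, a_2) \in \Pi_{1,2}(R)$. A symmetric argument gives $\Pi_{-2,-1}(R) = \mathcal{P}$, so $R$ and $R$ agree on projections and both projections strictly contain $\subsetA$. The equality $\subsetA + R = \subsetA$ then follows by chasing $\subsetA + R_1 = \subsetB$, invoking the symmetry of $\subsetB$ to flip the middle pair, and finally applying $\subsetB + R_2 = \subsetA$; the converse inclusion is obtained by running the same three-step chain backwards, exactly as in the proof of Observation~\ref{obs:selfcomplementary}.

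Next I would establish self-complementarity, i.e.\ that every non-trivial component of $\bipartite_{R, R}$ has the form $(\subsetC, \subsetC)$. The crucial observation is that, because $\mathcal{Q}$ is symmetric, the relation $R$ contains an $(\orbitO, \ldots, \orbitP)$-tuple if and only if there is some orbital $\orbitQ$ with both a $(\orbitO, \ldots, \orbitQ)$-tuple in $R_1$ and a $(\orbitQ, \ldots, \orbitP)$-tuple in $R_2$ --- precisely the criterion supplied by Observation~\ref{obs:circbipartite} for $R_1 \circ R_2$. Hence $\bipartite_{R, R}$ has exactly the same arc structure as would arise from $R_1 \circ R_2$, and the contraction argument of Observation~\ref{obs:selfcomplementary} --- replacing every two consecutive arcs of $\bipartite_{R_1, R_2}$ meeting at a vertex of $\vertices_R(\cdot)$ by a single arc --- yields a correspondence between non-trivial maximal components of $\bipartite_{R_1, R_2}$ and of $\bipartite_{R, R}$, each of the latter having shape $(\subsetC, \subsetC)$. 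The main bookkeeping burden is simply tracking which coordinates are swapped at each stage; no new combinatorial phenomenon appears beyond those already handled for $\circ$.
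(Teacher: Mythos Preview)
Your proposal is correct and matches the paper's own approach: the paper simply states that the observation ``may be proved in the exactly same way as Observation~\ref{obs:selfcomplementary}'', and you have carried this out, supplying the one additional remark the paper leaves implicit---namely that all orbitals (hence $\subsetA$, $\subsetB$, and the relevant projections) are symmetric because $\structA$ is a homogeneous symmetric binary core, so the coordinate swap in $\bowtie$ does not alter which $(\orbitO,\ldots,\orbitP)$-tuples arise. With that in hand the arc structure of $\bipartite_{R,R}$ coincides with the one for $R_1\circ R_2$, and the contraction argument of Observation~\ref{obs:selfcomplementary} applies verbatim.
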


We will write $(R_1 \bowtie R_2)^n$ as a shorthand for  the expression $(( \cdots (((R_1 \bowtie R_2) \bowtie R_1) \bowtie R_2) \bowtie \cdots \bowtie R_1) \circ R_2)$ where both $R_1$ and $R_2$ occur $n$ times. 
The next observation is analogous to Observation~\ref{obs:circbipartite}.

\begin{observation}
\label{obs:bowtiebipartite}
  Let $R_1: \subsetA \rightarrow \subsetB$ and $R_2: \subsetB \rightarrow \subsetA$ be complementary relations. Then, for all $n \geq 1$ we have both of the following: 
  \begin{itemize}
      \item $S_{2n} \equiv (R_1 \bowtie R_2)^n$ contains an $(\orbitO,\ldots, \orbitP)$-tuple
      iff there is a path in $\bipartite_{R_1, R_2}$ of length $2n$ from $\orbitO_L$ to $\orbitP_L$;
      \item $S_{2n+1} \equiv (R_1 \bowtie R_2)^n \circ R_1$ contains an $(\orbitO, \ldots, \orbitP)$-tuple
      iff there is a path in $\bipartite_{R_1, R_2}$ of length $2n+1$ from $\orbitO_L$ to $\orbitP_R$. 
  \end{itemize}
\end{observation}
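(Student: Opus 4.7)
The plan is to mirror, step by step, the proof of Observation~\ref{obs:circbipartite}, with one additional structural input about the ambient setting. Because $\structA$ is a symmetric binary core, every orbital $\orbitO$ with respect to $\Aut(\structA)$ is self-inverse, i.e., $(a,b) \in \orbitO$ iff $(b,a) \in \orbitO$; indeed, homogeneity applied to the two-element substructure on $\{a,b\}$, which is identical regardless of the order in which we list its elements, supplies an automorphism swapping $a$ and $b$. Consequently the "swap" of the two middle coordinates performed by $\bowtie$ is invisible at the orbital level, so $\bowtie$-composition obeys exactly the same "match the intermediate orbital" rule as $\circ$-composition.

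First I would verify the base case $n = 1$. Unfolding Definition~\ref{def:bowtie}, an $(\orbitO, \ldots, \orbitP)$-tuple in $R_1 \bowtie R_2$ is witnessed by elements $a_1, a_2, b_1, b_2, c_1, c_2 \in A$ with $(a_1, a_2, b_1, b_2) \in R_1$, $(b_2, b_1, c_1, c_2) \in R_2$, $(a_1, a_2) \in \orbitO$ and $(c_1, c_2) \in \orbitP$. Letting $\orbitA$ be the common orbital of $(b_1, b_2)$ and $(b_2, b_1)$, the existence of such a witness is equivalent to the joint existence of an $(\orbitO, \ldots, \orbitA)$-tuple in $R_1$ and an $(\orbitA, \ldots, \orbitP)$-tuple in $R_2$, which by Definition~\ref{def:bipartite} is precisely a length-$2$ path $\orbitO_L \to \orbitA_R \to \orbitP_L$ in $\bipartite_{R_1, R_2}$. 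The converse direction uses homogeneity of $\structA$: given any such pair of tuples in $R_1$ and $R_2$, an automorphism aligns the two middle pairs so that they glue into a valid $\bowtie$-witness.

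For the inductive step I would follow the template of Observation~\ref{obs:circbipartite}, splitting on parity. For $S_{2n+1} \equiv S_{2n} \circ R_1$, the induction hypothesis turns an $(\orbitO, \ldots, \orbitA)$-tuple in $S_{2n}$ into a length-$2n$ path $\orbitO_L \to \orbitA_L$; an $(\orbitA, \ldots, \orbitP)$-tuple in $R_1$ adds the final arc $\orbitA_L \to \orbitP_R$; and homogeneity glues the two parts into an $(\orbitO, \ldots, \orbitP)$-tuple in $S_{2n+1}$. The converse reads the same argument backwards. For $S_{2(n+1)}$, which by the defining shorthand is the composition of $S_{2n+1}$ with $R_2$ via $\bowtie$, reduce the step to the base-case analysis by routing through a self-inverse intermediate orbital and again applying homogeneity. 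The one point that requires attention --- and which I expect to be pure bookkeeping rather than a genuine obstacle --- is that the shorthand $(R_1 \bowtie R_2)^n$ terminates with $\circ R_2$ rather than $\bowtie R_2$; since the two compositions agree at the orbital level once self-inverseness of orbitals is recorded, this discrepancy causes no trouble and the induction goes through mechanically.
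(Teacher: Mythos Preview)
Your proposal is correct and follows precisely the approach the paper intends: the paper gives no explicit proof of this observation, merely declaring it ``analogous to Observation~\ref{obs:circbipartite}'', and your argument spells out that analogy in full. The one ingredient you add beyond the $\circ$-case---that orbitals in a homogeneous symmetric binary core are self-inverse, so the coordinate swap in $\bowtie$ is invisible at the orbital level---is exactly the structural fact needed to make the analogy go through, and your handling of the stray $\circ R_2$ at the end of the shorthand $(R_1 \bowtie R_2)^n$ is the right way to dispose of it.
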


\noindent
Finally, the next observation resembles  Observation~\ref{obs:tuplescomposition} with a proof following strictly from a definition of the $\bowtie$-composition.

\begin{observation}
\label{obs:bowtietuples}
Let $R_1 : \subsetA \rightarrow \subsetB$ and $R_2: \subsetB \rightarrow \subsetC$ be such that $R_1$ contains a $(\orbitA, \ldots, \orbitB)$-tuple $t_1$ and $R_2$ contains a $(\orbitB,\ldots, \orbitC)$-tuple $t_2$. Then $R_3 := R_1 \bowtie R_2$ is a 
$(\subsetA \rightarrow \subsetC)$-implication containing a $(\orbitA, \ldots, \orbitC)$-tuple $t_3$ which is
  essentially ternary if both $t_1, t_2$ are essentilly ternary.
\end{observation}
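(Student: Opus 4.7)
The plan is to read both parts of the conclusion straight off the $\bowtie$-definition, with free amalgamation supplying the single non-automatic ingredient: the inequality that makes $t_3$ essentially ternary.

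First I would verify that $R_3 := R_1 \bowtie R_2$ is a $(\subsetA \to \subsetC)$-implication. The main identity is $\subsetA + R_3 = \subsetC$. For the forward inclusion, if $(c,d) \in \subsetA + R_3$, unfolding $\bowtie$ produces $(a_1,a_2) \in \subsetA$ and $(y_1,y_2)$ with $(a_1,a_2,y_1,y_2) \in R_1$ and $(y_2,y_1,c,d) \in R_2$. Then $(y_1,y_2) \in \subsetA + R_1 = \subsetB$; because orbitals of a symmetric binary core are symmetric, $\subsetB$ is a symmetric relation, so $(y_2,y_1) \in \subsetB$ and hence $(c,d) \in \subsetB + R_2 = \subsetC$. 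The reverse inclusion is the same argument run backwards, again exploiting the symmetry of $\subsetB$. The properness conditions $\subsetA \subsetneq \Pi_{1,2}(R_3)$ and $\subsetC \subsetneq \Pi_{-2,-1}(R_3)$ will follow once the witness tuple below is produced.

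For the essentially ternary witness, write $t_1 = (a, b, b, c)$ with $(a,b) \in \orbitA$, $(b,c) \in \orbitB$, $a \neq c$, and $t_2 = (c', b', b', d')$ with $(c',b') \in \orbitB$, $(b',d') \in \orbitC$, $c' \neq d'$. Using symmetry of $\orbitB$ and homogeneity of $\structA$ I may identify $b' = b$ and $c' = c$. The $3$-element substructures induced on $\{a,b,c\}$ and on $\{c,b,d'\}$ share their $\{b,c\}$-substructure, so free amalgamation over $\orbitN$ supplies an amalgam in $\Age(\structA)$ in which $(a,d') \in \orbitN$; homogeneity embeds this amalgam into $\structA$. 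Since $R_1$ and $R_2$ are invariant under $\Aut(\structA)$, both $(a,b,b,c) \in R_1$ and $(c,b,b,d') \in R_2$ hold in the embedding.

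Instantiating the $\bowtie$-formula with $y_1 = b$ and $y_2 = c$ now yields $t_3 = (a,b,b,d') \in R_3$, which is a $(\orbitA, \ldots, \orbitC)$-tuple since $(a,b) \in \orbitA$ and $(b,d') \in \orbitC$. It is essentially ternary: $t_3[2] = b = t_3[3]$ and $t_3[1] = a \neq d' = t_3[4]$ because $\orbitN$ is anti-reflexive. The only genuine obstacle is securing $a \neq d'$, which is precisely what free amalgamation over $\orbitN$ provides; without it one could force the middle coordinates to collapse via $\bowtie$ but would be unable to separate the endpoints.
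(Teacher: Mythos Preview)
Your argument for the essentially ternary tuple is correct and is exactly what the paper has in mind: the paper gives no proof beyond pointing to the analogy with Observation~\ref{obs:tuplescomposition} (whose proof uses homogeneity and free amalgamation in just the way you do) and saying the rest is read off the $\bowtie$-definition. Your identification of free amalgamation as the one non-automatic ingredient securing $a\neq d'$ is precisely the point.

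One small remark: your claim that the properness conditions $\subsetA \subsetneq \Pi_{1,2}(R_3)$ and $\subsetC \subsetneq \Pi_{-2,-1}(R_3)$ ``will follow once the witness tuple below is produced'' is not right as stated --- the witness only shows that $\orbitA\subseteq\Pi_{1,2}(R_3)$ and $\orbitC\subseteq\Pi_{-2,-1}(R_3)$, not strict containment of $\subsetA$ and $\subsetC$. In the paper's applications $R_1,R_2$ are complementary, so $\Pi_{-2,-1}(R_1)=\Pi_{1,2}(R_2)$ (as symmetric sets) and hence $\Pi_{1,2}(R_3)=\Pi_{1,2}(R_1)\supsetneq\subsetA$; the paper is equally loose on this technicality.
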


\noindent
For a vertex $\orbitO_P$ with $P \in \{ L, R \}$ and $S \in \{ L, R \}$ we define:
\begin{align}
    \text{FReach}_S(\orbitO_P)  = \{ \orbitP_S \mid \text {there is a path in } \bipartite_{R_1, R_2} \text{ from  } \orbitO_P \text{ to } \orbitP_S \}\nonumber\\
       \text{BReach}_S(\orbitO_P)  = \{ \orbitP_S \mid \text {there is a path in } \bipartite_{R_1, R_2} \text{ from  } \orbitO_P \text{ to } \orbitP_S  \}\nonumber
\end{align}

\begin{observation}
\label{obs:defnreach}
Let $R_1, R_2$ be a pair of complementary relations pp-definable in a first-order expansion 
of a finitely bounded homogeneous symmetric binary core  such that all non-trivial components in $\bipartite_{R_1, R_2}$ are degenerated and $\orbitO$ an orbital such that $\bipartite_{R_1, R_2}$ contains a non-trivial $\orbitO$-degenerated component, then both 
$\text{FReach}_P(\orbitO_P)$ and $\text{BReach}_P(\orbitO_P)$
for $P \in \{ L, R \}$ are pp-definable in $\structB$.
\end{observation}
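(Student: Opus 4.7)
The existence of a non-trivial $\orbitO$-degenerated component means that both $R_1$ and $R_2$ contain a $(\orbitO, \orbitO, =, =, \orbitO, \orbitO)$-tuple, which by Definition~\ref{def:bipartite} yields a $2$-cycle $\orbitO_L \to \orbitO_R \to \orbitO_L$ in $\bipartite_{R_1, R_2}$. Since $\structA$ is $\omega$-categorical, the set of orbitals (and hence the vertex set of $\bipartite_{R_1, R_2}$) is finite of some size~$V$. The plan is to exploit this $2$-cycle together with finiteness to turn ``reachability of unbounded length in the bipartite digraph'' into ``reachability in exactly $2N$ steps'' for some fixed~$N$, which is then directly encodable as a pp-formula via the iterated $\circ$-composition.

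The key combinatorial step, which I would carry out first after fixing $N \geq V$, is to show that $\orbitP_L \in \FReach_L(\orbitO_L)$ if and only if there is a path in $\bipartite_{R_1, R_2}$ of length exactly $2N$ from $\orbitO_L$ to $\orbitP_L$. One direction is immediate. For the other, any walk from $\orbitO_L$ to $\orbitP_L$ has even length (by bipartiteness), and after deleting internal cycles it shrinks to a simple walk of length $2m \leq V$; I then pad this walk by prepending $N-m$ copies of the loop $\orbitO_L \to \orbitO_R \to \orbitO_L$ to obtain a walk of length exactly $2N$. This padding is precisely where the hypothesis on the $\orbitO$-degenerated component is used in an essential way. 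Observation~\ref{obs:circbipartite} then converts the existence of such a $2N$-path into the presence of a $(\orbitO, \ldots, \orbitP)$-tuple in the relation $(R_1 \circ R_2)^N$.

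Packaging this, and using that every orbital of $\structA$ lies in the signature of $\structB$, the binary relation $\bigcup_{\orbitP \in \FReach_L(\orbitO_L)} \orbitP$ is pp-defined by
\begin{equation*}
\exists y_1 \; \exists y_2 \; \orbitO(y_1, y_2) \wedge (R_1 \circ R_2)^N(y_1, y_2, x_1, x_2),
\end{equation*}
which is a pp-formula in $\structB$ since $(R_1 \circ R_2)^N$ is built from $R_1, R_2$ by conjunction and existential quantification according to Definition~\ref{def:RcircR}. The remaining three cases would be handled by minor variants: $\BReach_L(\orbitO_L)$ by placing the $\orbitO$-filter on the last two coordinates of $(R_1 \circ R_2)^N$ rather than the first two; the two $R$-side cases by replacing $(R_1 \circ R_2)^N$ with $(R_2 \circ R_1)^N$, or equivalently by stepping once through the $2$-cycle into $\orbitO_L$ and then using odd-length compositions of the form $(R_1 \circ R_2)^N \circ R_1$, to which the same padding argument applies. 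The main obstacle I foresee is making the padding argument work uniformly over all reachable $\orbitP_L$ regardless of parity---the $2$-cycle supplied by the $\orbitO$-degenerated component is exactly what makes this go through, which also explains why the hypothesis on degeneracy is indispensable.
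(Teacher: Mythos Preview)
Your proposal is correct and follows essentially the same approach as the paper: exploit the $2$-cycle at $\orbitO$ to pad any reachability walk to a fixed even length, then realise fixed-length reachability as an iterated composition constrained by the pp-definable orbital $\orbitO$. The only cosmetic differences are that the paper uses the $\bowtie$-composition and Observation~\ref{obs:bowtiebipartite} where you use $\circ$ and Observation~\ref{obs:circbipartite} (these give identical path-length characterisations, so either works), and for $\BReach$ the paper reverses the relations via $R'_i(x_1,x_2,x_3,x_4)\equiv R_i(x_4,x_3,x_2,x_1)$ rather than moving the $\orbitO$-filter to the last two coordinates as you do.
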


\begin{proof}
We first show that
$$\text{FReach}_L(\orbitO_L)(x_1, x_2) \equiv 
\exists y_1 y_2~\orbitO(y_1, y_2) \wedge R_3(y_1, y_2, x_1, x_2)$$ where 
$$R_3 \equiv (R_1 \bowtie R_2)^n$$
with $n$ being the number of all orbitals in $\structA$.
By Observation~\ref{obs:bowtiebipartite}, a binary relation defined by a formula above contains all orbitals $\orbitP$ such that there is a path from $\orbitO_L$ to $\orbitP_L$ in $\bipartite_{R_1, R_2}$ of length $2n$. It is easy to see that if there is any path from $\orbitO_L$ to $\orbitP_L$, then it is of  length $2n$ or of even length smaller than $2n$. Since there is a path of length two from $\orbitO_L$ to $\orbitO_R$ and back, the formula above defines $\text{FReach}_L(\orbitO_L)$. A proof for $\text{FReach}_R(\orbitO_R)$ is the same we just swap $R_1$ with $R_2$.

For  $\text{BReach}_P(\orbitO_P)$ we just need to invert the direction of arrows. Therefore we first set $R'_i(x_1, x_2, x_3, x_4) \equiv R_i(x_4, x_3, x_2, x_1)$ for $i \in [2]$. 
Now, in order to define $\text{BReach}_L(\orbitO_L)$ we use the above formula where $R_1, R_2$ are replaced with $R'_2, R'_1$. For $\text{BReach}_R(\orbitO_R)$ we replace in the above formula $R_1, R_2$ with $R'_1, R'_2$.
\end{proof}

One of the cases tht come up in Proposition~\ref{prop:degenreduce} is related to the connectedness of a quaternary relation defined as follows.

\begin{definition}
\label{def:connectedness}
We say that a quaternary relation $R$ is \emph{connected} if $\bipartite_{R,R'}$ is connected where $R'(x_1, x_2, x_3, x_4) \equiv R(x_4, x_3, x_2, x_1)$.
\end{definition}

Recall that a quaternary tuple $t$ is \emph{partially-free} if $(t[1], t[4]) \in \orbitN$.
We are now ready for the proposition. 

\begin{proposition}
\label{prop:degenreduce}
Let  $\structB$ be a first-order expansion of a finitely bounded homogeneous symmetric binary core whose age has free amalgamation. If $\structB$ pp-defines a pair of complementary implications $R_1 :\subsetA \rightarrow \subsetB$  and $R_2 :\subsetB \rightarrow \subsetA$ 
such that $\subsetA \neq \subsetB$ and all non-trivial components in $\bipartite_{R_1, R_2}$ are degenerated, then $\structB$
pp-defines one of the following:
\begin{enumerate}
\item \label{degen:nonconnected} a relation $R$ which is not connected and contains at least one non-degenerated tuple,
\item \label{degen:essenternary} a ternary self-complementary $R:\subsetA \rightarrow \subsetA$ such that $\bipartite_{R,R}$ contains a $\orbitA$-degenerated component, a $\orbitB$-degenerated component for some 
$\orbitA \subseteq \subsetA$, $\orbitB \nsubseteq \subsetA$ and a $(\orbitB, \orbitD, \orbitA)$-tuple is in $R$ for some anti-reflexive orbital $\orbitD$,
\item \label{degen:partfree} a self-complementary $R:\subsetA \rightarrow \subsetA$ such that $\bipartite_{R,R}$ contains a $\orbitA$-degenerated component, a $\orbitB$-degenerated component for some anti-reflexive
$\orbitA \subseteq \subsetA$, $\orbitB \nsubseteq \subsetA$ and a 
partially-free tuple is in $R$.
\end{enumerate}
\end{proposition}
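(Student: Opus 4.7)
The strategy is to iterate the compositions $\circ$ and $\bowtie$ of Definitions~\ref{def:RcircR} and~\ref{def:bowtie} to fold the pair $(R_1, R_2)$ into a single self-complementary implication on $\subsetA$, and then to read off which of the three cases applies from the kinds of non-degenerated tuples that survive. The first thing I would do is locate two anchor degenerated components: by Observation~\ref{obs:minmaxcomponents} the $(\subsetA, \subsetB)$-subgraph of $\bipartite_{R_1, R_2}$ has a maximal component, which by the hypothesis that every non-trivial component is degenerated must be an $\orbitA$-degenerated component with $\orbitA \subseteq \subsetA$; the second part of the same observation, applied to the $(\Pi_{1,2}(R_1) \setminus \subsetA, \Pi_{-2,-1}(R_1) \setminus \subsetB)$-subgraph, yields either a non-trivial $\orbitB$-degenerated component with $\orbitB \not\subseteq \subsetA$, or only trivial components outside $(\subsetA,\subsetB)$---in the latter situation no non-degenerated tuple can sit in a non-trivial component of the folded relation and I would peel off Case~\ref{degen:nonconnected} directly.

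Next I would verify that $R_1 \cup R_2$ carries at least one non-degenerated tuple: if not, every tuple would be of the form $(a,b,b,a)$ with $(a,b)\in \orbitO$ for some orbital $\orbitO$; such a tuple contributes $\orbitO$ symmetrically to $\Pi_{1,2}$ and $\Pi_{-2,-1}$, and combined with the identity $\subsetA + R_1 = \subsetB$ this forces $\subsetA = \subsetB$, a contradiction. With both anchor components identified and a non-degenerated tuple $t$ in hand, I would invoke Observations~\ref{obs:circbipartite} and~\ref{obs:bowtiebipartite} to show that for sufficiently large $n$, the iterates $(R_1 \circ R_2)^n$ and $(R_1 \bowtie R_2)^n$ are self-complementary $(\subsetA \rightarrow \subsetA)$-implications (by Observations~\ref{obs:selfcomplementary} and~\ref{obs:selfcomp_bowtie}) in whose bipartite graphs both the $\orbitA$- and (when it exists) the $\orbitB$-degenerated component reappear, together with non-degenerated tuples obtained by composing $t$ with the traversal of the anchors.

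The dichotomy between Cases~\ref{degen:essenternary} and~\ref{degen:partfree} is then driven by Observations~\ref{obs:tuplescomposition} and~\ref{obs:bowtietuples}. If an essentially quaternary tuple can be produced along some composition chain---either because $t$ is itself essentially quaternary, or because two essentially ternary tuples admit a $\circ$-composition through an anti-reflexive connecting orbital---then one further $\circ$-composition supplies a fully-free and in particular partially-free tuple; this lands me in Case~\ref{degen:partfree} once I check that the chosen $\orbitA$ must be anti-reflexive (an $\orbitA$ equal to $=$ would collapse its degenerated component to equality, which is incompatible with the simultaneous presence of a partially-free tuple at that orbital). If, on the contrary, every composition chain yields only essentially ternary non-degenerated tuples (every admissible middle orbital being $=$), then by Observation~\ref{obs:bowtietuples} this essentially ternary backbone is preserved by $\bowtie$, and I would pp-define a ternary relation by identifying the two equal middle coordinates of the surviving implication, exhibit it together with the two degenerated components, and trace a composition path from $\orbitB_L$ to $\orbitA_L$ in the original $\bipartite_{R_1,R_2}$ to locate the required anti-reflexive middle orbital $\orbitD$ witnessing the $(\orbitB,\orbitD,\orbitA)$-tuple of Case~\ref{degen:essenternary}.

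\emph{Main obstacle.} I expect the hardest part to be the second horn of the dichotomy: certifying that the relation constructed in Case~\ref{degen:essenternary} is genuinely ternary, still a self-complementary implication on $\subsetA$, and carries the claimed $(\orbitB, \orbitD, \orbitA)$-tuple with $\orbitD$ anti-reflexive. The anti-reflexivity conditions in both Cases~\ref{degen:essenternary} and~\ref{degen:partfree} need to be extracted from the composition analysis rather than assumed, which will rely on a careful use of the free-amalgamation hypothesis to control which orbitals can appear on composed walks in $\bipartite_{R_1, R_2}$. A subsidiary difficulty is deciding exactly when the non-connected Case~\ref{degen:nonconnected} is forced: namely, showing that whenever no pp-construction can route a non-degenerated tuple into a single connected piece of $\bipartite_{R,R'}$ containing both anchors, the folded relation $R$ splits in the sense of Definition~\ref{def:connectedness} while still containing a non-degenerated tuple in one of its pieces.
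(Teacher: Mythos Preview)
Your high-level strategy (fold $R_1,R_2$ into a self-complementary implication via iterated $\circ$ and $\bowtie$, then case-split on the types of surviving non-degenerated tuples) matches the paper's, but your case structure misplaces Case~\ref{degen:nonconnected} and this is a genuine gap.

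You propose that Case~\ref{degen:nonconnected} arises when the $(\Pi_{1,2}(R_1)\setminus\subsetA,\Pi_{-2,-1}(R_1)\setminus\subsetB)$-subgraph has only trivial components. But the paper never invokes such a situation; instead, Case~\ref{degen:nonconnected} is the \emph{residual} case of a very specific configuration. The paper's organising device is a \emph{direct path} of maximal essential length from an $\orbitE$-degenerated component (directly below some $\orbitC\in\subsetB\setminus\subsetA$) to an $\orbitD$-degenerated component (directly above $\orbitC$). When this path has essential length $\ge 3$, or has two essentially quaternary tuples, compositions produce a partially-free tuple and one lands in Case~\ref{degen:partfree}; when all tuples on the path are essentially ternary, $\bowtie$-composition yields Case~\ref{degen:essenternary}. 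Case~\ref{degen:nonconnected} is reached precisely when the maximal direct path has essential length exactly $2$ with one essentially ternary and one essentially quaternary tuple: here the paper restricts $R_1$ using the pp-definable sets $\text{FReach}/\text{BReach}$ of Observation~\ref{obs:defnreach} and argues the restricted relation is not connected because any connecting path would witness a direct path of essential length $\ge 3$, contradicting maximality. Your dichotomy ``some composition chain yields an essentially quaternary tuple $\Rightarrow$ Case~\ref{degen:partfree}'' is therefore too coarse: in the mixed length-$2$ situation you \emph{do} have an essentially quaternary tuple, yet the paper routes to Case~\ref{degen:nonconnected}, not Case~\ref{degen:partfree}.

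Two smaller points. First, your argument that $\orbitA$ is anti-reflexive in Case~\ref{degen:partfree} (``$\orbitA={=}$ would collapse\ldots'') is not how the paper secures it: the paper shows there are at least \emph{three} degenerated components (two inside the $(\subsetA,\subsetA)$-subgraph by Observation~\ref{obs:minmaxcomponents} and one outside), and then chooses which of them to call $\orbitA,\orbitB$ depending on where $=$ sits. Second, in Case~\ref{degen:essenternary} the paper uses $\text{FReach}/\text{BReach}$ to pp-define the restricted ternary relation and to isolate exactly the $\orbitE$- and $\orbitD$-degenerated components; your sketch does not explain why identifying the two middle coordinates still yields a self-complementary implication with only those two degenerated components.
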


\begin{proof}
Since $\subsetA \neq \subsetB$ without loss of generality we can assume that there exists an orbit $\orbitC \subseteq \subsetB \setminus \subsetA$.
Let $\orbitD$-degenerated component for some orbital $\orbitD$ be so that there is a path in $\bipartite_{R_1, R_2}$ from $\orbitC$ to $\orbitD$ but there is no path 
from $\orbitC$ to $\orbitD$ that goes through another degenerated non-trivial component --- we will say that the $\orbitD$-degenerated component is directly above $\orbitC$. In a similar manner, let a $\orbitE$-degenerated component be a one directly below $\orbitC$.

We say that $\pi$ is a direct path from $\orbitE_L$ to $\orbitD_L$ in $\bipartite_{R_1, R_2}$
if it is of the form 
$$ \orbitO^1_{L}, \orbitO^2_{R}, \ldots, \orbitO^{2k-1}_{L}, \orbitO^{2k}_{R}, \orbitO^{2k+1}_L$$
in $\bipartite_{R_1, R_2}$  such that  $\orbitO^1$ is $\orbitE$, $\orbitO^{2k+1}$ is $\orbitD$ and for all $i\in [2k]$ 
we have that $R_1$ contains a $(\orbitO^{i}, \ldots, \orbitO^{i+1})$-tuple $t_i$
if $i$ is odd and that $R_2$ contains a  $(\orbitO^{i}, \ldots, \orbitO^{i+1})$-tuple $t_i$ if $i$ is even. 
We also require  that all $t_i$ perhaps with an exception of $1$ and $2k$ are non-degenerated.
The sentence of tuples $t_i$ associated with $\pi$ will be denoted with $\tau$. 
The essential length of $\pi$ is the number of non-degenerated tuples in $\tau$. It is $2k, 2k-1$ or $2k-2$.

Choose now a direct path $\pi$ from $\orbitE_L$ to $\orbitD_L$ in $\bipartite_{R_1, R_2}$ of the largest essential length. 
Since there is one going through $\orbitC_R$ which is different than both $\orbitE$ and $\orbitD$, the essential length of the chosen $\pi$ is at least 
$2$.

First consider the case where all non-degenerated tuples in $\tau$ are essentially ternary. By Observations~\ref{obs:bowtiebipartite},~\ref{obs:selfcomp_bowtie} and~\ref{obs:bowtietuples}, it follows that $(R_1 \bowtie R_2)^k$ is a self-complementary $(\subsetA \rightarrow \subsetA)$-implication and   contains an essentially ternary  $(\orbitE, \ldots, \orbitD)$-tuple. 
Set a new relation $$R'(x_1, x_2, x_3) \equiv R(x_1, x_2, x_2, x_3) \wedge x_2 = x_3 \wedge \subsetC(x_1, x_2) \wedge \subsetD(x_2,x_3)$$ where
$\subsetC = \text{FReach}_L(\orbitE_L) \cap \text{BReach}_L(\orbitD_L)$ and
$\subsetD = \text{FReach}_R(\orbitE_R) \cap \text{BReach}_R(\orbitD_R)$. 
It is easy to see that the only non-trivial components of $R'$ are a $\orbitD$-degenerated and an $\orbitE$-degenerated component and thereby $R'$ is a ternary self-complementary $(\orbitD \rightarrow \orbitD)$-implication that contains all tuples required in Case~\ref{degen:essenternary}.

Observe also that the same reasoning works when either $R_1$ or $R_2$ has an essentially ternary $(\orbitE, \ldots, \orbitD)$-tuple. Thus, from now on we assume that neither of these tuples is present in any of these relations and that $\tau$
contains at least one essentially quaternary tuple. 

Consider now the case where there are at least two 
essentially quaternary tuples in $\tau$.  
By Observations~\ref{obs:selfcomplementary} and~\ref{obs:tuplescomposition}  we have that  $R \equiv (R_1 \circ R_2)^k$ is a self-complementary $(\subsetA \rightarrow \subsetA)$-implication  which contains a fully free  $(\orbitE, \ldots, \orbitD)$-tuple. 
Since $\orbitC \notin \subsetA$, by Observation~\ref{obs:minmaxcomponents}, we have that $\bipartite_{R_1, R_2}$  has at least three pairwise different $\orbitO_i$-degenerated components and two of them are in the $(\subsetA, \subsetA)$-subgraph. The same holds for $\bipartite_{R,R}$.
Without loss of generality assume that the $(\orbitO_1, \orbitO_1)$-component is maximal in the $(\subsetA, \subsetA)$-subgraph, $(\orbitO_2, \orbitO_2)$ is minimal in  the $(\subsetA, \subsetA)$-subgraph and
$(\orbitO_3, \orbitO_3)$ is minimal in the 
$((\Pi_{1,2}(R) \setminus \subsetA), (\Pi_{1,2}(R) \setminus \subsetA))$-subgraph of $\bipartite_{R,R}$.
Now, if $\orbitO_3$ is anti-reflexive, then we are in Case~\ref{degen:partfree}
already when we see $R$ as a self-complementary $(\subsetA \rightarrow \subsetA)$-implication. Otherwise, we are in Case~\ref{degen:partfree} with $R$ seen as a 
self-complementary $(\orbitO_1 \rightarrow \orbitO_1)$-implication.

The last case is where all but one non-degenerated tuple in $\tau$ are essentially ternary. First we assume that $\pi$ is of essential length at least $3$. Thus, there must be three consecutive tuples 
$t_{i}, t_{i+1}, t_{i+2}$ of $\tau$ such that  at least one of them is essentially quaternary and two remaining are essentially ternary. By Observation~\ref{obs:tuplescomposition} we have what follows. If there is $j \in \{0,1\}$ such that $t_{i+j}$ is essentially quaternary, $t_{i+j+1}$ is essentially ternary and $\orbitO_{i+j+2}$ is not $=$ or there is 
$j \in [2]$ such that $t_{i+j}$ is essentially quaternary, $t_{i+j-1}$ is essentially ternary and $\orbitO_{i+j-1}$ is not $=$, then $R_1 \circ R_2$ in case $i$ is odd or $R_2 \circ R_1$ in case $i$ is even contains a partially-free tuple. Otherwise one of two cases holds:
\begin{itemize}
\item $t_i$ is essentially quaternary, $t_{i+1}$ essentially ternary and $\orbitO_{i+2}$ is $=$, or
\item $t_{i+2}$ is essentially quaternary, $t_{i+1}$ essentially ternary and $\orbitO_{i+1}$ is $=$.
\end{itemize}
Again, by Observation~\ref{obs:tuplescomposition}, we have that either $R_1 \circ R_2$ or $R_2 \circ R_1$ has  an essentially quaternary tuple. Clearly neither $\orbitO_{i+3}$ in the first case nor $\orbitO_{i}$ in the second is $=$. Furthermore both $t_{i+2}$ in the first case and $t_i$ in the second case are essentially ternary. It follows that $R_1 \circ R_2 \circ R_1$ in a case $i$ is odd or $R_2 \circ R_1 \circ R_2$ if $i$ is even contain a partially free tuple. By Observations~\ref{obs:selfcomplementary}, \ref{obs:circbipartite} and~\ref{obs:tuplescomposition}, the relation $R \equiv (R_1 \circ R_2)^k$ is a self-complementary $(\orbitA \rightarrow \orbitA)$-implication that contains a partially-free tuple. It is also easy to see that all non-trivial degenerated components in $\bipartite_{R_1, R_2}$ are also in $\bipartite_{R, R}$. By Observation~\ref{obs:minmaxcomponents} we have that the former as well as the latter graph 
has at least three pairwise different $\orbitO_i$-degenerated components so that two of them are in $(\subsetA, \subsetA)$. Now, we complete the reasoning
exactly in the same way as in the previous case.

The only case left to be considered is when $\tau$ has exactly two non-degenerated tuples. In particular we can assume that a direct path from $\orbitE_L$ to $\orbitD_L$ that goes through $\orbitC_R$ is of essential length $2$. 
Thus we have a  $(\orbitE, \ldots, \orbitC)$-tuple in $R_1$ and a  
$(\orbitC, \ldots, \orbitD)$-tuple in $R_2$. One of these tuples  is  essentially ternary and the other is essentially quaternary. Without loss of generality assume that the former is essentially ternary.
Observe now that 
$$R'_1(x_1, x_2, x_3, x_4) \wedge x_2 = x_3 \wedge \subsetC(x_1, x_2) \wedge \subsetD(x_3,x_4)$$ where
$\subsetC = \text{FReach}_L(\orbitE_L) \cap \text{BReach}_L(\orbitD_L)$ and
$\subsetD = \text{FReach}_R(\orbitE_R) \cap 
\text{BReach}_R(\orbitD_R)$ contains a non-degenerated tuple.  In order to complete the proof of the proposition we will show that $R'_1$ is not connected, i.e.,the graph $\bipartite_{R,R'}$ where $R'(x_1, x_2, x_3, x_4) \equiv R(x_4, x_3, x_2, x_1)$ is not connected,
in particular that there is no path from $\orbitE_L$ to $\orbitD_R$. 
Observe that it is enough to show that there is no sequence of tuples 
$t_1, \ldots, t_{2k+1}$
such that 
$t_i$ is a $(\orbitO_i, \ldots, \orbitO_{i+1})$-tuple in $R_1$ when $i$ is odd and 
such that 
$t_i$ is a $(\orbitO_{i+1}, \ldots, \orbitO_{i})$-tuple in $R'_1$ when $i$ is even
and such that $\orbitO_1$ is $\orbitD$ and $\orbitO_{2k}$ is $\orbitE$.
Recall that $R_1$ has no $(\orbitE, \ldots, \orbitD)$-tuple and hence $k > 1$. It follows that there is a $(\orbitP, \ldots, \orbitP')$-tuple in $R_1$ with 
$\orbitP \subseteq \text{FReach}_L(\orbitE_L) \cap \text{BReach}_L(\orbitD_L)$ and
$\orbitP' \subseteq \text{FReach}_L(\orbitE_L) \cap \text{BReach}_L(\orbitD_L)$ and such that 
$\{ \orbitP, \orbitP' \} \cap \{ \orbitE, \orbitD \} = \emptyset$.
It implies that there is a direct path from $\orbitE_L$ do $\orbitD_L$ that goes through $\orbitP_L, \orbitP'_R$. It has to be of 
essential length at least $3$ which contradicts the assumption. We have that one of the three cases in the formulation of the propositions follows.
\end{proof}

We will now show that in every case from the formulation of Proposition~\ref{prop:degenreduce}, the structure $\structB$ is not preserved by a chain of quasi directed  Jonsson operations.

\noindent


\begin{lemma}
\label{lem:degen_nonconnected}
Let $\structB$ be a first-order reduct of a finitely bounded homogenous symmetric binary core whose age has free amalgamation. If $\structB$  pp-defines a quaternary relation $R$ which is not connected and which contains a non-degenerated $(\orbitA, \ldots, \orbitB)$-tuple, then 
$\structB$ is not preserved by any chain of  quasi directed Jonsson operations. 
\end{lemma}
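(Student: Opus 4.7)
The plan is to reduce the statement to Theorem~\ref{thm:nondegen} by repackaging $R$ together with its reflection as a pair of complementary implications whose bipartite digraph contains a non-degenerated component. Set $R'(x_1, x_2, x_3, x_4) \equiv R(x_4, x_3, x_2, x_1)$; then $R'$ is pp-definable in $\structB$, and because orbitals in $\structA$ are symmetric, $R$ and $R'$ agree on projections, and the digraph $\bipartite_{R, R'}$ appearing in Theorem~\ref{thm:nondegen} is precisely the one from Definition~\ref{def:connectedness}. Moreover, every $R$-arc $\orbitO_L \to \orbitP_R$ is paired with the reverse $R'$-arc, so the digraph is essentially undirected and its connected components are the usual ones.

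Concretely, I would first use the hypothesis that $R$ is not connected to fix a component $C$ of $\bipartite_{R, R'}$ containing the non-degenerated $(\orbitA, \ldots, \orbitB)$-tuple $t$ (viewed as the $R$-arc $\orbitA_L \to \orbitB_R$), together with some other component $C'$. Let $\subsetA$ be the union of those orbitals $\orbitO$ with $\orbitO_L \in C$, and let $\subsetB$ be the union of those $\orbitO$ with $\orbitO_R \in C$; define $\subsetA', \subsetB'$ analogously for $C'$. Every vertex of $\bipartite_{R, R'}$ has positive degree, since any orbital occurring in a projection of $R$ is witnessed by some tuple, so $\subsetA'$ and $\subsetB'$ are non-empty and disjoint from $\subsetA, \subsetB$ respectively; hence $\subsetA \subsetneq \Pi_{1,2}(R)$ and $\subsetB \subsetneq \Pi_{3,4}(R)$. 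Because no arc leaves $C$ we have $\subsetA + R \subseteq \subsetB$, and conversely every $\orbitP_R \in C$ receives an $R$-arc from within $C$; using homogeneity of $\structA$ (together with the fact that $\Aut(\structA) = \Aut(\structB)$, since $\structB$ is a first-order expansion of $\structA$) one can transport the last pair of such a witnessing tuple to any target pair in $\orbitP$, giving $\subsetA + R = \subsetB$. The analogous computation yields $\subsetB + R' = \subsetA$, and together with projection agreement this makes $R:\subsetA \rightarrow \subsetB$ and $R':\subsetB \rightarrow \subsetA$ complementary implications pp-definable in $\structB$.

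Finally, the component $C$ is non-trivial (it contains the distinct vertices $\orbitA_L$ and $\orbitB_R$, which lie on opposite sides of the bipartition) and non-degenerated (it contains $t$), so Theorem~\ref{thm:nondegen} applied to the pair $(R, R')$ immediately yields that $\structB$ is not preserved by any chain of quasi directed J\'{o}nsson operations. The main obstacle I anticipate is the bookkeeping needed to verify $\subsetA + R = \subsetB$ and its reverse in the precise sense of Definition~\ref{def:plus}, which rests on connectivity within $C$ and on homogeneity; once that is settled, no further Jónsson-chain manipulation is required here, because all of the heavy lifting has already been carried out in the proof of Theorem~\ref{thm:nondegen}.
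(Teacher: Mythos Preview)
Your proposal is correct and follows essentially the same approach as the paper's own proof: define $R'(x_1,x_2,x_3,x_4) \equiv R(x_4,x_3,x_2,x_1)$, take the $(\subsetA,\subsetB)$-component of $\bipartite_{R,R'}$ containing the non-degenerated tuple, observe that non-connectedness makes $R:\subsetA\rightarrow\subsetB$ and $R':\subsetB\rightarrow\subsetA$ complementary implications, and invoke Theorem~\ref{thm:nondegen}. The paper's proof is terser and leaves the verification of $\subsetA + R = \subsetB$ implicit, whereas you correctly spell out why homogeneity is needed there; your extra bookkeeping is sound and fills in exactly the details the paper skips.
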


\begin{proof}
Let $R'(x_1, x_2, x_3, x_4) \equiv R(x_4, x_3, x_2, x_1)$,
and let $(\subsetA, \subsetB)$-component be a component of $\bipartite_{R,R'}$
such that $\orbitA \subseteq \subsetA$ and $\orbitB \subseteq \subsetB$. Since $R$ is not connected, it is  a $(\subsetA \rightarrow \subsetB)$-implication, $R'$ is a $(\subsetB \rightarrow \subsetA)$-implication and $R,R'$ are complementary. Since $(\subsetA, \subsetB)$ is non-degenerated it follows by Theorem~\ref{thm:nondegen}  that $\structB$ is not preserved by any chain of quasi directed  J\'{o}nsson operations.
\end{proof}


\noindent
We now turn to Case~\ref{degen:essenternary} in the formulation of Proposition~\ref{prop:degenreduce}.

\begin{lemma}
\label{lem:ternarynoJonsson}
Let  $\structB$ be a first-order expansion of a finitely bounded homogeneous symmetric binary core whose age has free amalgamation. If $\structB$ pp-defines a ternary self-complementary  relation $R:\subsetA \rightarrow \subsetA$ which contains:
\begin{itemize}
\item a $\orbitA$-degenerated component,
 a $\orbitB$-degenerated component for some orbitals $\orbitA \subseteq \subsetA$, $\orbitB \nsubseteq \subsetA$, and
\item a $(\orbitB,  \orbitD, \orbitA)$-tuple for some anti-reflexive orbital $\orbitD$.
\end{itemize}
then $\structB$ is not preserved by any chains of quasi  directed J\'{o}nsson operations. 
\end{lemma}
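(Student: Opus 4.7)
The plan is to argue by contradiction: assume that a chain $(D_1, \dots, D_n)$ of quasi directed J\'onsson operations preserves $\structB$, and therefore preserves both $R$ and every orbital of $\Aut(\structA)$. Following the inductive template of Claims~\ref{claim:bnbnoJonsson} and~\ref{claim:beqbnoJonsson}, I will formulate and prove inductively that, for every $i \in [n]$ and every $(U,V)$-constant pair of triples $(u,v) \in A^3 \times A^3$ drawn from a fixed list of patterns (at least $([3],[2])$-constant satisfying $\orbitA\orbitA\orbitB(u,v)$ and $([3],\{2,3\})$-constant satisfying $\orbitA\orbitB\orbitB(u,v)$, together with a handful of auxiliary patterns described below), we have $(D_i(u),D_i(v)) \in \subsetA$. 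Specializing the resulting statement at $i=n$ via the terminal identity $D_n(x,y,y) = D_n(y,y,y)$ will force $(D_n(u),D_n(v)) \in \subsetA$ for a pair satisfying $\orbitB\orbitB\orbitB(u,v)$; since $D_n$ preserves $\orbitB$ and $\orbitB \nsubseteq \subsetA$, this contradicts the disjointness of the orbital $\orbitB$ from $\subsetA$.

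The key adaptation from the proofs of Lemmas~\ref{lem:bnbnoJonsson} and~\ref{lem:beqbnoJonsson} is the replacement of the quaternary ``fully-free'' tuple with the ternary inventory actually available in $R$: the $\orbitA$-degenerated tuple, the $\orbitB$-degenerated tuple, and the asymmetric $(\orbitB,\orbitD,\orbitA)$-tuple. The base case for $([3],[2])$-constant pairs satisfying $\orbitA\orbitA\orbitB$ proceeds exactly as before: free amalgamation provides a one-coordinate modification $u'$ of $u$ with every componentwise pair $(u'[j],v[j])$ in $\orbitA$, preservation of $\orbitA$ by $D_1$ gives $(D_1(u'),D_1(v)) \in \orbitA$, and $D_1(x,x,y) = D_1(x,x,x)$ identifies $D_1(u) = D_1(u')$. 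For $\orbitA\orbitB\orbitB$-pairs I will construct a witness triple $s \in A^3$ together with a one-coordinate modification $u'$ of $u$ such that every column $(s[j], u'[j], v[j])$ is a tuple of $R$: the $\orbitA$-degenerated tuple is placed in those columns where $(u'[j],v[j]) \in \orbitA$, while either the $\orbitB$-degenerated or the $(\orbitB,\orbitD,\orbitA)$-tuple is placed in the remaining column, the choice being dictated by which pattern of $(s, u')$ the induction has already handled. Preservation of $R$ by $D_i$ then gives $(D_i(s), D_i(u'), D_i(v)) \in R$, and the implication $R \colon \subsetA \rightarrow \subsetA$ promotes $(D_i(s), D_i(u')) \in \subsetA$ to $(D_i(u'), D_i(v)) \in \subsetA$; an appropriate J\'onsson identity (either $D_i(x,y,x) = D_i(x,x,x)$ or $D_1(x,x,y) = D_1(x,x,x)$) identifies $D_i(u')$ with $D_i(u)$. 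The step from $i$ to $i{+}1$ uses $D_i(x,y,y) = D_{i+1}(x,x,y)$ exactly as in Claim~\ref{claim:bnbnoJonsson}; the final contradiction is obtained by invoking the $([3],\{2,3\})$-constant $\orbitA\orbitB\orbitB$-statement at stage $n$ applied to the pair $(v, u')$, where $u'$ is obtained from the $[3]$-constant $u$ by modifying a single coordinate so that one componentwise pair becomes $\orbitA$-related, and then using $D_n(x,y,y) = D_n(y,y,y)$ to replace $D_n(u')$ by $D_n(u)$.

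The main obstacle I foresee lies in constructing the witness triples and in pinning down the full list of auxiliary patterns that the claim must cover. Because each J\'onsson identity licences the modification of only a single coordinate of $u$, the natural pair $(s, u')$ produced by a column-by-column construction typically lands in an off-diagonal pattern such as $(\{2,3\}, \{1,3\})$-constant satisfying $\orbitA\orbitA\orbitB$, rather than in the ``principal'' $([3],[2])$- or $([3],\{2,3\})$-constant patterns. Consequently, in the spirit of Claim~\ref{claim:beqbnoJonsson}, the inductive claim has to be enlarged to such auxiliary patterns, and each of them needs its own base case, typically by a further free-amalgamation reduction to a smaller pattern. The delicate combinatorial task is to verify that this augmented system of patterns is closed under the J\'onsson induction steps while being supported, column by column, exclusively by the three rigid tuple types available in $R$; the asymmetry of $(\orbitB,\orbitD,\orbitA)$ (whose two end-projections carry distinct orbitals) is what makes this bookkeeping noticeably heavier than in the earlier lemmas, even though the overall contradiction is extracted along the same lines.
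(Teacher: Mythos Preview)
Your overall strategy matches the paper's: assume a chain $(D_1,\dots,D_n)$ preserves $\structB$, prove by induction on $i$ that $(D_i(u),D_i(v))\in\subsetA$ for $([3],[2])$-constant pairs satisfying $\orbitA\orbitA\orbitB$ and for $([3],\{2,3\})$-constant pairs satisfying $\orbitA\orbitB\orbitB$, and extract a contradiction at $i=n$ via~(\ref{eq:Dn}). The base case for the first bullet and the passage from $i$ to $i+1$ are exactly as you describe.

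Where your proposal diverges is in the second bullet, and the ``obstacle'' you flag is real under your setup but is completely dissolved by a trick you have not seen. In the paper the definition of $(U,V)$-constant pairs is \emph{modified for this lemma}: instead of Condition~\ref{UVconstant:three} forcing $(v[2],v[3])\in\orbitN$ (respectively $(v[1],v[2])\in\orbitN$), one takes this orbital to be $\orbitD$. With this tweak, the witness triple $t$ for a $([3],\{2,3\})$-constant pair $(u,v)$ with $\orbitA\orbitB\orbitB(u,v)$ requires no new elements at all: one simply sets $t[1]=t[2]=v[1]$ and $t[3]=v[3]$. Then $(t,u)$ is $([2],[3])$-constant with $\orbitA\orbitA\orbitB(t,u)$ and $(t[2],t[3])\in\orbitD$, i.e.\ (using symmetry of orbitals and of $\subsetA$) exactly the pattern already handled by the first bullet. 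The three columns $(t[j],u'[j],v[j])$ become the $\orbitA$-degenerated tuple, the $(\orbitB,\orbitD,\orbitA)$-tuple, and the $\orbitB$-degenerated tuple, where $u'$ differs from $u$ only at the middle coordinate; identity~(\ref{eq:Di}) then gives $D_i(u')=D_i(u)$. No auxiliary off-diagonal patterns such as $(\{2,3\},\{1,3\})$ are ever needed.

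In short: the bookkeeping you anticipate being ``noticeably heavier'' evaporates once you realise that replacing $\orbitN$ by $\orbitD$ in the pattern definition lets you recycle $v[1]$ as the middle-column witness, forcing $(t,u)$ back into the first-bullet pattern. Your sketch is not wrong, but it leaves the closure of the enlarged pattern system unverified, whereas the paper's $\orbitD$-modification makes the two principal patterns already closed.
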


\begin{proof}
The proof goes along the lines of similar proofs in the previous subsection. Except for a trivial case of $(3,3)$-constant pairs of triples, we use  Definition~\ref{def:UVconstantAB} only for $([3],[2])$-constant and $([3], \{ 2,3\})$-constant pairs $u,v$ of triples in $A$ satisfying $\orbitA\orbitA\orbitB$ and $\orbitA\orbitB\orbitB$ respectively with a difference to the original definition such that 
$(v[2], v[3]) \in \orbitD$ in the first case and $(v[1],v[2]) \in\orbitD$ in the second case. There is a $(\orbitB, \orbitD, \orbitA)$-tuple $t$ in $R$, and hence   both kinds of $(U,V)$-constant pairs of triples mentioned above exist in $A$.

We assume on the contrary  that
there is a chain $(D_1, \ldots,D_n)$ of quasi directed J\'{o}nsson operations preserving $\structB$ and $R$. We  show that it contradicts the fact that $\Pol(\structB)$ preserves $\subsetA$ which is $pp$-definable . 

\begin{claim}
\label{claim:ternarynoJonsson}
For all $i \in [n]$ we have both of the following.
\begin{itemize}
\item  For all $([3],[2])$-constant pairs of triples $(u,v)$ in $A$ 
satisfying $\orbitA \orbitA \orbitB(u,v)$ we have that $(D_i(u), D_i(v)) \in \subsetA$. 
\item For all $([3],\{2,3\})$-constant pairs of triples $(u,v)$ in $A$ 
satisfying $\orbitA\orbitB\orbitB(u,v)$ we have that  $(D_i(u), D_i (v))\in \subsetA$. 
\end{itemize}
\end{claim}

\begin{figure}
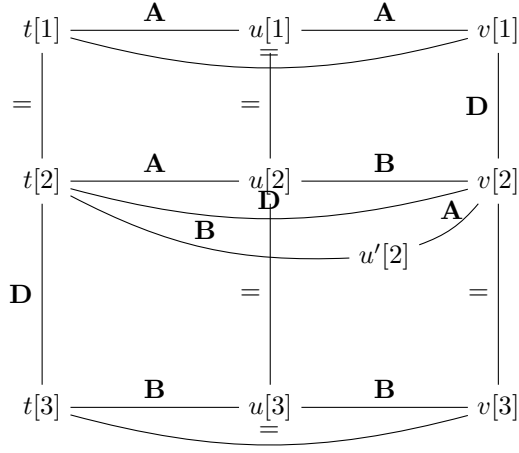

\tikz {
\node (t1) at (4,4) {$t[1]$};
\node (t2) at (4,2) {$t[2]$};
\node (t3) at (4,-1) {$t[3]$};
\node (u1) at (7,4) {$u[1]$};
\node (u2) at (7,2) {$u[2]$};
\node (u3) at (7,-1) {$u[3]$};
\node (v1) at (10,4) {$v[1]$};
\node (v2) at (10,2) {$v[2]$};
\node (v3) at (10,-1) {$v[3]$};
\node (u2prim) at (8.5 ,1) {$u'[2]$};

\draw (t1) -- node[above] {$\orbitA$} (u1);
\draw (t2) -- node[above] {$\orbitA$} (u2);
\draw (t3) -- node[above] {$\orbitB$} (u3);

\draw (u1) -- node[above] {$\orbitA$} (v1);
\draw (u2) -- node[above] {$\orbitB$} (v2);
\draw (u3) -- node[above] {$\orbitB$} (v3);

\draw (t1) -- node[left] {$=$} (t2);
\draw (t2) -- node[left] {$\orbitD$} (t3);

\draw (u1) -- node[left] {$=$} (u2);
\draw (u2) -- node[left] {$=$} (u3);

\draw (v1) -- node[left] {$\orbitD$} (v2);
\draw (v2) -- node[left] {$=$} (v3);

\path (u2prim) edge[bend right=15] node[above] {$\orbitA$} (v2);
\path (u2prim) edge[bend left=15] node[above] {$\orbitB$} (t2);

\path (t1) edge[bend right=15] node[above] {$=$} (v1);
\path (t2) edge[bend right=15] node[above] {$\orbitD$} (v2);
\path (t3) edge[bend right=15] node[above] {$=$} (v3)
}
\centering
\caption{The figure needed for a proof of the second item in Claim~\ref{claim:ternarynoJonsson}. }
\label{fig:ternaryextension}
\end{figure}

\smallskip
\noindent
\begin{proof}
We prove the claim by the induction on $i \in [n]$. 

\textbf{(BASE CASE)} For the first bullet 
we start with a  $([3],[2])$-constant pair of triples $(u,v)$ in $A$ 
satisfying $\orbitA \orbitA \orbitB(u,v)$. If we replace $v$ with $v'$ which is identical to $v$ with a difference that  $v'[3] = v[2]$.
It is clear that $(u,v')$ is a $([3],[3])$-constant pair of triples in $A$ satisfying $\orbitA\orbitA\orbitA$, and hence $(D_1(u), D_1(v')) \in \subsetA$. By~(\ref{eq:D1}) we have $D_1(v') = D_1(v) $. It follows  $(D_1(u), D_1(v)) \in \subsetA$.

For the second bullet in the formulation of the claim,
we start with a $([3],\{2,3\})$-constant pair of triples $(u,v)$ 
satisfying $\orbitA \orbitB \orbitB (u,v)$ and extend it with $t \in A^3$ as pictured in Figure~\ref{fig:ternaryextension}.
Observe that no new elements are needed, and hence the extension is trivial. Indeed, we have that $t[1] = t[2] = v[2]$ and $t[3] = v[3]$. There also exists $u'$ identical with $u$ with a difference that $u'[2] \neq u[2]$  so that 

\begin{itemize}
\item $a_1 =(t[1], u[1], v[1])$ is a $(\orbitA, =, \orbitA)$-tuple in $R$, 
\item $a_2 = (t[2], u'[2], v[2])$ is a $(\orbitB, \orbitD, \orbitA)$-tuple in $R$, and
\item $a_3 = (t[3], u[3], v[3])$ is a $(\orbitB, =, \orbitB)$-tuple in $R$.
\end{itemize}
Indeed, since the age of $\structA$ has free amalgamation we have that 
there are  $a,b,c,d \in A$ such that $(a,b,c)$ is a $(\orbitA, \orbitD, \orbitB)$-tuple, $(a,d,c)$ is a $(\orbitB, \orbitD, \orbitA)$-tuple and $(b,d) \in \orbitN$. 
If $\alpha$ is an automorphism of $\structA$ sending   
$(t[2], u[2], v[2])$ to $(a,b,c)$, then we may take $\alpha^{-1}(d)$ for $u'[2]$.

By the first bullet we have that $(D_1(t),D_1(u)) \in \subsetA$, by~(\ref{eq:Di}) that $(D_1(t), D_1(u')) \in \subsetA$. Since $R$ is a $(\subsetA \rightarrow \subsetA)$-implication, it follows that $(D_1(u',v)) \in \subsetA$ and $(D_1(u,v)) \in \subsetA$. It completes the proof of the base case.

\smallskip
\noindent
\textbf{(INDUCTION STEP)}
For the induction step, consider any $([3],[2])$-constant pair of triples $(u,v)$
in $A$
satisfying $\orbitA \orbitA \orbitB (u,v)$. We want to show that
$(D_{i+1}(u), D_{i+1}(v)) \in \subsetA$.
To that end consider  $v'$ such that $v'[1] =  v[1]$ and $v'[2] = v'[3] = v[3]$.
Observe that $(u,v')$ is a $([3],\{ 2,3 \})$-constant pair of triples in $A$ satisfying $\orbitA \orbitB \orbitB (u,v')$. By the induction hypothesis, it follows that ($D_i(u),D_i(v')) \in \subsetA$.
By~(\ref{eq:Dii+1}), we have $D_{i}(v) = D_{i+1}(v')$
and clearly $D_{i+1}(u) = D_i(u)$. It follows that 
$(D_{i+1}(u), D_{i+1}(v)) \in \subsetA$.
Now, as in the base case we show that
$(D_{i+1}(u), D_{i+1}(v)) \in \orbitA$ for all $([3], \{ 2,3 \})$-constant 
pair of triples $(u,v)$ in $\orbitA$ satisfying $\orbitA \orbitB \orbitB(u,v)$. It completes the proof of the claim.  
\end{proof}

\begin{figure}
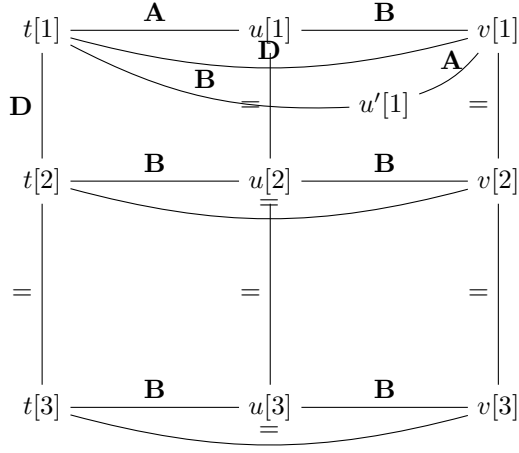

\tikz {
\node (t1) at (4,4) {$t[1]$};
\node (t2) at (4,2) {$t[2]$};
\node (t3) at (4,-1) {$t[3]$};
\node (u1) at (7,4) {$u[1]$};
\node (u2) at (7,2) {$u[2]$};
\node (u3) at (7,-1) {$u[3]$};
\node (v1) at (10,4) {$v[1]$};
\node (v2) at (10,2) {$v[2]$};
\node (v3) at (10,-1) {$v[3]$};
\node (u1prim) at (8.5 ,3) {$u'[1]$};

\draw (t1) -- node[above] {$\orbitA$} (u1);
\draw (t2) -- node[above] {$\orbitB$} (u2);
\draw (t3) -- node[above] {$\orbitB$} (u3);

\draw (u1) -- node[above] {$\orbitB$} (v1);
\draw (u2) -- node[above] {$\orbitB$} (v2);
\draw (u3) -- node[above] {$\orbitB$} (v3);

\draw (t1) -- node[left] {$\orbitD$} (t2);
\draw (t2) -- node[left] {$=$} (t3);

\draw (u1) -- node[left] {$=$} (u2);
\draw (u2) -- node[left] {$=$} (u3);

\draw (v1) -- node[left] {$=$} (v2);
\draw (v2) -- node[left] {$=$} (v3);

\path (u1prim) edge[bend right=15] node[above] {$\orbitA$} (v1);
\path (u1prim) edge[bend left=15] node[above] {$\orbitB$} (t1);

\path (t1) edge[bend right=15] node[above] {$\orbitD$} (v1);
\path (t2) edge[bend right=15] node[above] {$=$} (v2);
\path (t3) edge[bend right=15] node[above] {$=$} (v3)
}
\centering
\caption{A figure which we use in the final step of the proof of Lemma~\ref{lem:ternarynoJonsson}.}
\label{fig:ternaryfinalstep}
\end{figure}

For the final step of the proof of the lemma consider  a $([3], [3])$-constant pair of triples $u,v$ satisfying $\orbitB \orbitB \orbitB (u,v)$
and  $t \in A^3$ so that $u,t$ forms a $(\{2,3\},[3])$-constant pair of triples satisfying $\orbitA\orbitB\orbitB(u,v)$ as in Figure~\ref{fig:ternaryfinalstep}. By the assumption we have that there is a $(\orbitB, \orbitD, \orbitA)$-tuple $(c,a,b)$ in $A$. Set $\alpha$ to be an automorphism of $\structA$ sending $(a,b)$ to $(u[1], v[1])$. Observe that we may take $(\alpha(c),\alpha(b),\alpha(b))$ for $t$.

Let now $u'$ be as $u$ with a difference that $u'[1]$ satisfying 
\begin{itemize}
\item $a_1 =(t[1], u'[1], v[1])$ is a $(\orbitB, \orbitD,  \orbitA)$-tuple, 
\item $a_2 = (t[2], u[2], v[2])$ is a $(\orbitB, =, \orbitB)$-tuple, and
\item $a_3 = (t[3], u[3], v[3])$ is a $(\orbitB, =, \orbitB)$-tuple.
\end{itemize}
as well as $(u'[1], u[i]) \in \orbitN$ for $i \in \{ 2, 3 \}$.
Again, we use the fact that the age of $\structA$ has free amalgamation in order to  see that there are 
$a,b,c,d \in A$ such that $(a,b,c)$ is a $(\orbitA, \orbitD, \orbitB)$-tuple, $(a,d,c)$ is a $(\orbitB, \orbitD, \orbitA)$-tuple and $(b,d) \in \orbitN$. Sending $(t[1], u[1], v[1])$ by an automorphism $\alpha$ to $(a,b,c)$ we get that $u'[1]$ is $\alpha^{-1}(d)$. Thus, it exists.

By~(\ref{eq:Dn}) we have that $(D_n(t), D_n(u')) \in \subsetA$. 
Since all  $a_i$ with $i \in [3]$ are in $R$ and $R$ is a $(\subsetA \rightarrow \subsetA)$-implication, it follows that  $(D_n(u'), D_n(v)) \in \subsetA$. Again, by~(\ref{eq:Dn}), we have $(D_n(u), D_n(v)) \in \subsetA$.
It contradicts the assumption that $D_n$
preserves $\orbitB$ and completes the proof of the lemma. 
\end{proof}

\noindent
We finally turn to the third case in Proposition~\ref{prop:degenreduce}.

\begin{lemma}
\label{lem:partfreenoJonsson}
Let $\structB$ be a  first-order expansion of a finitely bounded homogeneous symmetric binary core $\structA$ whose age has free amalgamation.
If $\structB$ pp-defines a self-complementary implication $R :\subsetA \rightarrow \subsetA$ 
such that $\bipartite_{R, R}$ contains a $\orbitA$-degenerated
component and a $\orbitB$-degenerated component where $\orbitA \subseteq \subsetA$ and $\orbitB \nsubseteq \subsetA$ are both anti-reflexive and a 
partially-free 
$(\orbitO_{11}, \orbitO_{12}, \orbitO_{13},\orbitN, \orbitO_{23}, \orbitO_{24}, \orbitO_{34} )$-tuple is in $R$,  
then $\structB$ is not preserved by any chain of quasi directed J\'{o}nsson operations.  
\end{lemma}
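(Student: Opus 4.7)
The plan is to follow the inductive template of Lemmas~\ref{lem:bnbnoJonsson} and~\ref{lem:beqbnoJonsson}. Assume on the contrary that a chain $(D_1, \ldots, D_n)$ of quasi directed J\'{o}nsson operations preserves $R$. As in the earlier lemmas, one proves by induction on $i \in [n]$ that $(D_i(u), D_i(v)) \in \subsetA$ for appropriate families of $(U, V)$-constant pairs of triples satisfying orbital conditions of the form $\orbitA\orbitA\orbitB$ and $\orbitA\orbitB\orbitB$, and then, in the final step, derives a contradiction by obtaining $(D_n(u), D_n(v)) \in \subsetA$ for a pair satisfying $\orbitB\orbitB\orbitB$ --- impossible because $\orbitB \nsubseteq \subsetA$.

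The cleanest route is a reduction to Lemma~\ref{lem:beqbnoJonsson}. Starting from $R$, I would pp-define a refined self-complementary $(\subsetA \rightarrow \subsetA)$-implication $R^*$ that contains both a fully-free $(\orbitA, \orbitN, \orbitN, \orbitN, \orbitN, \orbitA)$-tuple and a $(\orbitB, \orbitB, =, =, \orbitB, \orbitB)$-tuple. The fully-free tuple is manufactured from the partially-free tuple in two moves. First, since composing two essentially quaternary tuples yields a fully-free tuple by Observation~\ref{obs:tuplescomposition}, an iterated $\circ$-composition of $R$ that traverses the partially-free tuple twice produces a tuple whose cross-positions are all $\orbitN$. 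Second, using the self-complementary structure of $R$ and the $\orbitA$-degenerated cycle in $\bipartite_{R, R}$, suitable $\orbitA$-degenerated prefixes and suffixes are prepended and appended via further $\circ$-compositions to align the endpoints with $\orbitA$; by the last clauses of Observation~\ref{obs:tuplescomposition}, composing a fully-free tuple with degenerated tuples preserves full freeness. The $\orbitB$-degenerated component survives these compositions as a closed walk, delivering the $(\orbitB, \orbitB, =, =, \orbitB, \orbitB)$-tuple in $R^*$. By Observation~\ref{obs:selfcomplementary}, $R^*$ is self-complementary, so the hypotheses of Lemma~\ref{lem:beqbnoJonsson} apply and yield the contradiction immediately.

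The main obstacle is the alignment step: the partially-free tuple has its left and right pairs in orbitals contained in $\subsetA$ that need not equal $\orbitA$, so the composition must route through the degenerated $\orbitA$-cycle in the right order. Verifying reachability in $\bipartite_{R, R}$ from $\orbitA_L$ to the left endpoint of the partially-free tuple, and from its right endpoint back to $\orbitA_R$, requires inspecting the connectivity inherited from the construction in Proposition~\ref{prop:degenreduce}, which guarantees that the $\orbitA$-degenerated component lies on cycles passing through every orbital of $\subsetA$ reachable by $R$-compositions. If such alignment turns out to be unavailable in some configuration, I would fall back on a direct inductive argument mimicking Lemma~\ref{lem:beqbnoJonsson}, with amalgamation diagrams adapted so that the unique free cross-position $(t[1], t[4])$ of the partially-free tuple is placed precisely where freeness is needed for the amalgamation to go through; the degenerated $\orbitA$- and $\orbitB$-tuples then populate the other rows of the diagram, and the anti-reflexivity of both $\orbitA$ and $\orbitB$ assumed in the lemma rules out the degenerate collapses that would otherwise block the construction.
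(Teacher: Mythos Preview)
Your primary route—the reduction to Lemma~\ref{lem:beqbnoJonsson}—does not go through under the lemma's hypotheses. The obstruction is more basic than the alignment issue you flag: a partially-free tuple is only guaranteed to have $(t[1],t[4]) \in \orbitN$, and nothing forces it to be essentially quaternary, so the clause of Observation~\ref{obs:tuplescomposition} that manufactures a fully-free tuple from two essentially quaternary ones is unavailable. Even granting essential quaternarity, the alignment step fails: the lemma is stated independently of Proposition~\ref{prop:degenreduce}, so you cannot invoke connectivity inherited from that construction; there is no guaranteed path in $\bipartite_{R,R}$ from $\orbitA_L$ to the left endpoint of the partially-free tuple, nor from its right endpoint back to $\orbitA_R$, and the degenerated $\orbitA$- and $\orbitB$-components only compose with tuples whose endpoints already sit in $\orbitA$ or $\orbitB$.

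Your fallback is exactly what the paper does, and it is the right argument. The one technical point your description understates is this: because the partially-free tuple carries arbitrary orbitals at every position except $(1,4)$, the amalgamation diagram must introduce \emph{two} fresh middle elements in the critical row—one in the $t$-column and one in the $u$-column—so that $(s[2],t'[2],u'[2],v[2])$ realises the partially-free tuple while $(s[2],v[2])\in\orbitN$ is inherited from the existing configuration (here $s$ and $t$ are taken with values already present among $u,v$, using the anti-reflexivity of $\orbitA$ and $\orbitB$). Identity~(\ref{eq:Di}) is then applied \emph{twice}—once to replace $t'$ by $t$ and once to replace $u'$ by $u$—rather than once as in Lemmas~\ref{lem:bnbnoJonsson} and~\ref{lem:beqbnoJonsson}. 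Rows $1$ and $3$ carry the degenerated $(\orbitA,\orbitA,=,=,\orbitA,\orbitA)$- and $(\orbitB,\orbitB,=,=,\orbitB,\orbitB)$-tuples, and the induction and final step proceed as in the earlier lemmas.
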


\begin{figure}
\tikz {
\node (s1) at (0,5) {$s[1]$};
\node (s2) at (0,3) {$s[2]$};
\node (s3) at (0,0) {$s[3]$};
\node (t1) at (2,5) {$t[1]$};
\node (t2) at (2,3) {$t[2]$};
\node (t3) at (2,0) {$t[3]$};
\node (u1) at (4,5) {$u[1]$};
\node (u2) at (4,3) {$u[2]$};
\node (u3) at (4,0) {$u[3]$};
\node (v1) at (6,5) {$v[1]$};
\node (v2) at (6,3) {$v[2]$};
\node (v3) at (6,0) {$v[3]$};
\node (t2prim) at (1,1.2) {$t'[2]$};
\node (u2prim) at (5,1.2) {$u'[2]$};

\draw (s1) -- node[above] {$\orbitA$} (t1);
\draw (s2) -- node[above] {$\orbitA$} (t2);
\draw (s3) -- node[above] {$\orbitB$} (t3);

\draw (t1) -- node[above] {$=$} (u1);
\draw (t2) -- node[above] {$=$} (u2);
\draw (t3) -- node[above] {$=$} (u3);

\draw (u1) -- node[above] {$\orbitA$} (v1);
\draw (u2) -- node[above] {$\orbitB$} (v2);
\draw (u3) -- node[above] {$\orbitB$} (v3);

\draw (s1) -- node[left] {$=$} (s2);
\draw (s2) -- node[left] {$\orbitN$} (s3);

\draw (t1) -- node[left] {$=$} (t2);
\draw (t2) -- node[left] {$=$} (t3);

\draw (u1) -- node[left] {$=$} (u2);
\draw (u2) -- node[left] {$=$} (u3);

\draw (v1) -- node[left] {$\orbitN$} (v2);
\draw (v2) -- node[left] {$=$} (v3);

\path (u2prim) edge[bend right=20] node[above] {$\orbitO_{34}$} (v2);
\path (t2prim) edge[bend left=20] node[above] {$\orbitO_{11}$} (s2);
\path (u2prim) edge[bend left=20] node[above] {$\orbitO_{23}$} (t2prim);

\path (s1) edge[bend right=15] node[above] {$=$} (v1);
\path (s2) edge[bend right=15] node[above] {$\orbitN$} (v2);
\path (s3) edge[bend right=15] node[above] {$=$} (v3)
}
\centering
\caption{A substructure of $\structA$ used in the proof of Claim~\ref{claim:partfreenoJonsson}.}
\label{fig:partfreeextension}
\end{figure}

\begin{proof}
Again, we assume on the contrary that there is a chain $(D_1, \ldots,D_n)$ of quasi directed J\'{o}nsson operations preserving $\structB$ and $R$ and  as usual we show that it contradicts the fact that $\Pol(\structB)$ preserves all orbitals. 

In contrast to the proof of Lemma~\ref{lem:ternarynoJonsson}, we use the original version of Definition~\ref{def:UVconstantAB}.

We provide a claim as usual.

\begin{claim}
\label{claim:partfreenoJonsson}
For all $i \in [n]$ we have both of the following.
\begin{itemize}
\item  For all $([3],[2])$-constant pairs of triples $(u,v)$ in $A$ 
satisfying $\orbitA \orbitA \orbitB(u,v)$ we have that $(D_i(u), D_i(v)) \in \subsetA$. 
\item For all $([3],\{2,3\})$-constant pairs of triples $(u,v)$ in $A$ 
satisfying $\orbitA\orbitB\orbitB(u,v)$ we have that  $(D_i(u), D_i (v)) \in \subsetA$. 
\end{itemize}
\end{claim}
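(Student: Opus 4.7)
The proof will proceed by induction on $i \in [n]$, following the template of Claim~\ref{claim:bnbnoJonsson} and Claim~\ref{claim:ternarynoJonsson}. The base case for the first bullet is in fact \emph{identical} to the one in Claim~\ref{claim:bnbnoJonsson}: given a $([3],[2])$-constant pair $(u,v)$ satisfying $\orbitA\orbitA\orbitB(u,v)$, free amalgamation over $\orbitN$ produces an element $u'[3] \in A$ with $(u'[3], v[3]) \in \orbitA$ (exactly as in Figure~\ref{fig:cncbasestep}), so that the pair $u' = (u[1], u[1], u'[3])$ and $v$ satisfies $\orbitA\orbitA\orbitA$. Since $\orbitA \subseteq \subsetA$ is preserved by $D_1$, we get $(D_1(u'), D_1(v)) \in \subsetA$, and (\ref{eq:D1}) gives $D_1(u) = D_1(u')$. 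No feature specific to this lemma is used here.

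The substantive content lies in the base case of the second bullet, where we exploit the configuration already drawn in Figure~\ref{fig:partfreeextension}. Given a $([3],\{2,3\})$-constant pair $(u,v)$ satisfying $\orbitA\orbitB\orbitB(u,v)$, free amalgamation yields triples $s,t \in A^3$ such that $(s,t)$ is a $(\{1,2\},[3])$-constant pair satisfying $\orbitA\orbitA\orbitB$, and two additional elements $t'[2], u'[2]$ so that the three quaternary tuples
\[
\bigl(s[1],t[1],u[1],v[1]\bigr),\quad \bigl(s[2],t'[2],u'[2],v[2]\bigr),\quad \bigl(s[3],t[3],u[3],v[3]\bigr)
\]
are respectively a degenerated $\orbitA$-tuple, the assumed partially-free $(\orbitO_{11}, \orbitO_{12}, \orbitO_{13}, \orbitN, \orbitO_{23}, \orbitO_{24}, \orbitO_{34})$-tuple, and a degenerated $\orbitB$-tuple. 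By the first-bullet case we have $(D_1(s), D_1(t)) \in \subsetA$; writing $t' = (t[1], t'[2], t[3])$, identity (\ref{eq:Di}) applied with $x = t[1], y = t'[2]$ gives $D_1(t') = D_1(t)$, so $(D_1(s), D_1(t')) \in \subsetA$. Preserving $R$ componentwise yields $(D_1(s), D_1(t'), D_1(u'), D_1(v)) \in R$ with $u' = (u[1], u'[2], u[3])$, and since $R$ is a $(\subsetA \rightarrow \subsetA)$-implication we obtain $(D_1(u'), D_1(v)) \in \subsetA$. A final application of (\ref{eq:Di}) gives $D_1(u') = D_1(u)$.

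The induction step mirrors the one in Claim~\ref{claim:bnbnoJonsson}: for any $([3],[2])$-constant pair $(u,v)$ satisfying $\orbitA\orbitA\orbitB$, set $v' = (v[1], v[3], v[3])$; then $(u,v')$ is $([3],\{2,3\})$-constant satisfying $\orbitA\orbitB\orbitB$, so by the induction hypothesis $(D_i(u), D_i(v')) \in \subsetA$, and (\ref{eq:Dii+1}) together with the constancy of $u$ gives $(D_{i+1}(u), D_{i+1}(v)) \in \subsetA$. The second bullet at step $i+1$ is then re-derived by repeating the Figure~\ref{fig:partfreeextension} argument verbatim with $D_{i+1}$ in place of $D_1$.

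The main obstacle, and the reason the hypothesis on $R$ takes this precise form, is realizing the configuration in Figure~\ref{fig:partfreeextension} inside $\structA$. Free amalgamation over $\orbitN$ forces all ``off-diagonal'' pairs in the diagram to lie in $\orbitN$ unless forced otherwise by the displayed edges and by semi-transitivity of equality; in particular the pair $(s[2], v[2])$ must be in $\orbitN$, which is precisely what the partially-free condition $(t[1],t[4]) \in \orbitN$ on the middle column provides. The degeneracy of columns $1$ and $3$ uses the hypothesized $\orbitA$-degenerated and $\orbitB$-degenerated components of $\bipartite_{R,R}$, while anti-reflexivity of $\orbitA$ and $\orbitB$ is what lets $u'[2]$ actually be chosen distinct from $u[2]$ so that (\ref{eq:Di}) can be invoked non-trivially.
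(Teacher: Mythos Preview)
Your proposal is correct and follows the paper's argument essentially line for line: the same reduction of the first bullet to the proof in Claim~\ref{claim:bnbnoJonsson}, the same use of the configuration in Figure~\ref{fig:partfreeextension} together with identity~(\ref{eq:Di}) for the second bullet, and the same induction step via~(\ref{eq:Dii+1}). One small wrinkle worth making explicit (and which the paper glosses over as well): the pair $(s,t)$ you build is $([2],[3])$-constant, not $([3],[2])$-constant, so the first bullet does not literally apply to it; you should instead apply it to $(t,s)$ and then use that $\subsetA$, being a union of orbitals in a symmetric binary core, is itself symmetric.
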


\smallskip
\noindent
\begin{proof}
We prove the claim by the induction on $i \in [n]$. 

\textbf{(BASE CASE)} 
The first bullet is a special case of what we proved for  Claim~\ref{claim:bnbnoJonsson}.
For the second bullet consider a $([3],\{2,3\})$-constant pair of triples $u,v$ in $A$
satisfying $\orbitA \orbitB \orbitB (u,v)$ and a $([3], [2])$-constant pair of triples $s,t$ satisfying $\orbitA \orbitA \orbitB (t,s)$ as in Figure~\ref{fig:partfreeextension}. Observe that all values in 
vectors $s, t$ are already in $u,v$ and hence $s,t$ exist.

We will now show that there exist $t', u'$ identical to  $t,u$ but with a difference that  $t'[2], u'[2]$ are new and such that
\begin{itemize}
\item $a_1 =(s[1], t[1], u[1], v[1])$ is a $(\orbitA, \orbitA, =, =, \orbitA, \orbitA)$-tuple, 
\item $a_2 = (s[2], t'[2], u'[2], v[2])$ is a $(\orbitO_{12}, \orbitO_{13},    \orbitN, \orbitO_{23}, \orbitO_{24}, \orbitO_{34})$-tuple, and
\item $a_3 = (s[3], t[3], u[3], v[3])$ is a $(\orbitB, \orbitB, =,= \orbitB, \orbitB)$-tuple.
\end{itemize}
Indeed, since the age of $\structA$ has free amalgamation there are elements $a,b,c,d,e \in A$ such that $(a,b,c)$ is a $(\orbitA, \orbitN,\orbitB)$-tuple, $(a,d,e,c)$  is a $(\orbitO_{12}, \orbitO_{13},    \orbitN, \orbitO_{23}, \orbitO_{24}, \orbitO_{34})$-tuple and all other orbitals between the elements $a,b,c,d,e$ either follow from semi-transitivity of equality or are $\orbitN$. If $\alpha$ is an  automorphism of $\structA$ sending $(a,b,c)$ to $(s[2], t[2],v[2])$, then we take $\alpha^{-1}(d)$ for $t'[2]$ and $\alpha^{-1}(e)$ for $u'[2]$.

By the first item we have that
$(D_1(s),D_1(t)) \in \subsetA$.
By~(\ref{eq:Di}), it follows that $(D_1(s),D_1(t')) \in \subsetA$
Since $a_i \in R$ for $i \in [3]$
we have $(D_1(u'),D_1(v)) \in \subsetA$.
By~(\ref{eq:Di}), we have that $(D_1(u),D_1(v)) \in \subsetA$, which was to be proved. It completes the proof for the base case.


\smallskip
\noindent
\textbf{(INDUCTION STEP)}
The first bullet we prove exactly like in the proof of 
Claim~\ref{claim:bnbnoJonsson}, and the second bullet as in the base case.
\end{proof} 

In the final step, as usual, we show that under the working assumptions  
there is a $(3,3)$-constant part of triples $(u,v)$ satisfying $\orbitB \orbitB \orbitB$ such that $(D_n(u), D_n(v)) \in \subsetA$. 
We provide the reasoning analogous to reasoning in the proof of the claim with a difference that we use Figure~\ref{fig:partfreefinalstep} instead of Figure~\ref{fig:partfreeextension}. 
\end{proof}

\begin{figure}
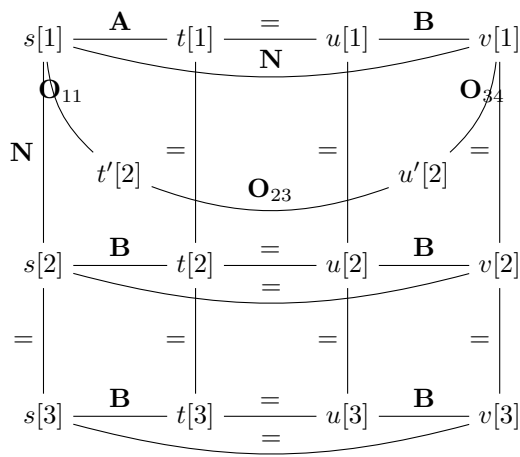

\tikz {
\node (s1) at (0,5) {$s[1]$};
\node (s2) at (0,2) {$s[2]$};
\node (s3) at (0,0) {$s[3]$};
\node (t1) at (2,5) {$t[1]$};
\node (t2) at (2,2) {$t[2]$};
\node (t3) at (2,0) {$t[3]$};
\node (u1) at (4,5) {$u[1]$};
\node (u2) at (4,2) {$u[2]$};
\node (u3) at (4,0) {$u[3]$};
\node (v1) at (6,5) {$v[1]$};
\node (v2) at (6,2) {$v[2]$};
\node (v3) at (6,0) {$v[3]$};
\node (t1prim) at (1,3.2) {$t'[2]$};
\node (u1prim) at (5,3.2) {$u'[2]$};

\draw (s1) -- node[above] {$\orbitA$} (t1);
\draw (s2) -- node[above] {$\orbitB$} (t2);
\draw (s3) -- node[above] {$\orbitB$} (t3);

\draw (t1) -- node[above] {$=$} (u1);
\draw (t2) -- node[above] {$=$} (u2);
\draw (t3) -- node[above] {$=$} (u3);

\draw (u1) -- node[above] {$\orbitB$} (v1);
\draw (u2) -- node[above] {$\orbitB$} (v2);
\draw (u3) -- node[above] {$\orbitB$} (v3);

\draw (s1) -- node[left] {$\orbitN$} (s2);
\draw (s2) -- node[left] {$=$} (s3);

\draw (t1) -- node[left] {$=$} (t2);
\draw (t2) -- node[left] {$=$} (t3);

\draw (u1) -- node[left] {$=$} (u2);
\draw (u2) -- node[left] {$=$} (u3);

\draw (v1) -- node[left] {$=$} (v2);
\draw (v2) -- node[left] {$=$} (v3);

\path (u1prim) edge[bend right=20] node[above] {$\orbitO_{34}$} (v1);
\path (t1prim) edge[bend left=20] node[above] {$\orbitO_{11}$} (s1);
\path (u1prim) edge[bend left=20] node[above] {$\orbitO_{23}$} (t1prim);

\path (s1) edge[bend right=15] node[above] {$\orbitN$} (v1);
\path (s2) edge[bend right=15] node[above] {$=$} (v2);
\path (s3) edge[bend right=15] node[above] {$=$} (v3)
}
\centering
\caption{A substructure of $\structA$ used in the final step of the proof of Lemma~\ref{lem:partfreenoJonsson}}
\label{fig:partfreefinalstep}
\end{figure}

\noindent
We will now conclude Section~\ref{sect:alldegen} by a theorem which follows directly from Proposition~\ref{prop:degenreduce} and Lemmas~\ref{lem:degen_nonconnected},~\ref{lem:ternarynoJonsson} and~\ref{lem:partfreenoJonsson}.

\begin{theorem}
\label{thm:degen}
Let $\structB$ be a first-order expansion of a finitely bounded homogeneous symmetric binary core which pp-defines complementary implications $R_1:\subsetA \rightarrow \subsetB$ and $R_2: \subsetB \rightarrow \subsetA$ with $\subsetA \neq \subsetB$ and such that every non-trivial component of $\bipartite_{R_1, R_2}$ is degenerated. Then $\structB$ is not preserved by any chains of quasi directed J\'{o}nnson operations.  
\end{theorem}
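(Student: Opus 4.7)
The plan is straightforward: apply Proposition~\ref{prop:degenreduce} to split the situation into three cases and then invoke the corresponding non-preservation lemma in each one. Suppose for contradiction that $\structB$ is preserved by a chain of quasi directed J\'{o}nsson operations $(D_1,\ldots,D_n)$. Since $\subsetA \neq \subsetB$ and every non-trivial component of $\bipartite_{R_1, R_2}$ is degenerated, Proposition~\ref{prop:degenreduce} gives a relation $R$ pp-definable in $\structB$ of one of the three forms listed in items~(\ref{degen:nonconnected}), (\ref{degen:essenternary}), (\ref{degen:partfree}) of that proposition. Because $R$ is pp-definable in $\structB$, Theorem~\ref{thm:Galoisconn} guarantees that the chain $(D_1,\ldots,D_n)$ preserves $R$ as well.

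Next I would dispose of each case by plugging $R$ into the appropriate lemma. If $R$ falls under item~(\ref{degen:nonconnected}), i.e.\ $R$ is not connected and contains a non-degenerated tuple, then Lemma~\ref{lem:degen_nonconnected} says $\structB$ is not preserved by any chain of quasi directed J\'{o}nsson operations, contradicting the assumption. If instead $R$ satisfies the conditions of item~(\ref{degen:essenternary}), I apply Lemma~\ref{lem:ternarynoJonsson} to the ternary self-complementary implication $R : \subsetA \rightarrow \subsetA$ whose $\bipartite_{R,R}$ has an $\orbitA$-degenerated and a $\orbitB$-degenerated component with $\orbitA \subseteq \subsetA$, $\orbitB \nsubseteq \subsetA$, together with the required $(\orbitB,\orbitD,\orbitA)$-tuple, again obtaining a contradiction. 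If item~(\ref{degen:partfree}) holds, Lemma~\ref{lem:partfreenoJonsson} directly provides the contradiction from the self-complementary $R:\subsetA \rightarrow \subsetA$ that contains a partially-free tuple and whose $\bipartite_{R,R}$ has the required degenerated components.

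Since the three cases of Proposition~\ref{prop:degenreduce} are exhaustive and each is incompatible with the existence of a chain of quasi directed J\'{o}nsson polymorphisms by the corresponding lemma, the theorem follows. There is no significant obstacle here: all the real work already took place in the proposition (which performed the combinatorial reduction using $\circ$- and $\bowtie$-compositions, reachability arguments, and Observations~\ref{obs:circbipartite}--\ref{obs:bowtietuples}) and in the three lemmas (which carried out the inductive arguments over the chain using equations~(\ref{eq:D1})--(\ref{eq:Dn}) and the free-amalgamation extensions pictured in the figures). The present theorem is only the packaging step that combines them.
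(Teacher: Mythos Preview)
Your proposal is correct and matches the paper's own proof essentially verbatim: the paper states that the theorem ``follows directly from Proposition~\ref{prop:degenreduce} and Lemmas~\ref{lem:degen_nonconnected},~\ref{lem:ternarynoJonsson} and~\ref{lem:partfreenoJonsson}'', which is precisely the three-case split you carry out. Your additional remarks about Theorem~\ref{thm:Galoisconn} and the contradiction framing are harmless elaborations of the same packaging step.
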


\section{Main Result}
\label{sect:mainresult}

We are ready to prove the main theorem of the paper.

\smallskip
\noindent
\textbf{Proof of Theorem~\ref{thm:main}}
The relational clone of $\structB$ is clearly either implicationally uniform or not. In the former case the theorem holds by Theorem~\ref{thm:impuniform} while in the latter $\structB$ pp-defines complementary implications $R_1:\subsetA \rightarrow \subsetB, R_2: \subsetB \rightarrow \subsetA$ such that $\subsetA \neq \subsetB$. Now, either 
$\bipartite_{R_1, R_2}$ contains a non-trivial non-degenerated component or all non-trivial components of this graph are degenerated. In either case, the structure $\structB$ is not preserved by any chain of quasi directed J\'{o}nsson operations by Theorems~\ref{thm:nondegen} or~\ref{thm:degen}.
$\qed$

\bibliographystyle{acm}
\bibliography{mybib}

\appendix

\end{document}